\newcommand{\cone}{\textup{cone}}
\newcommand{\aff}{\textup{aff}}
\newcommand{\lin}{\textup{lin}}
\newcommand{\rank}{\textup{rank}}
\newcommand{\convexhull}{\textup{conv}}
\newcommand{\dom}{\textup{\textup{dom}}}
\newcommand{\lattice}{\textup{\textup{lattice}}}
\newcommand{\zpmset}{\ensuremath{\{0,\pm 1\}}}
\newtheorem{theorem}{Theorem}[section]
\newtheorem{csq}[theorem]{Consequence}
\newtheorem{proposition}[theorem]{Proposition}
\newtheorem{obs}[theorem]{Observation}
\newtheorem{remark}[theorem]{Remark}
\newtheorem{conjecture}[theorem]{Conjecture}
\newtheorem{lemma}[theorem]{Lemma}
\newtheorem{definition}[theorem]{Definition}
\newtheorem{corollary}[theorem]{Corollary}
\newtheorem{open}[theorem]{Open Problem}
\date{\today}
\title{Principally Box-integer Polyhedra and Equimodular Matrices}%\thanks{This work was partially supported by the {\em } of the {\em }}}
\author{Patrick Chervet 
\thanks{Lycée Olympe de Gouges, rue de Montreuil à Claye, 93130, Noisy le Sec, France. 
E-mail: {\tt patrick.chervet@ac-creteil.com}}
\and Roland Grappe 
\thanks{Universit\'e Paris 13, Sorbonne Paris Cit\'e, LIPN, CNRS UMR 7030, F-93430, Villetaneuse, France. 
E-mail: {\tt roland.grappe@lipn.univ-paris13.fr}}
\and Louis-Hadrien Robert
\thanks{Universit\'e de Gen\`eve, rue du Li\`evre 2--4, 1227, Gen\`eve, Switzerland.
E-mail: {\tt louis-hadrien.robert@unige.ch}}
}
\begin{document}
\maketitle

\begin{abstract}
A polyhedron is {\em box-integer} if its intersection with any integer box $\{\ell\leq x \leq u\}$ is integer.
We define {\em principally box-integer} polyhedra to be the polyhedra $P$ such that $kP$ is box-integer whenever $kP$ is integer. We characterize them in several ways, involving equimodular matrices and box-total dual integral (box-TDI) systems. 
A rational $r\times n$ matrix is {\em equimodular} if it has full row rank and its nonzero $r\times r$ determinants all have the same absolute value. A {\em face-defining} matrix is a full row rank matrix describing the affine hull of a face of the polyhedron. Box-TDI systems are systems which yield strong min-max relations, and the underlying polyhedron is called a {\em box-TDI polyhedron}.
Our main result is that the following statements are equivalent.
\begin{itemize}
	\item[\textbullet] The polyhedron $P$ is principally box-integer.
	\item[\textbullet] The polyhedron $P$ is box-TDI.
	\item[\textbullet] Every face-defining matrix of $P$ is equimodular.
	\item[\textbullet] Every face of $P$ has an equimodular face-defining matrix.
	\item[\textbullet] Every face of $P$ has a totally unimodular face-defining matrix.
	\item[\textbullet] For every face $F$ of $P$, $\lin(F)$ has a totally unimodular basis.
\end{itemize}
Along our proof, we show that a cone $\{x:Ax\leq \mathbf{0}\}$ is box-TDI if and only if it is box-integer, and that these properties are passed on to its polar.

We illustrate their use by reviewing well known results about box-TDI polyhedra. We also provide several applications. The first one is a new perspective on the equivalence between two results about binary clutters. Secondly, we refute a conjecture of Ding, Zang, and Zhao about box-perfect graphs. Thirdly, we discuss connections with an abstract class of polyhedra having the Integer Carath\'eodory Property. Finally, we characterize the box-TDIness of the cone of conservative functions of a graph and provide a corresponding box-TDI system.
\end{abstract}

\tableofcontents

%%%%%%%%%%%%%%%%%%%%%%%%%%%%%%%%%%%%%%%%%%%%%%%%%%%%%%%%%%%%%%%%%%%%%%%%%%%%%%%%%%%%%%%%%%%%%%%%%%%%%%%%%%%%%%%%%%%%%%%%%%%%%%%%%%%%%%%%
%%%%%%%%%%%%%%%%%%%%%%%%%%%%%%%%%%%%%%%%%%%%%%%%%%%%%%%%%%%%%%%%%%%%%%%%%%%%%%%%%%%%%%%%%%%%%%%%%%%%%%%%%%%%%%%%%%%%%%%%%%%%%%%%%%%%%%%%

%%%%%%%%%%%%%%%%%%%%%%%%%%%%%%%%%%%%%%%%%%%%%%%%%%%%%%%%%%%%%%%%%%%%%%%%%%%%%%%%%%%%%%%%%%%%%%%%%%%%%%%%%%%%%%%%%%%%%%%%%%%%%%%%%%%%%%%%%%%%%%%%%%%%
%%%%%%%%%%%%%%%%%%%%%%%%%%%%%%%%%%%%%%%%%%%%%%%%%%%%%%%%%%%%%%%%%%%%%%%%%%%%%%%%%%%%%%%%%%%%%%%%%%%%%%%%%%%%%%%%%%%%%%%%%%%%%%%%%%%%%%%%%%%%%%%%%%%%
%%%%%%%%%%%%%%%%%%%%%%%%%%%%%%%%%%%%%%%%%%%%%%%%%%%%%%%%%%%%%%%%%%%%%%%%%%%%%%%%%%%%%%%%%%%%%%%%%%%%%%%%%%%%%%%%%%%%%%%%%%%%%%%%%%%%%%%%%%%%%%%%%%%%
\section{Introduction}\label{s:intro}

In this paper, we introduce an abstract class of polyhedra which have strong integrality properties, and we call them principally box-integer---see Definition~\ref{def:pbi} below.
As we shall see, this class contains important and well studied polyhedra in combinatorial optimization and integer programming, such as those described by a totally unimodular matrix~\cite{hokr}, polymatroids~\cite{ed}, and box-totally dual integral polyhedra~\cite{co}.

We provide several characterizations of principally box-integer polyhedra. In this regard, some matrices play an important role.  They generalize unimodular matrices and we call them equimodular matrices---see Definition~\ref{def:em} below. 
These matrices are studied under the name of matrices with the Dantzig property in~\cite{heto} or as unimodular sets of vectors in~\cite{he}. We show that the notion of principal box-integrality is strongly intertwined with that of equimodularity: equimodular matrices are characterized using principal box-integrality and, in turn, principally box-integer polyhedra are characterized by the equimodularity of a family of matrices.

These notions shed new lights on fundamental results in combinatorial optimization and integer programming. For instance, the classical characterization of unimodular matrices by Veinott and Dantzig~\cite{veda} and that of totally unimodular matrices due to Hoffman and Kruskal~\cite{hokr} can be reformulated and extended using principally box-integer polyhedra. More importantly, these notions bring a geometric and matricial perspective about the so-called box-totally dual integral systems. These systems are useful to prove strong min-max combinatorial theorems and are known to be difficult to handle. We prove that a polyhedron is principally box-integer if and only if it can be described by a box-totally dual integral system. This provides several new characterizations of the latter. We believe that these characterizations fill ``the lack of a proper tool for establishing box-total dual integrality''---to quote Ding, Tan, and Zang~\cite{ditaza}---and we illustrate their use.

\medskip

Before going deeper into the details of our contributions, let us give the main definitions relevant to this paper and review related results from the literature.

A polyhedron $P=\{x:Ax\leq b\}$ of $\mathbb{R}^{n}$ is {\em integer} if each of its faces contains an integer point and {\em box-integer} if $P\cap\{\ell\leq x \leq u\}$ is integer for all $\ell,u\in\mathbb{Z}^n$.
For $k\in\mathbb{Z}_{\scriptscriptstyle>0}$, the $k$th dilation of $P$ is $kP=\{kx: x\in P\}=\{x:Ax\leq kb\}$.

\begin{definition}\label{def:pbi}
A polyhedron $P$ is {\em principally box-integer} if $kP$ is box-integer for all $k\in\mathbb{Z}_{\scriptscriptstyle>0}$ such that $kP$ is integer.
\end{definition}

A full row rank $r\times n$ matrix is {\em unimodular} if it is integer and its nonzero $r\times r$ determinants have value $1$ or $-1$~\cite[Page~267]{sc}. There is a strong connection between principally box-integer polyhedra and the following generalization of unimodular matrices. 

\begin{definition}\label{def:em}
A rational $r\times n$ matrix is {\em equimodular} if it has full row rank and its nonzero $r\times r$ determinants all have the same absolute value.
\end{definition}

\paragraph{Unimodular matrices.}
The notion of unimodularity dates back to Smith \cite{sm} and ensures that a linear system has an integral solution for each integer right-hand side.
%Unimodular matrices are characterized by Veinott and Dantzig in~\cite{veda} by means of the integrality of a family of polyhedra.
Hoffman and Kruskal~\cite{hokr} proved that integral solutions still exist under the weaker condition that (*) the gcd of the $r\times r$ determinants equals $1$. 
Condition (*) and equimodularity are complementary generalizations of unimodularity, in the sense that if an integer matrix is equimodular and satisfies (*), then it is unimodular.
Hoffman and Oppenheim~\cite{hoop} introduced variants of unimodularity, which were afterward studied by Truemper~\cite{tr}.
In~\cite{beha,he}, it is proved that equimodular matrices ensure that all basic solutions are integer, as soon as one of them is---see also Barnett~\cite[Chap.~7]{ba}.

The stronger notion of total unimodularity plays a central role in combinatorial optimization.
A matrix is {\em totally unimodular} when all its subdeterminants have value in $\{0,\pm1\}$. 
Examples of such matrices are network matrices and incidence matrices of bipartite graphs.
Hoffman and Kruskal~\cite{hokr} characterized totally unimodular matrices to be the matrices for which the associated polyhedra are all box-integer. 
Several other characterizations were obtained since then---see {\em e.g.}~\cite{ca} and~\cite{ghho}. 
Totally unimodular matrices are now well understood thanks to the decomposition theorem of Seymour~\cite{se}. 
For a survey of related results, we refer to~\cite[Chap.~4 and~19]{sc}.
More recently, Appa~\cite{ap} and Appa and Kotnyek~\cite{apko} generalized total unimodularity to rational matrices, their goal being to ensure the integrality of the associated family of polyhedra for a specified set of right-hand sides, such as those with only even coordinates.
In another direction, Lee~\cite{le} generalized totally unimodular matrices by considering the associated linear spaces. The connections between his results and the previous ones are discussed in Kotnyek's thesis~\cite[Chap.~11]{ko}.

We will see how principal box-integrality fits within the characterization of unimodular matrices by Veinott and Dantzig~\cite{veda} and that of totally unimodular matrices due to Hoffman and Kruskal~\cite{hokr}. Then, these results are naturally extended to characterize equimodular matrices. 
Also, a new generalization of totally unimodular matrices appears in Section~\ref{ss:charactEM}, the notion of totally equimodular matrices, which still have nice polyhedral properties.

\paragraph{Box-integrality.}
In combinatorial optimization and integer programming, a desirable property for polyhedra is to be integer, as then the vertices can be seen as combinatorial objects. Henceforth, many results in those fields are devoted to the study of properties and descriptions of integer polyhedra. The stronger property of being box-integer is far less studied. Nevertheless, some important classes of polyhedra are known to be box-integer, such as polymatroids~\cite{ed}, and more generally box-totally dual integer polyhedra~\cite{sc}. Box-integrality plays some role for polyhedra to have the Integer Carath\'eodory Property in~\cite{gire}. Binary clutters being $\frac{1}{k}$-box-integer for all $k\in\mathbb{Z}_{\scriptscriptstyle>0}$ are characterized in~\cite{gela}.

Actually, all these examples of box-integer polyhedra are principally box-integer. Our characterizations then yield new insights towards their properties.

\paragraph{Box-total dual integrality.}
Box-total dual integral (box-TDI) systems and polyhedra received a lot of attention from the combinatorial optimization community around the 80s. These systems yield strong combinatorial min-max relations with a geometric interpretation. A renewed interest appeared in the last decade and since then many deep results appeared involving such systems. 
The famous MaxFlow-MinCut theorem of Ford and Fulkerson~\cite{fofu} is a typical example of min-max relation implied by the box-TDIness of a system.
Other examples of fundamental box-TDI systems appear for polymatroids and for systems with a totally unimodular matrix of constraints.

A linear system $Ax\leq b$ is {\em totally dual integral} ({\em TDI\/}) if the maximum in the linear programming duality equation
$\max \{w^\top x:Ax\leq b\} = \min \{b^\top y:A^\top y = w, \, y \geq \mathbf 0\}$
has an integer optimal solution for all integer vectors $w$ for which the optimum is finite. Every polyhedron can be described by a TDI system~\cite[Theorem 22.6]{sc}. Moreover, the right hand side of such a TDI system can be chosen integer if and only if the polyhedron is integer~\cite{edgi}. A linear system $Ax\leq b$ is a {\em box-TDI system} if $Ax\leq b$, $\ell \le x \le u$ is TDI for each pair of rational vectors $\ell$ and $u$. In other words, $Ax\leq b$ is box-TDI if 
\begin{equation}\label{eq:boxTDI}
\min \{b^\top y + u^\top r - \ell^\top s :A^\top y + r - s = w, \, y\ge \mathbf{0},\, r\!,s \geq \mathbf 0\}
\end{equation}
has an integer solution for all integer vectors $w$ and all rational vectors $\ell,u$ for which the optimum is finite. It is well-known that box-TDI systems are TDI~\cite[Theorem~22.7]{sc}.
General properties of such systems can be found in~\cite{co},~\cite[Chap.~5.20]{scbig} and \cite[Chap.~22.4]{sc}.
Though not every polyhedron can be described by a box-TDI system, the result of Cook~\cite{co} below proves that being box-TDI is a property of the polyhedron. 
Consequently, a polyhedron that can be described by a box-TDI system is called a {\em box-TDI polyhedron}.
\begin{theorem}[{Cook~\cite[Corollary~2.5]{co}}]\label{tdiboxtdisystem}
If a system is box-TDI, then any TDI system describing the same polyhedron is also box-TDI.
%Let $Ax\leq b$ be a box-TDI system. If $Mx\leq d$ is a TDI system such that $\{x\in\mathbb{R}^{n}:Mx\leq d\}=\{x\in\mathbb{R}^{n}:Ax\leq b\}$, then the system $Mx\leq d$ is box-TDI.
\end{theorem}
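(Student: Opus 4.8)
The plan is to localize box-total dual integrality at the faces of $P$ by means of the Hilbert-basis characterization of total dual integrality, and then to argue that the resulting face-local condition depends only on $P$ and not on the particular system. Recall the theorem of Giles and Pulleyblank (Schrijver, Thm.~22.5): a rational system $Cx\le d$ describing a polyhedron $Q$ is TDI if and only if, for every face $G$ of $Q$, the rows of $C$ active on $G$ form a \emph{Hilbert basis} of the cone they generate, meaning every integer vector of that cone is a nonnegative integer combination of those rows. I would apply this criterion not to $Ax\le b$ itself, but to each augmented system $Ax\le b,\ \ell\le x\le u$ with $\ell,u$ rational, since box-TDIness of $Ax\le b$ is by definition the total dual integrality of all these systems.

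First I would pin down the local data. Let $G$ be a face of $P\cap\{\ell\le x\le u\}$, let $A_F$ denote the rows of $A$ active on $G$, and let $F:=\{x\in P:A_Fx=b_F\}$ be the corresponding face of $P$. The constraints active on $G$ are exactly $A_F$ together with a set $V\subseteq\{\pm e_1,\dots,\pm e_n\}$ of box vectors (namely $+e_j$ where $x_j=u_j$ and $-e_j$ where $x_j=\ell_j$). Hence box-TDIness of $Ax\le b$ is equivalent to the assertion that $A_F\cup V$ is a Hilbert basis for every face $F$ of $P$ and every box-vector set $V$ arising this way. The crucial observation is that both the collection of realizable pairs $(F,V)$ and the cone $\cone(A_F)$ are intrinsic to $P$: for any complete description of $P$, the rows active on $F$ generate the normal cone of $F$, so $\cone(A_F)$ does not depend on the chosen system.

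The heart of the argument is a replacement property of Hilbert bases, which I would isolate as a short lemma: if $H$ and $H'$ are Hilbert bases of the same cone $C$ and $V$ is any finite set of integer vectors, then $H\cup V$ is a Hilbert basis if and only if $H'\cup V$ is. Indeed $\cone(H\cup V)=\cone(H'\cup V)=\cone(C\cup V)$, and if an integer point of this cone is written as $z=\sum_h\lambda_h h+\sum_v\mu_v v$ with nonnegative integers over $H\cup V$, then $z-\sum_v\mu_v v=\sum_h\lambda_h h$ lies in $C$ and is integer, hence is a nonnegative integer combination of $H'$; substituting back expresses $z$ over $H'\cup V$. This is precisely what makes the face-local condition independent of the generators of $\cone(A_F)$.

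To conclude, let $Ax\le b$ be box-TDI and let $A'x\le b'$ be any TDI system describing $P$. Both are TDI (box-TDI implies TDI, Schrijver Thm.~22.7), so for each face $F$ the active rows $A_F$ and $A'_F$ are two Hilbert bases of the same cone, the normal cone of $F$. For each realizable box-vector set $V$, box-TDIness of $Ax\le b$ gives that $A_F\cup V$ is a Hilbert basis, whence by the replacement lemma $A'_F\cup V$ is one as well; applying the Giles--Pulleyblank criterion to the augmented systems $A'x\le b',\ \ell\le x\le u$ then yields that $A'x\le b'$ is box-TDI. I expect the main obstacle to lie in the bookkeeping of the second paragraph: rigorously identifying the constraints active on a face of $P\cap\{\ell\le x\le u\}$, describing exactly which pairs $(F,V)$ are realizable, and verifying that $\cone(A_F)$ equals the normal cone of $F$ for every valid description --- together with the technical care needed when $P$ is not full-dimensional, where the affine hull must be encoded by opposite inequalities and absorbed into the lineality of the relevant cones.
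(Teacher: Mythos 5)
The paper gives no proof of this statement---it is quoted from Cook and used as a black box---so there is no internal argument to compare against; what can be said is that your reconstruction is correct and is close in spirit to Cook's original localization. Your proof rests on three ingredients, all of which check out. First, the Giles--Pulleyblank criterion applied to the augmented systems, with the observation that the rows of $A$ active on a face $G$ of $P\cap\{\ell\le x\le u\}$ are exactly the rows $A_F$ active on the smallest face $F$ of $P$ containing $G$, and that the remaining active constraints form a set $V$ of signed unit vectors; both the realizable pairs $(F,V)$ and the cone $\cone(A_F)$ (the normal cone of $F$, by LP duality and complementary slackness applied to a point in the relative interior of $F$) are intrinsic to $P$, as you claim. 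Second, the replacement lemma: its proof works precisely because the vectors in $V$ are integer, so that $z-\sum_v\mu_v v$ is an integer point of $\cone(H)$; note that $H$ and $H'$ themselves need not be integer, which matters since the rows of a rational TDI description need not be. Third, both $A_F$ and $A'_F$ are Hilbert bases of the normal cone of $F$ because both systems are TDI. The only caveat worth recording is that Cook derives the corollary from a stronger intermediate result (an intrinsic characterization of box-TDI systems among TDI systems, essentially the ``box property'' of normal cones that the present paper discusses in Section~5.2.2), whereas your route bypasses that characterization and argues directly at the level of Hilbert bases; this makes your proof more elementary but yields only the transfer statement rather than the characterization. No gap.
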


\medskip

Originally, box-TDI systems were closely related to totally unimodular matrices. Indeed, any system with a totally unimodular matrix of constraint is box-TDI. Actually, until recently, the vast majority of known box-TDI systems were defined by a totally unimodular matrix, see~\cite{scbig} for examples.
When the constraint matrix is not totally unimodular, proving that a given system is box-TDI can be quite a challenge: one has to prove its TDIness, and then to deal with the addition of box-contraints that perturb the combinatorial interpretation of the underlying min-max relation. Ding, Feng, and Zang prove in~\cite{difeza} that it is {\em NP}-hard to recognize box-TDI systems.
%Incidentally, Gerards and Seb{\H{o}} prove in~\cite{gese} that the total dual integrality of a system has consequences related to unimodularity for the matrix involved.

Based on an idea of Ding and Zang~\cite{diza}, Chen, Chen, and Zang provide in~\cite{chchza} a sufficient condition for some systems to be box-TDI, namely the ESP property. Thanks to its purely combinatorial nature, the ESP property is successfully used to characterize: box-Mengerian matroid ports in~\cite{chchza}, the box-TDIness of the matching polytope in~\cite{ditaza}, subclasses of box-perfect graphs in~\cite{dizazh}. Prior to the development of the ESP property, the main tool to prove box-TDIness was~\cite[Theorem~22.9]{sc} of Cook. Its pratical application turns out to be quite technical as one has to combine polyhedral and combinatorial considerations, such as in~\cite{chdiza2} where the box-TDIness of a system describing the $2$-edge-connected spanning subgraph polytope on series-parallel graphs is proved.
In~\cite{cogrla}, Cornaz, Grappe, and Lacroix prove that a number of standard systems are box-TDI if and only if the graph is series-parallel. 

\paragraph{Contributions.} 
%This paper introduces a new class of polyhedra called principally box-integer polyhedra---see Definition~\ref{def:pbi}---and is devoted to their study. 
Our results provide a framework within which the notions of equimodularity, principal box-integrality, and box-TDIness are all connected.
The point of view obtained from principally box-integer polyhedra unveils new properties and simplifies the approach.

We now state our main result. A {\em face-defining} matrix for a polyhedron is a full row rank matrix describing the affine hull of a face of the polyhedron---see Section~\ref{sss:facedef} for more details.

\begin{theorem}\label{cor:main}
For a polyhedron $P$, the following statements are equivalent.
\begin{enumerate}%[(1)]
	\item\label{polyiv} The polyhedron $P$ is principally box-integer.
	\item\label{polyi} The polyhedron $P$ is box-TDI.
	\item\label{polyii-} Every face-defining matrix of $P$ is equimodular.
	\item\label{polyii} Every face of $P$ has an equimodular face-defining matrix.
	\item\label{polyiii} Every face of $P$ has a totally unimodular face-defining matrix.
\end{enumerate}
\end{theorem}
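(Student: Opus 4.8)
The plan is to prove the equivalence by establishing a cycle of implications rather than proving each pair directly. A natural cycle is \ref{polyiii} $\Rightarrow$ \ref{polyii} $\Rightarrow$ \ref{polyii-} $\Rightarrow$ \ref{polyiv} $\Rightarrow$ \ref{polyi} $\Rightarrow$ \ref{polyiii}, but some of these arrows are much easier than others. The implication \ref{polyiii} $\Rightarrow$ \ref{polyii} is immediate, since a totally unimodular matrix is in particular equimodular (all its nonzero maximal minors have absolute value $1$). First I would dispatch this trivial direction and then focus on the structural content.

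The real work splits into a local (cone/linear-algebraic) part and a global (polyhedral) part. My intended hinge is the statement quoted at the end of the abstract: that a cone $\{x : Ax \leq \mathbf{0}\}$ is box-TDI if and only if it is box-integer, and that this property transfers to the polar. The plan is to reduce global box-integrality of $P$ to a family of local conditions, one per face. Concretely, box-integrality and box-TDIness are both local properties determined by the behavior of $P$ near each of its faces, so I would show that $P$ is principally box-integer (resp.\ box-TDI) precisely when every \emph{tangent cone} (equivalently, every cone spanned by the active constraints at a face) is box-integer (resp.\ box-TDI). The equivalence \ref{polyii-} $\Leftrightarrow$ \ref{polyii} is a matroid-type argument: all face-defining matrices of a given face differ by an invertible rational change of basis on their row space, and I expect equimodularity to be an invariant of the row space $\lin(F)^\perp$ (or of $\lin(F)$) rather than of the particular matrix, so ``every'' and ``some'' coincide. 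This reduces the substantive content to: for a single cone, box-integrality is equivalent to every face-defining matrix being equimodular, which in turn is equivalent to the existence of a totally unimodular face-defining matrix.

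For the heart of the argument---linking equimodularity of the face-defining matrix to box-integrality of the associated cone---I would proceed as follows. Given a face $F$ with face-defining matrix $M$ (full row rank, describing $\aff(F)$), the relevant local geometry is governed by the lattice of integer points in $\lin(F)$ and by how integer boxes cut $P$ near $F$. I would argue that box-integrality forces the vertices of every box-intersection lying on $F$ to be integer, and that by choosing boxes aligned with the constraints one can read off that the nonzero maximal minors of $M$ must all coincide in absolute value---otherwise a suitable dilation $kP$ is integer but $kP \cap \{\ell \leq x \leq u\}$ produces a fractional vertex. Conversely, equimodularity of every face-defining matrix should guarantee, via a Hoffman--Kruskal-style argument on the dilated system, that intersecting with any integer box keeps vertices integer. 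The totally-unimodular reformulation \ref{polyiii} then comes from the classical fact that a rational subspace admits a totally unimodular basis exactly when it is ``unimodular'' in the sense that an equimodular generating matrix can be scaled/transformed into a totally unimodular one; this is where I would invoke the Veinott--Dantzig circle of ideas mentioned in the introduction.

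The main obstacle I anticipate is the passage between the polyhedron $P$ and its faces' cones in a way that respects \emph{box} constraints rather than arbitrary linear constraints. TDIness and integrality localize well under arbitrary linear duality, but box-TDIness additionally allows the perturbing variables $r,s$ from the box inequalities in \eqref{eq:boxTDI}, and controlling these uniformly across all faces---and in particular showing that box-TDIness of $P$ is inherited by and reconstructed from the box-TDIness of each tangent cone---is the delicate step. This is precisely why the cone/polar lemma advertised in the abstract is doing the heavy lifting: establishing that box-TDI, box-integer, and ``polar is box-TDI/box-integer'' all coincide for cones is what lets me localize, and I expect the proof of that lemma (not reproduced here) to be the technical crux on which the whole equivalence rests.
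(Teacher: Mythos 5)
Your plan follows essentially the same route as the paper: localize to (minimal) tangent cones, use the lemma that a cone $\{x:Ax\leq\mathbf{0}\}$ is box-TDI iff box-integer with the property passing to the polar, tie equimodularity of face-defining matrices to principal box-integrality of the affine hulls via the Veinott--Dantzig/Hoffman--Kruskal circle of ideas, and get the ``every/some/TU'' equivalences from the invariance of equimodularity under left multiplication by an invertible matrix (Heller's theorem). The paper's proof is exactly this architecture (Lemmas~\ref{lem:polyBTDIminconesBTDI} and~\ref{boxtdicone}, Theorems~\ref{thm:EMPBI} and~\ref{pbiEM}, Observation~\ref{Fequimod}), so your proposal is correct in outline, with the acknowledged caveat that the cone/polarity lemma you defer is indeed where the technical work lies.
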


Along our proof, we show that a cone $\{x:Ax\leq \mathbf{0}\}$ is box-TDI if and only if it is box-integer, and that these properties are passed on to its polar. 
We use this to derive a polar version of Theorem~\ref{cor:main}---see Corollary~\ref{cor:pol}.

\medskip

These new results allow to prove the box-TDIness of systems by making full use of Theorem~\ref{tdiboxtdisystem}: find a TDI system describing the polyhedron on the one hand, and, on the other hand, apply one of the characterizations of principally box-integer polyhedra to prove the box-TDIness of the polyhedron. In particular, when a TDI system that describes the polyhedron is already known, our characterizations allow to pick whichever system---TDI or not---describing the polyhedron, and to use algebraic tools to prove the ``box'' part. The drawback of our characterization is that it does not provide a box-TDI system describing the polyhedron. Nevertheless, one of our characterizations gives an easy way to disprove box-TDIness: it is enough to exhibit a face-defining matrix having two maximal nonzero determinants of different absolute values.
In particular, this provides a simple co-NP certificate for the box-TDIness of a polyhedron.%decide whether a given polyhedron is box-TDI.

We show how known results on box-TDI polyhedra are simple consequences of our characterizations---see Section~\ref{s:connections}. We also explain how our results are connected with Schrijver's sufficient condition~\cite[Theorem~5.35]{scbig} and Cook's characterization~\cite{co}, {\cite[Theorem~22.9]{sc}}.

\medskip

We illustrate the use of our characterizations on several examples---see Section~\ref{s:illustrations}. 
First, we explain the equivalence between the main result of Gerards and Laurent~\cite{gela} and that of Chen, Ding, and Zang~\cite{chdiza} about binary clutters.
As a second application, we disprove a conjecture of Ding, Zang, and Zhao~\cite{dizazh} about box-perfect graphs. 
Then, we discuss Gijswijt and Regts~\cite{gire}'s abstract class of polyhedra having the Integer Carath\'eodory Property and possible connections between full box-integrality and the integer decomposition property. 
Finally, we prove that the cone of conservative functions of a graph is box-TDI if and only if the graph is series-parallel and we provide a box-TDI system describing it.

\paragraph {Outline.} 
Section~\ref{s:def} contains standard definitions. 
In Section~\ref{s:mbi}, we study general properties of principally box-integer polyhedra.
Section~\ref{s:pbiem} shows how equimodularity and principal box-integrality are intertwined: each notion is characterized using the other one.
In Section~\ref{s:boxtdi}, we first prove that a polyhedron is box-TDI if and only if it is principally box-integer, and then discuss the connections between our characterizations and existing results about box-TDI polyhedra.
In Section~\ref{s:illustrations}, we illustrate the use of our characterizations on several examples.

%%%%%%%%%%%%%%%%%%%%%%%%%%%%%%%%%%%%%%%%%%%%%%%%%%%%%%%%%%%%%%%%%%%%%%%%%%%%%%%%%%%%%%%%%%%%%%%%%%%%%%%%%%%%%%%%%%%%%%%%%%%%%%%%%%%%%%%%
%%%%%%%%%%%%%%%%%%%%%%%%%%%%%%%%%%%%%%%%%%%%%%%%%%%%%%%%%%%%%%%%%%%%%%%%%%%%%%%%%%%%%%%%%%%%%%%%%%%%%%%%%%%%%%%%%%%%%%%%%%%%%%%%%%%%%%%%
%%%%%%%%%%%%%%%%%%%%%%%%%%%%%%%%%%%%%%%%%%%%%%%%%%%%%%%%%%%%%%%%%%%%%%%%%%%%%%%%%%%%%%%%%%%%%%%%%%%%%%%%%%%%%%%%%%%%%%%%%%%%%%%%%%%%%%%%
%%%%%%%%%%%%%%%%%%%%%%%%%%%%%%%%%%%%%%%%%%%%%%%%%%%%%%%%%%%%%%%%%%%%%%%%%%%%%%%%%%%%%%%%%%%%%%%%%%%%%%%%%%%%%%%%%%%%%%%%%%%%%%%%%%%%%%%%
%%%%%%%%%%%%%%%%%%%%%%%%%%%%%%%%%%%%%%%%%%%%%%%%%%%%%%%%%%%%%%%%%%%%%%%%%%%%%%%%%%%%%%%%%%%%%%%%%%%%%%%%%%%%%%%%%%%%%%%%%%%%%%%%%%%%%%%%
%%%%%%%%%%%%%%%%%%%%%%%%%%%%%%%%%%%%%%%%%%%%%%%%%%%%%%%%%%%%%%%%%%%%%%%%%%%%%%%%%%%%%%%%%%%%%%%%%%%%%%%%%%%%%%%%%%%%%%%%%%%%%%%%%%%%%%%%

\section{Definitions}\label{s:def}

%%%%%%%%%%%%%%%%%%%%%%%%%%%%%%%%%%%%%%%%%%%%%%%%%%%%%%%%%%%%%%%%%%%%%%%%%%%%%%%%%%%%%%%%%%%%%%%%%%%%%%%%%%%%%%%%%%%%%%%%%%%%%%%%%%%%%%%%
%%%%%%%%%%%%%%%%%%%%%%%%%%%%%%%%%%%%%%%%%%%%%%%%%%%%%%%%%%%%%%%%%%%%%%%%%%%%%%%%%%%%%%%%%%%%%%%%%%%%%%%%%%%%%%%%%%%%%%%%%%%%%%%%%%%%%%%%

\paragraph{Matrices.} 
Throughout the paper, all entries will be rational. The $i$th unit vector of $\mathbb{R}^n$ will be denoted by $\chi^i$. For $I\subseteq\{1,\dots,n\}$, let $\chi^I=\sum_{i\in I} \chi^i$. 
An element $A$ of $\mathbb{R}^{m\times n}$ will be thought of as a matrix with $m$ rows and $n$ columns, and an element $b$ of $\mathbb{R}^m$ as a column vector. 
When all their entries belong to $\mathbb{Z}$, we will call them {\em integer}. 
%We will denote by $*$ the entries which can be any rational. 
The row vectors of $A$ will be denoted by $a_i^\top$, the column vectors of $A$ by $A^i$.
%, and $A\setminus A^i$ will denote the matrix obtained from $A$ by deleting the $i$th column. 
%A submatrix of $A$ having $n$ columns is a {\em row-submatrix} of $A$. 
When $\rank(A)=m$, we say that $A$ has {\em full row rank}.
A matrix is {\em totally unimodular}, or {\em TU}, if the determinants of its square submatrices are equal to $-1$, $0$ or $1$.

\paragraph{Lattices.} The {\em lattice} generated by a set~$V$ of vectors of~$\mathbb{R}^n$ is the set of integer combinations of these vectors, and is denoted by $\lattice(V)=\{\sum_{v\in V} \lambda_v v: \lambda_v\in\mathbb{Z}\mbox{ for all $v\in V$}\}$. The lattice generated by the column vectors of a matrix~$A$ is denoted by $\lattice(A)$.

\paragraph{Polyhedra.} 
Given $A\in\mathbb{Q}^{m\times n}$ and $b\in\mathbb{Q}^{m}$, the set $P=\{x\in\mathbb{R}^n: Ax\leq b\}=\{x\in\mathbb{R}^n: a^\top_i x\leq b_i, i=1,\dots,m\}$ is a {\em polyhedron}. 
We will often simply write $P=\{x: Ax\leq b\}$. 
The matrix $A$ is the {\em constraint matrix} of $P$.
The {\em translate of $P$ by $w\in\mathbb{R}^n$} is $P+w=\{x+w: x\in P\}$.

A {\em face} of $P$ is a nonempty set obtained by imposing equality on some inequalities in the description of $P$, that is, a nonempty set of the form $F=\{x: a^\top_i x=b_i, i\in I\}\cap P$ where $I\subseteq \{1,\dots ,m\}$. 
A row $a^\top_i$ or an inequality $a^\top_i x \leq b_i$ with $F\subseteq\{x:a^\top_i x = b_i\}$ is {\em tight for~$F$}, and $A_F x\leq b_F$ will denote the inequalities from $Ax\leq b$ that are tight for $F$. 
The set of points contained in $F$ and in no face $F'\subset F$ forms the {\em relative interior of $F$}.
Let $\lin(F)=\{x:A_F x=\mathbf{0}\}$ and $\aff(F)=\{x:A_F x=b_F\}$. 
The dimension $\dim(F)$ of a face $F$ is the dimension of its affine hull $\aff(F)$.
A {\em facet} is a face that is inclusionwise maximal among all faces distinct from $P$.
A face is {\em minimal} if it contains no other face of $P$. Minimal faces are affine spaces. A minimal face of dimension $0$ is called a {\em vertex}. 
Note that a polyhedron is integer if and if each of its minimal faces contains an integer point.

\paragraph{Cones.} 
A {\em cone} is a polyhedron of the form $t+\{x: Ax\leq \mathbf{0}\}$ for some $t\in\mathbb{R}^n$ and some $A\in\mathbb{R}^{m\times n}$.
When $t=\mathbf{0}$, the cone $C=\{x: Ax\leq \mathbf{0}\}$ can also be described as the set of nonnegative combinations of a set of vectors $R\subseteq \mathbb{R}^n$, and we say that $C=\cone(R)$ is {\em generated by}~$R$. 

The {\em polar cone} of a cone $C=\{x: Ax\leq \mathbf{0}\}$ is the cone $C^*=\{x: z^\top x \leq 0 \text{ for all $z\in C$}\}$. Equivalently, $C^*$ is the cone generated by the columns of $A^\top$. Note that $C^{**}=C$.

Given a face $F$ of a polyhedron $P=\{x: Ax\leq b\}$, the {\em tangent cone associated to $F$} is the cone $C_F=\{x:A_F x\leq b_F\}$. When $F$ is a minimal face of $P$, its associated cone is a {\em minimal tangent cone of $P$}. The cone of $\mathbb{R}^n$ generated by the columns of $A_F^\top$ is the {\em normal cone associated to $F$}. Note that the normal cone associated to $F$ is the polar of $\{x:A_F x\leq \mathbf{0}\}$.

\medskip

For more details, we refer the reader to Schrijver's book~\cite{sc}.

\section{Generalities on Principally Box-Integer Polyhedra}\label{s:mbi}

%%%%%%%%%%%%%%%%%%%%%%%%%%%%%%%%%%%%%%%%%%%%%%%%%%%%%%%%%%%%%%%%%%%%%%%%%%%%%%%%%%%%%%%%%%%%%%%%%%%%%%%%%%%%%%%%%%%%%%%%%%%%%%%%%%%%%%%%

This section is devoted to the basic properties of box-integer and principally box-integer polyhedra. In particular, we study the behavior of these notions with respect to dilation and translation.

\subsection{Box-Integer Polyhedra}\label{ss:PBI}

%%%%%%%%%%%%%%%%%%%%%%%%%%%%%%%%%%%%%%%%%%%%%%%%%%%%%%%%%%%%%%%%%%%%%%%%%%%%%%%%%%%%%%%%%%%%%%%%%%%%%%%%%%%%%%%%%%%%%%%%%%%%%%%%%%%%%%%%

Recall that a polyhedron $P$ is box-integer if $P\cap \{\ell\leq x\leq u\}$ is integer for all $\ell,u\in\mathbb{Z}^n$.
In proofs, the following characterization will often be more practical than the definition.

\begin{lemma}\label{boxint} 
A polyhedron $P$ is box-integer if and only if for each face $F$ of $P$, $I\subseteq \{1,\dots,n\}$, and $p\in\mathbb{Z}^I$  such that $\aff(F)\cap\{x_i=p_i, i\in I\}$ is a singleton $v$, if $v$ belongs to $F$ then $v$ is integer.
\end{lemma}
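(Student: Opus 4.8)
The plan is to prove both implications by translating the ``singleton'' condition into a statement about vertices of box-restrictions of $P$. The two facts I would lean on are standard: a point of a polyhedron is a vertex (a minimal face of dimension $0$) exactly when the inequalities tight at it have rank $n$, and a polytope is integer precisely when all its vertices are integer. Since $\aff(F)=\{x:A_Fx=b_F\}$, the hypothesis that $\aff(F)\cap\{x_i=p_i,\ i\in I\}$ is a singleton is equivalent to saying the matrix formed by the rows of $A_F$ together with the coordinate rows $\chi^i$ $(i\in I)$ has rank $n$; this is the bridge between the two sides of the equivalence.

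For the direction ($\Rightarrow$), I would start from a face $F$, a set $I$, and $p\in\mathbb{Z}^I$ with $\aff(F)\cap\{x_i=p_i,\ i\in I\}=\{v\}$ and $v\in F$, and build an integer box that turns $v$ into a vertex. Concretely, set $\ell_i=u_i=p_i$ for $i\in I$ and $\ell_i=\lfloor v_i\rfloor$, $u_i=\lceil v_i\rceil$ for $i\notin I$; all of these are integers and satisfy $\ell\le v\le u$, so $v\in P\cap\{\ell\le x\le u\}$. The constraints tight at $v$ in this restriction include $A_Fx=b_F$ (as $v\in\aff(F)\cap F$) together with $x_i=p_i$ for $i\in I$; by hypothesis this system has the unique solution $v$, so its coefficient matrix has rank $n$ and $v$ is a vertex of $P\cap\{\ell\le x\le u\}$. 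Box-integrality of $P$ then forces this vertex, hence $v$, to be integer.

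For the direction ($\Leftarrow$), I would fix arbitrary $\ell,u\in\mathbb{Z}^n$ and show $Q=P\cap\{\ell\le x\le u\}$ is integer. As $Q$ lies in a bounded box it is a polytope, so its minimal faces are vertices and it suffices to check each vertex $v$ is integer. Let $F$ be the minimal face of $P$ containing $v$, so that $A_Fx\le b_F$ collects exactly the inequalities of $P$ tight at $v$, and let $I$ index the box-constraints tight at $v$ (those $i$ with $v_i=\ell_i$ or $v_i=u_i$); put $p_i=v_i\in\mathbb{Z}$. Because $v$ is a vertex of $Q$, the constraints tight at it have rank $n$, and all their $P$-rows lie among the rows of $A_F$; hence the system $A_Fx=b_F$, $x_i=p_i$ $(i\in I)$ has solution set contained in that of a rank-$n$ subsystem, so it is exactly $\{v\}$, i.e. $\aff(F)\cap\{x_i=p_i,\ i\in I\}=\{v\}$. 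Since $v\in F$, the hypothesis yields that $v$ is integer.

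Most of this is bookkeeping; the main obstacle is to argue the rank/singleton correspondence cleanly in both directions and to justify the step, used in ($\Leftarrow$), that for $v$ in the relative interior of its minimal face $F$ the inequalities of $P$ tight at $v$ coincide with $A_F$, so that the $P$-part of any vertex-certifying tight system is captured by $A_F$. I would also take care that passing between ``the subset of tight constraints realizing the vertex'' and ``all tight constraints of $F$'' never enlarges the intersection beyond the single point $v$, which is what makes the singleton condition robust under this choice of $F$.
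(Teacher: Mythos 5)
Your proof is correct and follows essentially the same route as the paper's: both directions reduce to the standard facts that $P$ is box-integer iff every vertex of $P\cap\{\ell\le x\le u\}$ is integer, and that $v$ is such a vertex iff the tight constraints at $v$ (rows of $A_F$ for the minimal face $F$ containing $v$, plus tight box constraints $x_i=p_i$) form a rank-$n$ system with unique solution $v$, which is exactly the singleton condition on $\aff(F)\cap\{x_i=p_i,\,i\in I\}$. The paper's proof is just a terser version of the same argument; your extra care about the tight system possibly being a proper subsystem of $A_F$ together with the box rows is a sound (and welcome) elaboration.
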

\begin{proof} Let $P=\{x\in\mathbb R^n: Ax\leq b\}$. By definition, $P$ is box-integer if and only if every vertex of $P\cap \{\ell\leq x\leq u\}$ is integer, for all $\ell,u\in\mathbb{Z}^n$.
A point $v\in\mathbb R^n$ is a vertex of $P\cap \{\ell\leq x\leq u\}$ if and only  if $v$ belongs to $P\cap \{\ell\leq x\leq u\}$ and $v$ is the unique solution of a non singular system $a_jx=b_j,j\in J, x_i=p_i,i\in I$ where $p_i\in\{\ell_i,u_i\}$. 
Note that $F=\{x: a_jx=b_j,j\in J\}\cap P$ is a face of $P$ and that $\aff(F)\cap\{x_i=p_i, i\in I\}$ is nothing but the set of solutions of the latter system.
\end{proof}

Note that, if $I$ is such that the set $\aff(F)\cap\{x_i=p_i, i\in I\}$ is a singleton for some $p\in\mathbb{R}^I$, then this set is either empty or a singleton for all $p\in\mathbb R^I$.
If $I$ is moreover assumed inclusionwise minimal, then $\aff(F)\cap\{x_i=p_i, i\in I\}$ is a singleton for all $p\in\mathbb R^I$.

The following two results seem to be known in the literature, we provide a proof for the sake of completeness.

\begin{corollary}\label{bii} 
If a polyhedron $P$ is box-integer, then $P$ is integer.
\end{corollary}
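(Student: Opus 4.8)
The goal is to show that every face of $P$ contains an integer point. Recall that a polyhedron is integer if and only if each of its minimal faces contains an integer point, and that minimal faces are affine spaces, so for such a face $F$ we have $F=\aff(F)$. The plan is therefore to produce, for each minimal face $F$, an explicit integer point of $F$ by slicing $\aff(F)$ with an integer coordinate box and invoking Lemma~\ref{boxint}.

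Concretely, I would fix a minimal face $F$ and choose $I\subseteq\{1,\dots,n\}$ inclusionwise minimal such that $\aff(F)\cap\{x_i=p_i,\ i\in I\}$ is a singleton for some $p$; such an $I$ exists because $\aff(F)$ is a finite-dimensional affine space (any set of $\dim(F)$ coordinates on which $\lin(F)$ projects injectively works). By the remark following Lemma~\ref{boxint}, this minimality guarantees that $\aff(F)\cap\{x_i=p_i,\ i\in I\}$ is a singleton $v$ for \emph{every} $p\in\mathbb{R}^I$, in particular for every $p\in\mathbb{Z}^I$. Since $F=\aff(F)$, this point $v$ automatically lies in $F$. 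Lemma~\ref{boxint} then applies verbatim and yields that $v$ is integer, so $F$ contains an integer point, as required.

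The step I expect to require the most care is ensuring that the point $v$ produced by the slicing actually belongs to $F$ and not merely to $\aff(F)$: in general, intersecting $\aff(F)$ with a box can produce a point outside $F$, in which case Lemma~\ref{boxint} gives no information. This difficulty disappears precisely because $F$ is taken to be a \emph{minimal} face, whence $F=\aff(F)$ and the containment is automatic. An alternative, more hands-on route would pick integer $\ell,u$ with $\ell\le v^0\le u$ for a point $v^0$ of $F$ and argue that $v$ is a vertex of the polytope $P\cap\{\ell\le x\le u\}$, which is integer by hypothesis; but this forces one to track which box inequalities become tight at $v$, so routing everything through Lemma~\ref{boxint} and the minimality of $I$ is the cleaner path.
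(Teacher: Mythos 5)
Your proof is correct and follows essentially the same route as the paper's: take a minimal face $F$, use an inclusionwise minimal $I$ so that $\aff(F)\cap\{x_i=p_i,\ i\in I\}$ is a singleton for integer $p$, note that $F=\aff(F)$ for a minimal face so the singleton lies in $F$, and conclude by Lemma~\ref{boxint}. The extra justification you give for the existence of such an $I$ (projecting $\lin(F)$ injectively onto $\dim(F)$ coordinates) is sound and only makes explicit what the paper leaves implicit.
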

\begin{proof}
Let $F$ be a minimal face of $P$. There exists an inclusionwise minimal set $I$ as above, hence setting $\{x_i=p_i,i\in I\}$ for some $p\in\mathbb{Z}^I$ yields a singleton in~$\aff(F)$. Since $\aff(F)=F$, this singleton is integer by Lemma~\ref{boxint}, and thus $F$ contains an integer point.
\end{proof}

\begin{corollary}\label{corboxint} 
Let $P$ be a polyhedron of $\mathbb{R}^n$. The following statements are equivalent.
\begin{enumerate}
	\item\label{corboxinti}  $P$ is box-integer.
	\item\label{corboxintii}  $P\cap \{x\geq \ell\}$ is integer for all $\ell\in\mathbb{Z}^n$.
	\item\label{corboxintiii}  $P\cap \{\ell\leq x\leq u\}$ is integer for all $\ell,u\in\mathbb{Z}\cup\{-\infty,+\infty\}^n$.
\end{enumerate}
\end{corollary}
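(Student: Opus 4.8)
The plan is to prove the cycle \ref{corboxinti}~$\Rightarrow$~\ref{corboxintii}~$\Rightarrow$~\ref{corboxintiii}~$\Rightarrow$~\ref{corboxinti}. Two of these are immediate: \ref{corboxintiii}~$\Rightarrow$~\ref{corboxinti} holds because finite integer vectors $\ell,u$ are a special case of vectors in $(\mathbb{Z}\cup\{-\infty,+\infty\})^n$, so the content is entirely in the first two implications. The unifying tool is the fact, recalled just before this corollary, that a polyhedron is integer if and only if each of its minimal faces contains an integer point. I would combine this with the mechanism behind Lemma~\ref{boxint} and Corollary~\ref{bii}: a minimal face $G$ of any polyhedron is an affine space ($G=\aff(G)$), and there is an inclusionwise minimal $I\subseteq\{1,\dots,n\}$ for which $\aff(G)\cap\{x_i=p_i,\,i\in I\}$ is a singleton for every $p\in\mathbb{R}^{I}$; choosing $p\in\mathbb{Z}^{I}$ produces a distinguished point $v\in G$ whose integrality I will control. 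Thus each step reduces to showing that such a pinned singleton is integer.

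For \ref{corboxinti}~$\Rightarrow$~\ref{corboxintii}, fix $\ell\in\mathbb{Z}^n$ and let $G$ be a minimal face of $P':=P\cap\{x\geq\ell\}$. The inequalities of $P'$ tight on $G$ are some rows $A_F$ of $A$ (defining a face $F$ of $P$ with $G\subseteq F$) together with binding lower bounds $x_i=\ell_i$ for $i$ in some set $S$, so $\aff(G)=\aff(F)\cap\{x_i=\ell_i,\,i\in S\}$. Pinning $\aff(G)$ to a singleton with integer values $p\in\mathbb{Z}^{I}$ as above yields a point $v\in G\subseteq F$ which equals $\aff(F)\cap\{x_i=q_i,\,i\in S\cup I\}$ for a suitable integer vector $q$. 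This is exactly the configuration of Lemma~\ref{boxint}, so the box-integrality of $P$ forces $v$ to be integer. Hence $G$ contains an integer point, and $P'$ is integer.

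The crux is \ref{corboxintii}~$\Rightarrow$~\ref{corboxintiii}. Given $\ell,u\in(\mathbb{Z}\cup\{-\infty,+\infty\})^n$, let $G$ be a minimal face of $Q:=P\cap\{\ell\leq x\leq u\}$; infinite bounds are never tight, so again $\aff(G)=\aff(F)\cap\{x_i=c_i,\,i\in S\}$ with $c\in\mathbb{Z}^{S}$ coming from the binding finite lower and upper bounds, and $G\subseteq F$. Pinning to an integer singleton produces $v\in G\subseteq F$ with $v=\aff(F)\cap\{x_i=q_i,\,i\in J\}$ for integer $q$, where the system $A_Fx=b_F,\ x_i=q_i\ (i\in J)$ has rank $n$. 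I then realize $v$ as a genuine vertex of a one-sided truncation $P\cap\{x\geq\ell'\}$: set $\ell'_i=q_i$ for $i\in J$ and $\ell'_i$ equal to any integer $\leq v_i$ otherwise, so that $v\in P\cap\{x\geq\ell'\}$ and the constraints $x_i\geq q_i$ are tight at $v$. The key observation—where the only apparent obstacle, the asymmetry between the two-sided box and the one-sided hypothesis, is resolved—is that a binding lower bound holds with equality at a vertex, so each tight constraint $x_i\geq q_i$ contributes the equation $x_i=q_i$ regardless of whether $q_i$ originally came from a lower or an upper bound of the box. Consequently the binding constraints at $v$ contain the rank-$n$ subsystem above, making $v$ a vertex of $P\cap\{x\geq\ell'\}$, which is integer by \ref{corboxintii}; thus $v$ is integer, $G$ contains an integer point, and $Q$ is integer. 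I expect the only care needed is in checking that infinite bounds can be discarded and that the pinned point genuinely lies in the minimal face $G$ (immediate since $G=\aff(G)$), both of which are routine.
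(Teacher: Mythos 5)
Your proof is correct, but you traverse the equivalence cycle in the opposite direction from the paper and distribute the work differently. The paper proves $\ref{corboxintiii}\Rightarrow\ref{corboxintii}$ trivially, gets $\ref{corboxintii}\Rightarrow\ref{corboxinti}$ in one line from Lemma~\ref{boxint} (the pinned singleton $v$ is a vertex of $P\cap\{x\geq\lfloor v\rfloor\}$), and obtains $\ref{corboxinti}\Rightarrow\ref{corboxintiii}$ by observing that a box truncation of a box-integer polyhedron---even with infinite bounds---is itself box-integer, hence integer by Corollary~\ref{bii}; no minimal-face analysis of the truncated polyhedron is needed there. You instead prove $\ref{corboxinti}\Rightarrow\ref{corboxintii}$ and $\ref{corboxintii}\Rightarrow\ref{corboxintiii}$ by hand, analysing the minimal faces of the truncation in each case; your key observation in the crux step---that a tight bound contributes the equation $x_i=q_i$ regardless of whether it came from a lower or an upper bound, so the pinned point is a vertex of a one-sided truncation $P\cap\{x\geq\ell'\}$---is exactly the paper's one-line argument for $\ref{corboxintii}\Rightarrow\ref{corboxinti}$, just deployed at a different place in the cycle. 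Both arguments are sound; the paper's is shorter because routing $\ref{corboxinti}\Rightarrow\ref{corboxintiii}$ through box-integrality (rather than mere integrality) of the truncation lets Corollary~\ref{bii} absorb the minimal-face bookkeeping that you carry out explicitly.
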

\begin{proof}
Point~\ref{corboxintiii} immediately implies point~\ref{corboxintii}. Point~\ref{corboxintii} implies point~\ref{corboxinti} by Lemma~\ref{boxint}, as if $\aff(F)\cap\{x_i=p_i,i\in I\}$ is a singleton $v\in F$, then $v$ is a vertex of $P\cap\{x\geq \lfloor v\rfloor\}$. Point~\ref{corboxinti} implies point~\ref{corboxintiii} because if $P$ is box-integer, then for all $\ell,u\in\mathbb{Z}\cup\{-\infty,+\infty\}^n$, $P\cap \{\ell \leq x\leq u\}$ is box-integer---and hence integer by Corollary~\ref{bii}.
\end{proof}

The following lemma shows two operations which preserve box-integrality. The second one will be used in Section~\ref{s:boxtdi}.

\begin{lemma}\label{pboxint}
Let $P=\{x\in\mathbb R^n: Ax\leq b\}$ be a polyhedron.
\begin{enumerate}%[(i)]
	\item\label{ptildeint}  $P$ is box-integer if and only if $\widetilde{P} = \{y,z\in\mathbb{R}^n:A(y+z)\leq b\}$ is box-integer.
	\item\label{pmint}  $P$ is box-integer if and only if $P_{\pm} = \{y,z\in\mathbb{R}^n:A(y-z)\leq b,\, y,z\geq \mathbf{0}\}$ is box-integer.
\end{enumerate}
\end{lemma}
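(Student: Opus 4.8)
The plan is to prove both equivalences through the singleton characterization of box-integrality in Lemma~\ref{boxint}, exploiting that $\widetilde P$ and $P_{\pm}$ are, up to the sign constraints, preimages of $P$ under the surjective linear maps $\phi(y,z)=y+z$ and $\psi(y,z)=y-z$. The organizing observation is that a face $\widetilde F$ of the lifted polyhedron corresponds to a face $F$ of $P$ with $\aff(\widetilde F)=\phi^{-1}(\aff(F))$, whose direction space is $\{(d_y,d_z):d_y+d_z\in\lin(F)\}$. Fixing the coordinates $y_j\ (j\in I_y)$ and $z_j\ (j\in I_z)$ to integers cuts $\aff(\widetilde F)$ down to a singleton precisely when $I_y\cup I_z=\{1,\dots,n\}$ (otherwise a free index $j$ yields the nonzero direction with $d_{y,j}=1,\ d_{z,j}=-1$) and, writing $K=I_y\cap I_z$, one has $\lin(F)\cap\{e:e_j=0,\ j\in K\}=\{\mathbf 0\}$; the latter is exactly the singleton condition of Lemma~\ref{boxint} for $F$ and $K$.

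For Part~\ref{ptildeint}, I would first record this face correspondence for $\phi$. For the implication $P$ box-integer $\Rightarrow\widetilde P$ box-integer, take a singleton $(v_y,v_z)\in\widetilde F$ obtained by fixing integer values on $I_y,I_z$. The coordinate analysis forces $I_y\cup I_z=\{1,\dots,n\}$ and identifies $v:=v_y+v_z$ with the singleton of $\aff(F)\cap\{x_j=\text{integer},\ j\in K\}$; since $(v_y,v_z)\in\widetilde F=\phi^{-1}(F)$ gives $v\in F$, box-integrality of $P$ makes $v$ integer. Then $v_{y,j}$ is integer on $I_y$ and $v_{z,j}=v_j-v_{y,j}$ is integer there (symmetrically on $I_z$), so both components are integer because every coordinate lies in $I_y\cup I_z$. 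For the converse, given a singleton $v$ of $\aff(F)\cap\{x_j=q_j,\ j\in K\}$ with $v\in F$, I would take $I_y=\{1,\dots,n\}$, $I_z=K$, $y=\mathbf 0$, $z_K=q_K$; this cuts $\aff(\widetilde F)$ to the singleton $(\mathbf 0,v)\in\widetilde F$, so box-integrality of $\widetilde P$ yields $(\mathbf 0,v)$, hence $v$, integer.

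For Part~\ref{pmint}, the cleanest route reduces to Part~\ref{ptildeint} in one direction and uses an integer shift in the other. Reflecting the last $n$ coordinates by $T(y,z)=(y,-z)$ is a lattice- and box-preserving bijection sending $\widetilde P$ to $R:=\{(y,z):A(y-z)\le b\}$, so by Part~\ref{ptildeint} the polyhedron $R$ is box-integer whenever $P$ is; and $P_{\pm}=R\cap\{y,z\ge\mathbf 0\}$ is box-integer because intersecting a box-integer polyhedron with $\{x\ge\mathbf 0\}$ preserves box-integrality (a box intersected with $\{x\ge\mathbf 0\}$ is again an integer box). This gives $P$ box-integer $\Rightarrow P_{\pm}$ box-integer. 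Conversely, assume $P_{\pm}$ box-integer and fix integer $\ell,u$; set $c=\max(-\ell,\mathbf 0)\in\mathbb Z^n_{\ge 0}$. The integer affine map $x\mapsto(x+c,c)$ is a bijection from $P\cap\{\ell\le x\le u\}$ onto $P_{\pm}\cap\{z=c,\ \ell+c\le y\le u+c\}$, where the nonnegativity constraints hold automatically; the target is $P_{\pm}$ intersected with an integer box, so its vertices are integer, and pulling back via $x=(x+c)-c$ shows every vertex of $P\cap\{\ell\le x\le u\}$ is integer.

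The main obstacle in each case is bookkeeping of the lineality introduced by the substitution: $\phi$ and $\psi$ have an $n$-dimensional kernel, so a lifted face is never a singleton on its own, and one must check that the coordinate-fixing index set covers all of $\{1,\dots,n\}$ so that both the $y$- and $z$-components get pinned to integers once the value $v_y+v_z$ (resp.\ $v_y-v_z$) is known to be integer. For Part~\ref{pmint} the added subtlety is the sign constraints and the possibility of negative coordinates of $v$: in the forward direction these are harmless, since a tight $y_j\ge 0$ simply fixes $y_j$ to the integer $0$, whereas in the backward direction the shift $c$ is precisely what moves the box $\{\ell\le x\le u\}$ into the nonnegative orthant where $P_{\pm}$ lives.
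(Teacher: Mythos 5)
Your proof is correct and follows essentially the same route as the paper's: part~\ref{ptildeint} via the singleton characterization of Lemma~\ref{boxint} together with the face correspondence under $(y,z)\mapsto y+z$ (your condition $I_y\cup I_z=\{1,\dots,n\}$ is exactly the paper's observation that for each $i$ either $y_i$ or $z_i$ must be fixed), and part~\ref{pmint} by reducing the forward direction to part~\ref{ptildeint} via the reflection $z\mapsto -z$ and an orthant intersection. The only cosmetic difference is in the last step, where the paper realizes $P\cap\{\ell\le x\le u\}$ as a projection of $P_\pm\cap\{\ell_+\le y\le u_+,\,u_-\le z\le \ell_-\}$ while you use the affine bijection $x\mapsto(x+c,c)$ with $c=\max(-\ell,\mathbf 0)$; your variant sidesteps the (unstated) surjectivity check that the projection argument requires.
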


\begin{proof}
For the first direction of point~\ref{ptildeint}, if $\widetilde P$ is box-integer, then so is $P=\widetilde{P}\cap \{z=\mathbf{0}\}$. 
For the other direction, we use Lemma~\ref{boxint}.
Let $F$ be a face of $\widetilde P$, of affine space $\aff(F)=\{y,z\in\mathbb R^n:a_j(y+z)=b_j, j\in J\}$ and $p,q$ be integer vectors such that $S=\aff(F)\cap \{y_i=p_i, i\in I_y, z_i=q_i, i\in I_z\}$ is a singleton $(\bar y, \bar z)$ which belongs to $F$.
Let us show that $(\bar y,\bar z)$ is integer. By Lemma~\ref{boxint}, this implies  that $\widetilde P$ is box-integer.

We denote by $G$ the face of $P$ of affine space $\{x\in\mathbb R^n:a_jx=b_j, j\in J\}$. 
Then $\aff(G)\cap \{x_i=p_i+q_i,i\in I_y\cap I_z\}$  is the singleton $\bar x=\bar y+\bar z$. Indeed, if it contained an other point $\bar x'$, we could set $\bar y'_i=p_i,i\in I_y, \bar z'_i=q_i,i\in I_z$ and then build $(\bar y',\bar z')$ in $S$ such that $\bar y'+\bar z'=\bar x'\neq \bar y+\bar z$, a contradiction.
$P$ is box-integer and $\bar y+\bar z$ belongs to $P$, thus $\bar y+\bar z$ is integer by Lemma~\ref{boxint}. 
Since $S$ is a singleton, no $(\bar y+\chi_i, \bar z-\chi_i)$ belongs to $S$, and for all $i$, we have either $y_i=p_i$ or $z_i=q_i$.
By $p$, $q$, and $\bar y +\bar z$ being integer, $(\bar y,\bar z)$ is  integer.

For the first direction of point~\ref{pmint}, if $P$ is box-integer, then so is $P_\pm$ by point~\ref{ptildeint} and because $P_\pm$ is obtained from $\widetilde{P}\cap \{y\geq \mathbf{0}, z\leq \mathbf{0}\}$ by replacing $z$ by~$-z$.
Suppose now that $P_\pm$ is box-integer. For $t\in\mathbb{R}^n$, define $t_+=\max\{\mathbf{0},t\}$ and $t_-=\max\{\mathbf{0},-t\}$. For $\ell,u\in\mathbb{Z}^n$, we have $u=u_+-u_-$, $\ell = \ell_+-\ell-$, and $u_+,u_-,\ell_+,\ell-\geq \mathbf{0}$, hence $P\cap \{\ell\leq x\leq u\}$ is the projection onto $x=y-z$ of $P_\pm\cap \{\ell_+\leq y \leq u_+,-\ell_-\leq -z \leq -u_-\}$. The latter being integer, this implies the integrality of $P\cap \{\ell\leq x\leq u\}$. 

%$P\cap \{\ell\leq x\leq u\}$ is the set of $x$ such that $x=y-z$ for some $(y,z)\in P_\pm\cap \{\ell_+\leq y \leq u_+,-\ell_-\leq -z \leq -u_-\}$. The latter being integer, this implies the integrality of $P\cap \{\ell\leq x\leq u\}$. 

%Let $F$ be a face of $P_\pm$  of affine space $\aff(F)=\{x\in\mathbb R^n:a_jx=b_j, j\in J\}$ and $p$ be an integer vector such that $S=\aff(F)\cap \{x_i=p_i, i\in I\}$ is a singleton $\bar x$ in $F$. 
%Let us show that $\bar x$ is integer. By Lemma~\ref{boxint} this implies  that $P$ is box-integer. 
%
%Let $\bar x^+$ and $\bar x^-$ be respectively obtained from $\bar x$ and $-\bar x$ by setting to zero the negative coordinates. Note that $\bar x=\bar x^+-\bar x^-$.
%Define similarly $p^+$ and $p ^-$, respectively  from $p$ and $-p$.  
%Let $K$ be the set of all indices $k$ such that $\bar x^+_k=0$.
%Since $S$ is the singleton $\bar x$,  the set $\{y,z\in\mathbb R^n:a_j(y-z)=b_j, j\in J,y_k=0,k\in K,z_k=0,k\notin K\}\cap \{y_i=p ^+_i, z_i=p^-_i,i\in I\}$ is the singleton $(\bar x^+,\bar x^-)$.
%Finally, by $(\bar x^+,\bar x^-)\in P_\pm$ and by $P_\pm$ being box-integer, $(\bar x^+,\bar x^-)$ is integer. 
%Hence so is $\bar x=\bar x^+-\bar x^-$.
\end{proof}

\subsection{Dilations of Box-Integer Polyhedra}

%dilation ou dilation ???

In this section we investigate how the box-integrality of a polyhedron behaves with respect to dilation. As a preliminary, the following observation describes the behaviour of integrality with respect to dilation.

\begin{proposition}\label{kbi}
Let $P$ be a polyhedron. There exists $d\in\mathbb{Z}_{\scriptscriptstyle >0}$ such that $\{k\in\mathbb{Z}_{\scriptscriptstyle >0}:\mbox{$kP$ is integer}\}=d\mathbb{Z}_{\scriptscriptstyle >0}$.
\end{proposition}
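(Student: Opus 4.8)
The plan is to reduce the integrality of $kP$ to a lattice-membership condition on each minimal face, and then to observe that the set of admissible dilations is a subgroup of $\mathbb{Z}$. Recall from Section~\ref{s:def} that $P$ is integer if and only if each of its minimal faces contains an integer point, and that minimal faces are affine spaces. Writing $P=\{x:Ax\leq b\}$, we have $kP=\{x:Ax\leq kb\}$, so the minimal faces of $kP$ are exactly the dilations $kF$ of the minimal faces $F$ of $P$ under the bijection $x\mapsto kx$. Concretely, if $\aff(F)=\{x:Mx=c\}$ for a full row rank rational matrix $M$ and rational vector $c$, then $\aff(kF)=\{x:Mx=kc\}$.

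Next, I would fix a minimal face $F$ and characterize when $kF$ contains an integer point. The system $Mx=kc$ has an integer solution if and only if $kc\in\lattice(M)$, since $\lattice(M)=\{Mx:x\in\mathbb{Z}^n\}$. As $M$ has full row rank $r$, its columns span $\mathbb{R}^r$, so $\lattice(M)$ is a lattice of full rank $r$ in $\mathbb{Q}^r$. The set $D_F=\{k\in\mathbb{Z}:kc\in\lattice(M)\}$ is then a subgroup of $\mathbb{Z}$: if $k_1c,k_2c\in\lattice(M)$, then $(k_1-k_2)c\in\lattice(M)$. Hence $D_F=d_F\mathbb{Z}$ for some $d_F\in\mathbb{Z}_{\scriptscriptstyle\geq0}$. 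The crucial point is that $d_F>0$; expressing $c$ in a $\mathbb{Z}$-basis of the full-rank lattice $\lattice(M)$ yields rational coordinates, and clearing their denominators produces a positive $k$ with $kc\in\lattice(M)$.

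Finally, I would combine the finitely many minimal faces $F_1,\dots,F_t$. Since $kP$ is integer if and only if $kF_j$ contains an integer point for every $j$, that is, if and only if $k\in d_{F_j}\mathbb{Z}$ for all $j$, we obtain
\[
\{k\in\mathbb{Z}_{\scriptscriptstyle>0}:kP\text{ is integer}\}=\bigcap_{j=1}^{t} d_{F_j}\mathbb{Z}_{\scriptscriptstyle>0}=d\,\mathbb{Z}_{\scriptscriptstyle>0},
\]
where $d=\mathrm{lcm}(d_{F_1},\dots,d_{F_t})\in\mathbb{Z}_{\scriptscriptstyle>0}$.

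The main obstacle is establishing $d_F>0$, i.e.\ guaranteeing that at least one positive dilation renders each minimal face integer; this rests on $\lattice(M)$ having full rank $r$, which follows from $M$ having full row rank together with the elementary fact that $\mathbb{Z}$-independent rational vectors are $\mathbb{Q}$-independent. The bijection between the minimal faces of $P$ and those of $kP$, and the rationality of the pair $(M,c)$ that makes the lattice argument applicable, are the remaining points to verify, but both are routine for rational polyhedra.
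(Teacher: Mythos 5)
Your proof is correct, but it takes a somewhat different route from the paper's. The paper argues globally on the set $S=\{k\in\mathbb{Z}_{\scriptscriptstyle>0}: kP \text{ is integer}\}$: it shows $S$ is closed under $\gcd$ by taking integer points $x_k$, $x_{k'}$ in the relevant minimal faces of $kP$ and $k'P$ and forming the B\'ezout combination $\lambda x_k+\mu x_{k'}$, which is an integer point of the corresponding minimal face of $\gcd(k,k')P$; combined with the fact that dilations of integer polyhedra are integer, this gives $S=d\mathbb{Z}_{\scriptscriptstyle>0}$ with $d=\min S$. You instead localize to each minimal face, translate ``$kF$ contains an integer point'' into the lattice-membership condition $kc\in\lattice(M)$, observe that $\{k\in\mathbb{Z}: kc\in\lattice(M)\}$ is a subgroup $d_F\mathbb{Z}$ of $\mathbb{Z}$, and take the least common multiple over the finitely many minimal faces. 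The underlying algebra is the same (B\'ezout versus subgroups of $\mathbb{Z}$), but your decomposition buys two things: it makes explicit why the set is nonempty (your $d_F>0$ step, resting on the full rank of $\lattice(M)$ and the rationality of $c$), a point the paper leaves implicit, and it dispenses with the separate verification that dilating an integer polyhedron preserves integrality, since $d_F\mathbb{Z}$ is automatically closed under taking integer multiples. The routine facts you defer---the bijection $F\mapsto kF$ between minimal faces of $P$ and of $kP$, and the rationality of the pair $(M,c)$---are indeed unproblematic for rational polyhedra.
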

\begin{proof} 
When $P$ has vertices, it is enough to choose $d$ as the smallest positive integer $d$ such $dv$ is integer for every vertex $v$ of $P$. To treat the general case, we prove that if $kP$ and $k' P$ are integer polyhedra, then $\gcd(k,k') P$ is an integer polyhedron too. Then, the smallest positive integer $k$ such that $kP$ is integer divides all the others, and as any dilation of an integer polyhedron is an integer polyhedron too, this proves the observation.
 
Let $P=\{x:Ax\leq b\}$, $i=\gcd(k,{k'})$, $\overline k=k/i$, $\overline {k'}={k'}/i$, and $F$ be a minimal face of $iP$. Since $F$ is a minimal face, $F$ is the affine space $F=\{x:A_F x=ib_F\}$. Note that $\overline k F$ and $\overline {k'} F$ are minimal faces, respectively of $kP$ and ${k'} P$, thus contain an integer point, respectively $x_k$ and $x_{k'}$. By Bézout's lemma, there exist $\lambda$ and $\mu$ in $\mathbb{Z}$ such that $\lambda k+\mu {k'}=i$. Then $A_F(\lambda x_k+\mu x_{k'})=ib_F$, hence $F$ contains an integer point. Therefore, $\gcd(k,{k'}) P$ is an integer polyhedron.
\end{proof}

One of the arguments in the previous proof is the fact that the dilations of an integer polyhedron are also integer polyhedra. This does not hold for box-integrality, intuitively because any $0/1$ polytope is box-integer, though its dilations have no reasons to be. 
Actually, an example of box-integer polyhedron having non box-integer dilations will be provided at the end of this section. 
For now we prove the following lemma in order to determine, given a polyhedron~$P$, the structure of the set of positive integers $k$ such that $kP$ is box-integer.

\begin{lemma}\label{dil-notbi} Let $P$ be a polyhedron and $k\in\mathbb Z_{\scriptscriptstyle>0}$ such that $kP$ is integer but not box-integer. Then, no dilation $k' P$ with $k' \geq k$ is box-integer.
\end{lemma}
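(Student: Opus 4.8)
The plan is to argue by contradiction. Suppose some dilation $k'P$ with $k'\geq k$ is box-integer; since box-integer polyhedra are integer (Corollary~\ref{bii}), $k'P$ is then integer. If $k'=k$ this already contradicts the hypothesis, so I may assume $k'>k$. The aim is to manufacture, out of a witness of the non-box-integrality of $kP$, a witness of the non-box-integrality of $k'P$, contradicting the assumption. First I would record the witness: by Lemma~\ref{boxint}, since $kP$ is not box-integer there exist a face $F$ of $kP$, a set $I\subseteq\{1,\dots,n\}$, and $p\in\mathbb Z^I$ such that $\aff(F)\cap\{x_i=p_i,i\in I\}$ is a singleton $v\in F$ that is not integer. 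Writing $J$ for the rows tight for $F$ and $G$ for the face of $P$ cut out by the same rows $J$, one has $F=kG$, so that $\aff(F)=\aff(kG)$ and $\lin(kG)=\lin(k'G)=\{x:a_j^\top x=\mathbf 0,\ j\in J\}$, since dilation changes neither which inequalities are tight nor the associated linear space.

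The heart of the argument is the following construction: for any $g\in G$ I would set $v'=v+(k'-k)g$. Using $a_j^\top v=kb_j$ (as $v\in kG$) and $a_j^\top g=b_j$ (as $g\in G$) for $j\in J$, the point $v'$ satisfies $a_j^\top v'=k'b_j$ for all $j\in J$, so $v'\in\aff(k'G)$; and for the remaining rows $a_i^\top v'=a_i^\top v+(k'-k)a_i^\top g\leq kb_i+(k'-k)b_i=k'b_i$, because $a_i^\top v\leq kb_i$, $a_i^\top g\leq b_i$, and $k'-k>0$. Hence $v'\in k'G\subseteq k'P$ for \emph{every} choice of $g\in G$. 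Moreover, since $\lin(k'G)=\lin(kG)$ meets $\{x_i=0,i\in I\}$ only at $\mathbf 0$, the set $\aff(k'G)\cap\{x_i=v'_i,i\in I\}$ is at most a point; as it contains $v'$, it equals $\{v'\}$. Thus $v'$ is a legitimate candidate witness for $k'P$ provided I can arrange both $v'_i\in\mathbb Z$ for $i\in I$ and $v'\notin\mathbb Z^n$.

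The main obstacle, and the only delicate point, is the choice of $g$. The naive choice of an integer point of $G$ fails, because $P$ need not be integer and so $G$ may contain no integer point. The key observation is that I only need $(k'-k)g$ to be integer, and here the dilation period enters: by Proposition~\ref{kbi} the set of integers $\kappa$ with $\kappa P$ integer equals $d\mathbb Z_{\scriptscriptstyle>0}$, and since both $kP$ and $k'P$ are integer one gets $d\mid k$ and $d\mid k'$, hence $d\mid (k'-k)$ and $(k'-k)P$ is integer. Its face $(k'-k)G$ (cut out by the same rows $J$) therefore contains an integer point $m$, that is, $m=(k'-k)g$ for some $g\in G$ with $m\in\mathbb Z^n$.

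With this $g$ one has $v'=v+m$, so $v'_i=p_i+m_i\in\mathbb Z$ for $i\in I$, while $v'$ is non-integer because $v$ is non-integer and $m$ is integer. Together with $v'\in k'G$ and the singleton property established above, Lemma~\ref{boxint} yields that $k'P$ is not box-integer, the desired contradiction. I expect the entire difficulty to be concentrated in identifying that the relevant integrality one needs is that of $(k'-k)P$ rather than of $P$ itself; once the period argument supplies the integer translate $m$, the conic inequality $a_i^\top v'\leq k'b_i$ and the invariance of $\lin$ under dilation make the rest routine.
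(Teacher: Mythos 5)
Your proof is correct and follows essentially the same strategy as the paper's: extract a non-integer singleton witness from Lemma~\ref{boxint}, use the periodicity from Proposition~\ref{kbi} to produce an integer translation vector lying in the appropriate dilation of the face (your integer point $m\in(k'-k)G$ is exactly the paper's $z'-z$ with $z\in F$ integer and $z'=\tfrac{k'}{k}z$), and transport the witness to $k'P$. The only cosmetic difference is that you verify $v'\in k'G$ by a direct inequality computation where the paper uses the containment $F-z\subseteq F'-z'$; both are sound.
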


\begin{proof}  
Let $k'\geq k$. Assume $k'P$ integer, as otherwise $k'P$ would not be box-integer.
By Lemma~\ref{boxint}, there exist a face $F$ of $kP$ and an integer vector $p$ such that $\aff(F)\cap \{x_i=p_i, i\in I\}$ is a noninteger singleton $v\in F$. 
By Proposition~\ref{kbi}, $kP$ and $k'P$ are both dilations of an integer polyhedron $dP$. 
In particular, there exists an integer point $z$ in $F$ such that $z'=\frac{k'}{k}z$ is an integer point contained in the face $F'=\frac{k'}{k} F$ of $k'P$.
Since $k'\geq k$, we have $F-z \subseteq F'-z'$, thus $v'=(z'-z)+v$ is in~$F'$. 
Moreover, $\aff(F')\cap \{x_i=(z'_i-z_i)+p_i, i\in I\}$ is the singleton $v'$ of $F'$, which is not integer, hence $k'P$ is not box-integer by Lemma~\ref{boxint}.
\end{proof}

A polyhedron $P$ is {\em fully box-integer} if $kP$ is box-integer for all $k\in\mathbb{Z}_{\scriptscriptstyle >0}$. In other words, $P$ is fully box-integer if and only if $P$ is principally box-integer and integer.

\begin{proposition}\label{pbi-defs} For a polyhedron $P$, the following statements are equivalent.
\begin{enumerate}
\item\label{pbi-def} $P$ is principally box-integer.
\item\label{pbi-ideal} There exists $d\in\mathbb{Z}_{\scriptscriptstyle >0}$ such that $\{k\in\mathbb{Z}_{\scriptscriptstyle >0}:\mbox{$kP$ is box-integer}\}=d\mathbb Z_{\scriptscriptstyle>0}$.
\item \label{pbi-fbi} $P$ has a fully box-integer dilation.
\end{enumerate}
\end{proposition}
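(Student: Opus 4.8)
The plan is to introduce the two sets $B=\{k\in\mathbb{Z}_{\scriptscriptstyle>0}: kP\text{ is box-integer}\}$ and $I=\{k\in\mathbb{Z}_{\scriptscriptstyle>0}: kP\text{ is integer}\}$, and to argue cyclically $\ref{pbi-def}\Rightarrow\ref{pbi-ideal}\Rightarrow\ref{pbi-fbi}\Rightarrow\ref{pbi-def}$. The observation I would record at the outset is that box-integrality implies integrality by Corollary~\ref{bii}, so $B\subseteq I$. Hence $P$ is principally box-integer precisely when $I\subseteq B$, that is, when $B=I$; making this reformulation explicit immediately drives the first implication.

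For $\ref{pbi-def}\Rightarrow\ref{pbi-ideal}$, if $P$ is principally box-integer then $B=I$, and Proposition~\ref{kbi} supplies $d\in\mathbb{Z}_{\scriptscriptstyle>0}$ with $I=d\mathbb{Z}_{\scriptscriptstyle>0}$; substituting gives $B=d\mathbb{Z}_{\scriptscriptstyle>0}$, which is statement~\ref{pbi-ideal}. For $\ref{pbi-ideal}\Rightarrow\ref{pbi-fbi}$, from $B=d\mathbb{Z}_{\scriptscriptstyle>0}$ I would read off both that $d\in B$, so $dP$ is box-integer, and that every multiple $kd$ lies in $B$, so $k(dP)=(kd)P$ is box-integer for all $k\in\mathbb{Z}_{\scriptscriptstyle>0}$; thus $dP$ is a fully box-integer dilation. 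Both of these amount to bookkeeping on top of Corollary~\ref{bii} and Proposition~\ref{kbi}.

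The implication $\ref{pbi-fbi}\Rightarrow\ref{pbi-def}$ carries the only real content, and Lemma~\ref{dil-notbi} is the tool. Suppose some $mP$ is fully box-integer, so that $m\mathbb{Z}_{\scriptscriptstyle>0}\subseteq B$, and suppose for contradiction that $P$ is not principally box-integer, i.e. there is some $k$ with $kP$ integer but not box-integer. Lemma~\ref{dil-notbi} then forbids every dilation $k'P$ with $k'\geq k$ from being box-integer. But $m\mathbb{Z}_{\scriptscriptstyle>0}$ is unbounded, so it contains a multiple of $m$ that is at least $k$, and that dilation is box-integer by hypothesis, contradicting Lemma~\ref{dil-notbi}. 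The crux, and the step I expect to require the most care, is exactly this collision: full box-integrality of a single dilation furnishes arbitrarily large box-integer dilations, while one integer-but-not-box-integer dilation would, via Lemma~\ref{dil-notbi}, poison all sufficiently large ones. The other two implications are straightforward once the reformulation $B=I$ is in hand.
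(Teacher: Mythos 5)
Your proof is correct and follows essentially the same route as the paper's: Proposition~\ref{kbi} yields \eqref{pbi-def}$\Rightarrow$\eqref{pbi-ideal}, the dilation $dP$ witnesses \eqref{pbi-ideal}$\Rightarrow$\eqref{pbi-fbi}, and Lemma~\ref{dil-notbi} rules out a fully box-integer dilation coexisting with an integer-but-not-box-integer one for \eqref{pbi-fbi}$\Rightarrow$\eqref{pbi-def}. Your explicit reformulation via the sets $B$ and $I$ and the unboundedness argument merely spell out what the paper leaves implicit.
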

\begin{proof}  
The definition of principal box-integrality and Proposition~\ref{kbi} give \eqref{pbi-def}$\Rightarrow$\eqref{pbi-ideal}. To get \eqref{pbi-ideal}$\Rightarrow$\eqref{pbi-fbi}, just note that $dP$ is a fully box-integer polyhedron. To prove \eqref{pbi-fbi}$\Rightarrow$\eqref{pbi-def}, suppose that $P$ is not principally box-integer, that is, there exists a positive integer $k$ such that $kP$ is integer but not box-integer. By Lemma~\ref{dil-notbi}, this is not compatible with the existence of a fully box-integer dilation of $P$.
\end{proof}

We mention that relaxing $k\in\mathbb{Z}_{\scriptscriptstyle >0}$ to $k\in\mathbb{Z}$ in Definition~\ref{def:pbi} yields an equivalent definition. Then, the set arising in point~\ref{pbi-ideal} of Proposition~\ref{pbi-defs} is $d\mathbb{Z}$, which is a principal ideal of $\mathbb{Z}$. This explains why we called these polyhedra principally box-integer. 
The next proposition shows what can happen when a polyhedron is not principally box-integer.

\begin{proposition}\label{3cases} For a polyhedron $P$, exactly one of the following situations holds.
\begin{enumerate}[i.]
\item\label{3cases1} $P$ is principally box-integer.
\item\label{3cases2} No dilation of $P$ is a box-integer polyhedron.
\item\label{3cases3} There exist $d,q\in\mathbb Z_{\scriptscriptstyle >0}$ such that $kP$ is box-integer if and only if $k\in\{d,2d,\dots,qd\}$. 
\end{enumerate}
\end{proposition}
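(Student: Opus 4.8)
The plan is to encode the trichotomy entirely in the structure of the set
$S=\{k\in\mathbb{Z}_{\scriptscriptstyle>0}: kP\text{ is box-integer}\}$. First I would observe that $S$ sits inside the set of integral dilations: since box-integrality implies integrality (Corollary~\ref{bii}) and the integral dilations form $d\mathbb{Z}_{\scriptscriptstyle>0}$ for the $d$ provided by Proposition~\ref{kbi}, every element of $S$ is a positive multiple of $d$. Writing $S=dM$ with $M=\{m\in\mathbb{Z}_{\scriptscriptstyle>0}: mdP\text{ is box-integer}\}$ reduces the entire statement to determining which subsets $M\subseteq\mathbb{Z}_{\scriptscriptstyle>0}$ can occur.

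The key step is to show that $M$ is downward closed, that is, that $m\in M$ and $1\le j\le m$ force $j\in M$. This is exactly where Lemma~\ref{dil-notbi} does the work: if $j\notin M$, then $jdP$ is integer (being an integral dilation) but not box-integer, so the lemma forbids any dilation $k'P$ with $k'\ge jd$ from being box-integer; taking $k'=md\ge jd$ then contradicts $m\in M$. Hence $M$ can contain no ``gap''.

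It then remains to recall that a downward-closed subset of $\mathbb{Z}_{\scriptscriptstyle>0}$ is either empty, a finite initial segment $\{1,\dots,q\}$, or all of $\mathbb{Z}_{\scriptscriptstyle>0}$, and to translate each possibility back to $P$. The empty case gives $S=\emptyset$, which is situation~\ref{3cases2}. A finite initial segment $\{1,\dots,q\}$ gives $S=\{d,2d,\dots,qd\}$ with $d,q\in\mathbb{Z}_{\scriptscriptstyle>0}$, which is situation~\ref{3cases3}. The full case gives $S=d\mathbb{Z}_{\scriptscriptstyle>0}$, and since this means that every integral dilation of $P$ is box-integer, $P$ is principally box-integer by Definition~\ref{def:pbi}, namely situation~\ref{3cases1} (one may alternatively invoke Proposition~\ref{pbi-defs}). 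As the three resulting forms of $S$---empty, nonempty finite, and infinite---are pairwise disjoint, exactly one situation holds.

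The argument involves essentially no computation; the only genuine content is the downward-closedness of $M$, so the main obstacle is simply recognizing that Lemma~\ref{dil-notbi} is precisely the assertion that $M$ has no gaps. Everything else is the elementary classification of downward-closed subsets of $\mathbb{Z}_{\scriptscriptstyle>0}$ together with the bookkeeping relating $S$, $M$, and the three displayed situations.
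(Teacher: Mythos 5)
Your proof is correct and takes essentially the same route as the paper's: both rest on Proposition~\ref{kbi} to identify the integral dilations with $d\mathbb{Z}_{\scriptscriptstyle>0}$ and on Lemma~\ref{dil-notbi} to rule out gaps among the box-integer dilations. Your repackaging via the downward-closed set $M$ is just a cleaner organization of the paper's case analysis.
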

% 2 est le cas particulier de 3 où i=0 ?

\begin{proof}  
If $P$ has a  box-integer dilation but is not principally box-integer, then there is a smallest $q$ in $\mathbb{Z}_{\scriptscriptstyle >0}$ such that $(q+1) P$ is a polyhedron which is  integer but not box-integer.
By Lemma~\ref{dil-notbi}, no $kP$ with $k>q$ is box-integer.
Now, if $d$ is chosen as in Proposition~\ref{kbi}, the minimality of $q$ gives $\{k\in\mathbb{Z}_{\scriptscriptstyle >0}:\mbox{$kP$ is box-integer}\}=\{d,2d,\dots,qd\}$.
\end{proof}

Note that the following property, which holds for integrality, also holds for box-integrality: if $kP$ and $k' P$ are box-integer polyhedra, then so is $gcd(k,k')P$.

\begin{remark} 
Though we only considered dilations with positive integer coefficients, all these results can readily be adapted to dilations with rational coefficients.
\end{remark}

We conclude this section with an example of polyhedron whose box-integrality is not preserved by dilation.
As $P=\convexhull\left(\mathbf{0}, (1,1,0,0,0), (1,0,1,0,0), (1,0,0,1,0), (1,1,1,1,1)\right)$ is a $0/1$ polytope, it is box-integer. However, it can be checked that $(2, 1, 1, 1, 1/2)$ is a fractional vertex of $2P\cap\{x_2=x_3=x_4=1\}$. In particular, $P$ illustrates point~\ref{3cases3} of Proposition~\ref{3cases}.

%Dessins ?
% exemple du cas 1, exemple de pbi et fbi
% intuitions du lemme 2 
% peut-on donner une intuition de pourquoi la dilation d'un box-entier ne l'est pas forcément

\subsection{Translations of Principally Box-Integer Polyhedra}

Box-integrality is clearly preserved by integer translation. So are principal and full box-integrality.

\begin{obs} \label{obs:intrans}
Box-integrality, principal box-integrality and full box-integrality are all preserved by integer translation.
\end{obs}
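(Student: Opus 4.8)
The plan is to reduce the three claims to a single translation lemma for ordinary box-integrality, and then exploit the fact that integer dilation and integer translation commute cleanly. So first I would settle the base case: if $P$ is box-integer and $w\in\mathbb{Z}^n$, then $P+w$ is box-integer. The key identity is
$$(P+w)\cap\{\ell\leq x\leq u\}=\bigl(P\cap\{\ell-w\leq x\leq u-w\}\bigr)+w$$
for all $\ell,u\in\mathbb{Z}^n$, obtained via the change of variable $y=x-w$. Since $w,\ell,u$ are integer, the vectors $\ell-w$ and $u-w$ are integer, so box-integrality of $P$ makes the parenthesized polyhedron on the right integer; translating an integer polyhedron by the integer vector $w$ keeps it integer. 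As this holds for every integer box, $P+w$ is box-integer, and replacing $w$ by $-w$ gives the converse. Hence box-integrality is invariant under integer translation.

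Next I would record the commutation fact needed for the dilation-based notions: for $k\in\mathbb{Z}_{\scriptscriptstyle>0}$ and $w\in\mathbb{Z}^n$ we have $k(P+w)=kP+kw$ with $kw\in\mathbb{Z}^n$. Because integrality is plainly preserved by integer translation and box-integrality is preserved by the previous paragraph, it follows that $k(P+w)$ is integer (respectively box-integer) if and only if $kP$ is integer (respectively box-integer).

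With these equivalences the remaining cases are immediate. For principal box-integrality: whenever $k(P+w)$ is integer, equivalently $kP$ is integer, the hypothesis that $P$ is principally box-integer gives that $kP$ is box-integer, hence $k(P+w)$ is box-integer; thus $P+w$ is principally box-integer. For full box-integrality one can argue directly---every $kP$ box-integer forces every $k(P+w)$ box-integer by the equivalence---or simply invoke that full box-integrality is the conjunction of integrality and principal box-integrality, both now known to be translation-invariant.

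The argument presents no real obstacle; the only point requiring attention is that integrality of the right-hand polyhedron must survive the translation, which is precisely why $w$ (and, in the dilation step, $kw$) is required to be integer.
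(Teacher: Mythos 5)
Your proof is correct and follows essentially the same route as the paper: the identity $(P+w)\cap\{\ell\leq x\leq u\}=\bigl(P\cap\{\ell-w\leq x\leq u-w\}\bigr)+w$ for the base case, and the commutation $k(P+w)=kP+kw$ with $kw\in\mathbb{Z}^n$ for the dilation-based notions. You simply spell out more explicitly the step that integrality itself is translation-invariant, which the paper leaves implicit.
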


\begin{proof}
The translate $Q=t+P$ of a box-integer polyhedron $P$ by $t$ in $\mathbb{Z}^n$ is also box-integer because $Q\cap\{\ell\leq x\leq u\}=t+(P\cap\{\ell-t\leq x \leq u-t\})$ for all $\ell,u\in\mathbb{Z}^n$. Moreover, since $kQ=kt+kP$ and $kt\in\mathbb{Z}^n$ for all $k\in\mathbb Z_{\scriptscriptstyle >0}$, principal box-integrality and full box-integrality are also preserved by integer translation.
\end{proof}

Cones will play an important role in the next sections. One of the reasons is that, up to translation, every dilation of a cone is the  cone itself. Since box-integrality is preserved by integer translation, this has the following consequences.

\begin{obs}\label{biconesprop} 
Let $C=\{x:Ax\leq \mathbf{0}\}$ be a cone of $\mathbb{R}^n$ and $D=t+C$ for some $t\in\mathbb{Q}^n$.
\begin{enumerate}
	\item\label{conesi} For $C$, the three properties of being box-integer, fully box-integer, or principally box-integer are equivalent.
%	\item\label{conesii} If $t\in \mathbb{Z}^n$, then $D$ is box-integer if and only if $C$ is box-integer.
	\item\label{conesiii} $D$ is fully box-integer if and only if it is box-integer.
	\item\label{conesiv} $D$ is principally box-integer if and only if $C$ is box-integer.
\end{enumerate}
\end{obs}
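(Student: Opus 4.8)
The plan is to prove the three statements of Observation~\ref{biconesprop} by exploiting the key geometric fact that a cone is invariant under dilation up to translation, combined with Observation~\ref{obs:intrans}. First I would record the fundamental identity: for a cone $C=\{x:Ax\leq\mathbf 0\}$ and any $k\in\mathbb Z_{\scriptscriptstyle>0}$, we have $kC=\{x:Ax\leq k\mathbf 0\}=\{x:Ax\leq\mathbf 0\}=C$, so every cone equals all of its dilations. Consequently, for the translated cone $D=t+C$, the dilation $kD=kt+kC=kt+C$ is just a translate of $C$ itself.

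For point~\ref{conesi}, I would argue as follows. Since $C$ is itself integer (it contains $\mathbf 0$), the notions interact cleanly. If $C$ is box-integer, then because $kC=C$ for every $k\in\mathbb Z_{\scriptscriptstyle>0}$, every dilation $kC=C$ is box-integer, so $C$ is fully box-integer by definition. Full box-integrality trivially implies principal box-integrality (the former asks box-integrality of all dilations, the latter only of the integer ones). Finally, principal box-integrality implies box-integrality of $C$ itself: taking $k=1$, since $1\cdot C=C$ is integer (it contains $\mathbf 0$, and more precisely $C$ is an integer polyhedron as a cone with apex at the origin described by a rational matrix), the definition forces $C=1\cdot C$ to be box-integer. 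This closes the cycle and gives the equivalence of all three properties for $C$.

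For points~\ref{conesiii} and~\ref{conesiv}, the engine is Observation~\ref{obs:intrans}, but there is a subtlety: $t$ is only assumed rational, not integer, so I cannot directly transfer box-integrality between $C$ and $D$ by integer translation. The clean workaround is to pass to dilations. For point~\ref{conesiii}, suppose $D$ is box-integer; I want $kD$ box-integer for all $k$. Choose $k$; then $kD=kt+C$ and $D=t+C$ differ by the translation $(k-1)t$, which need not be integer, so I instead compare $kD$ with $C$: both are translates of $C$. The right approach is to first establish that $D$ box-integer forces $C$ box-integer, then use point~\ref{conesi}. To get from $D$ to $C$: if $D=t+C$ is box-integer, I would use that box-integrality is a property detectable on the faces via Lemma~\ref{boxint}, and the faces of $D$ are translates by $t$ of the faces of $C$ having parallel affine hulls with the same lineality $\lin(F)=\{x:A_Fx=\mathbf 0\}$; the singleton-intersection condition of Lemma~\ref{boxint} concerns only the lineality directions $A_F$ and the integrality of the resulting point. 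The main obstacle is precisely handling the non-integer translation $t$: I expect the correct statement to be that box-integrality of $D$ does not depend on $t$ at all, because Lemma~\ref{boxint}'s condition is governed by $\lin(F)$, which is identical for $C$ and every translate of $C$.

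Concretely, I would prove that $D=t+C$ is box-integer if and only if $C$ is box-integer by showing that the crucial singleton-integrality test of Lemma~\ref{boxint} transfers. A face of $D$ has affine hull $t+\lin(F)$ for a corresponding lineality space $\lin(F)$ of $C$; an intersection $\aff(F_D)\cap\{x_i=p_i,i\in I\}$ is a singleton exactly when $\lin(F)\cap\{x_i=0,i\in I\}=\{\mathbf 0\}$, a condition independent of $t$, and whether the resulting singleton is integer is what box-integrality controls. Since for a cone $C$ with apex $\mathbf 0$ the minimal faces are the lineality space $\lin(C)$ and its shifts are automatically handled, the integrality behaviour is identical for $C$ and $t+C$ whenever these singleton tests are forced to land on integer points. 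Granting this, point~\ref{conesiii} follows: $D$ box-integer $\iff$ $C$ box-integer $\iff$ $C$ fully box-integer (by point~\ref{conesi}) $\iff$ every $kD=kt+C$ is box-integer (same lineality argument) $\iff$ $D$ fully box-integer. Point~\ref{conesiv} follows similarly: $D$ principally box-integer means $kD$ box-integer whenever $kD$ is integer; since $kD=kt+C$ is box-integer iff $C$ is, and some dilation $kD$ is always integer (choose $k$ clearing denominators of $t$, using Proposition~\ref{kbi}), this is equivalent to $C$ being box-integer. The heart of the whole argument, and the step demanding the most care, is verifying that box-integrality is translation-invariant even under rational translation via the lineality-space reformulation of Lemma~\ref{boxint}.
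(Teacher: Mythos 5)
Your treatment of point~\ref{conesi} is correct and matches the paper ($kC=C$ for all $k$, and $C$ is integer because its unique minimal face $\lin(C)$ contains $\mathbf 0$). But points~\ref{conesiii} and~\ref{conesiv} rest on a claim that is false: that box-integrality of $D=t+C$ ``does not depend on $t$ at all.'' Take $C=\{x\in\mathbb{R}:x\geq 0\}$ and $t=\tfrac12$. Then $C$ is box-integer, but $D=\{x\geq\tfrac12\}$ is not: $D\cap\{0\leq x\leq 1\}=[\tfrac12,1]$ has the fractional vertex $\tfrac12$. Your lineality-space reformulation of Lemma~\ref{boxint} only shows that the \emph{singleton} condition ($\lin(F)\cap\{x_i=0,\,i\in I\}=\{\mathbf 0\}$) is independent of $t$; whether that singleton is an \emph{integer point} depends entirely on where $\aff(F)=t+\lin(F)$ sits, i.e.\ on $t$. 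Consequently the links ``$C$ box-integer $\Rightarrow$ every $kD=kt+C$ is box-integer'' in your chain for point~\ref{conesiii}, and ``$kD=kt+C$ is box-integer iff $C$ is'' in your argument for point~\ref{conesiv}, both fail.

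The missing idea is to use integrality of $D$ to replace $t$ by an integer translation vector. If $D$ is box-integer then it is integer (Corollary~\ref{bii}), so its minimal face $t+\lin(C)$ contains an integer point $s$; since $C+\lin(C)=C$, one has $D=s+C$, and then $kD=(k-1)s+D$ is an \emph{integer} translate of $D$, hence box-integer by Observation~\ref{obs:intrans}. That gives point~\ref{conesiii}. For point~\ref{conesiv}, the correct equivalence is ``$kD$ is box-integer iff $C$ is box-integer \emph{and} $kD$ is integer'': choose $k$ with $kt\in\mathbb{Z}^n$, so that $kD=kt+C$ is a fully box-integer dilation of $D$ precisely when $C$ is box-integer, and conclude via Proposition~\ref{pbi-defs} (principal box-integrality is equivalent to having a fully box-integer dilation). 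Dropping the integrality proviso is exactly where your argument goes wrong.
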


\begin{proof}
The fact that $kC=C$ for all $k\in\mathbb{Z}_{\scriptscriptstyle >0}$ proves point~\ref{conesi}. 
When $D$ is box-integer, its minimal face contains an integer point, hence $t$ can be chosen integer. 
Since $kD=(k-1)t+D$ for all $k\in\mathbb Z_{\scriptscriptstyle>0}$, and since integer translation preserves box-integrality, point~\ref{conesiii} follows.
When $t\in\mathbb{Q}^n$, take $k$ large enough such that $kt$ is integer. Now, $kD=kt+C$ is a fully box-integer dilation of $D$ if and only if $C$ is box-integer, which proves point~\ref{conesiv}.
\end{proof}

%%%%%%%%%%%%%%%%%%%%%%%%%%%%%%%%%%%%%%%%%%%%%%%%%%%%%%%%%%%%%%%%%%%%%%%%%%%%%%%%%%%%%%%%%%%%%%%%%%%%%%%%%%%%%%%%%%%%%%%%%%%%%%%%%%%%%%%%%%%%%%%%%%%%
%%%%%%%%%%%%%%%%%%%%%%%%%%%%%%%%%%%%%%%%%%%%%%%%%%%%%%%%%%%%%%%%%%%%%%%%%%%%%%%%%%%%%%%%%%%%%%%%%%%%%%%%%%%%%%%%%%%%%%%%%%%%%%%%%%%%%%%%%%%%%%%%%%%%
%%%%%%%%%%%%%%%%%%%%%%%%%%%%%%%%%%%%%%%%%%%%%%%%%%%%%%%%%%%%%%%%%%%%%%%%%%%%%%%%%%%%%%%%%%%%%%%%%%%%%%%%%%%%%%%%%%%%%%%%%%%%%%%%%%%%%%%%%%%%%%%%%%%%

\section{Principally Box-Integer Polyhedra and Equimodular Matrices}\label{s:pbiem}

%%%%%%%%%%%%%%%%%%%%%%%%%%%%%%%%%%%%%%%%%%%%%%%%%%%%%%%%%%%%%%%%%%%%%%%%%%%%%%%%%%%%%%%%%%%%%%%%%%%%%%%%%%%%%%%%%%%%%%%%%%%%%%%%%%%%%%%%%%%%%%%%%%%%
%%%%%%%%%%%%%%%%%%%%%%%%%%%%%%%%%%%%%%%%%%%%%%%%%%%%%%%%%%%%%%%%%%%%%%%%%%%%%%%%%%%%%%%%%%%%%%%%%%%%%%%%%%%%%%%%%%%%%%%%%%%%%%%%%%%%%%%%%%%%%%%%%%%%

In this section, we show how equimodularity and principal box-integrality are intertwined. First, we characterize equimodular matrices using principal box-integrality. Then, principally box-integer polyhedra are characterized by the equimodularity of a family of matrices.

\subsection{Characterizations of Equimodular Matrices}\label{ss:charactEM}

%%%%%%%%%%%%%%%%%%%%%%%%%%%%%%%%%%%%%%%%%%%%%%%%%%%%%%%%%%%%%%%%%%%%%%%%%%%%%%%%%%%%%%%%%%%%%%%%%%%%%%%%%%%%%%%%%%%%%%%%%%%%%%%%%%%%%%%%%%%%%%%%%%%%

%\subsection{Equimodular Matrices and Faces}\label{ss:EM}

%%%%%%%%%%%%%%%%%%%%%%%%%%%%%%%%%%%%%%%%%%%%%%%%%%%%%%%%%%%%%%%%%%%%%%%%%%%%%%%%%%%%%%%%%%%%%%%%%%%%%%%%%%%%%%%%%%%%%%%%%%%%%%%%%%%%%%%%

%Special matrices play a important role in our results, and we call them equimodular. 
%Recall that a $k\times n$ matrix is equimodular if its $k\times k$ nonzero determinants all have the same absolute value.
%These matrices are precisely the ones involved in~\cite[Theorem~19.5]{sc}: they are the matrices which are totally unimodular up to certain changes of basis.

In this section, we extend to equimodular matrices two classical results about unimodular matrices.
We first state the results of Heller~\cite{he} about unimodular sets in terms of equimodular matrices---see also~\cite[Theorem~19.5]{sc}.

\begin{theorem}[Heller~\cite{he}]\label{equimod}
 For a  full row rank $r\times n$ matrix $A$, the following statements are equivalent.
\begin{enumerate}
\item\label{equ} $A$ is equimodular.
\item\label{equlattice} For each nonsingular $r\times r$ submatrix $D$ of $A$, $\lattice(D)=\lattice(A)$.
\item\label{equint} For each nonsingular $r\times r$ submatrix $D$ of $A$, $D^{-1}A$ is integer.
\item\label{equ01} For each nonsingular $r\times r$ submatrix $D$ of $A$, $D^{-1}A$ is in $\{0,\pm1\}^{r\times n}$.
\item\label{equtu} For each nonsingular $r\times r$ submatrix $D$ of $A$, $D^{-1}A$ is totally unimodular.
\item\label{equtuex} There exists a nonsingular $r\times r$ submatrix $D$ of $A$ such that $D^{-1}A$ is totally unimodular.
\end{enumerate}
\end{theorem}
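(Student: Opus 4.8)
The plan is to prove Theorem~\ref{equimod} by establishing a cycle of implications together with a few direct equivalences, exploiting the fact that all six conditions are statements quantified over the nonsingular $r\times r$ submatrices $D$ of $A$. The key algebraic observation underlying everything is that for a nonsingular $r\times r$ submatrix $D$, the matrix $D^{-1}A$ has $D^{-1}D = I$ as a submatrix, so every maximal minor of $D^{-1}A$ is of the form $\det(D')/\det(D)$ where $D'$ ranges over the nonsingular $r\times r$ submatrices of $A$ (by multilinearity of the determinant applied columnwise). This single identity links equimodularity of $A$ to the entries and minors of $D^{-1}A$.

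First I would dispatch the equivalences that do not require the lattice language. For \eqref{equ}$\Leftrightarrow$\eqref{equtu}: the maximal minors of $D^{-1}A$ are exactly the ratios $\det(D')/\det(D)$; these all lie in $\{0,\pm 1\}$ for every choice of $D$ precisely when all nonzero $\det(D')$ have the same absolute value as $\det(D)$, i.e.\ when $A$ is equimodular. Since $D^{-1}A$ has full row rank $r$, its maximal minors being in $\{0,\pm 1\}$ is the same as it being totally unimodular (a full-row-rank matrix is TU iff all its maximal minors are in $\{0,\pm 1\}$, because every square submatrix extends to a maximal one or is singular). This also gives \eqref{equ01}$\Leftrightarrow$\eqref{equtu} for free, and \eqref{equtu}$\Rightarrow$\eqref{equ01}$\Rightarrow$\eqref{equint} is immediate since entries are particular minors. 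The implications \eqref{equtu}$\Rightarrow$\eqref{equtuex} is trivial, and the reverse \eqref{equtuex}$\Rightarrow$\eqref{equ} follows from the same ratio identity: if $D^{-1}A$ is TU for one $D$, then $\det(D')/\det(D)\in\{0,\pm1\}$ forces every nonzero $\det(D')$ to equal $\pm\det(D)$ in absolute value, which is equimodularity.

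Next I would handle the lattice condition \eqref{equlattice} and its integer counterpart \eqref{equint}. For \eqref{equlattice}$\Leftrightarrow$\eqref{equint}: the columns of $D^{-1}A$ are the coordinate vectors of the columns of $A$ in the basis given by the columns of $D$, so $\lattice(D)\supseteq\lattice(A)$ iff every column of $A$ is an integer combination of the columns of $D$, iff $D^{-1}A$ is integer; the reverse inclusion $\lattice(D)\subseteq\lattice(A)$ always holds since $D$'s columns are among $A$'s. For \eqref{equint}$\Rightarrow$\eqref{equ01}, I would use that $D^{-1}A$ integer with all maximal minors equal to $\det(D')/\det(D)$ forces those ratios to be integers; but swapping the roles of $D$ and a competing $D'$ shows $\det(D)/\det(D')$ is also an integer, so the two determinants have equal absolute value, whence by the first paragraph the entries land in $\{0,\pm1\}$. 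This closes the loop, since \eqref{equ01}$\Rightarrow$\eqref{equlattice} is clear (integer entries give $\lattice(D)=\lattice(A)$).

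The main obstacle I anticipate is the bookkeeping in the determinant-ratio identity and making precise the claim that a full-row-rank matrix is totally unimodular if and only if all its $r\times r$ minors are in $\{0,\pm 1\}$; the latter requires an argument that a smaller square submatrix with determinant outside $\{0,\pm 1\}$ can be completed, using the presence of the identity block $I$ in $D^{-1}A$, to a maximal submatrix with the same nonzero determinant (pad with rows and columns of the identity). Once that lemma is in place, every implication reduces to the observation that the maximal minors of $D^{-1}A$ are the normalized determinants $\det(D')/\det(D)$, and the whole theorem becomes a matter of reading off what ``all ratios in $\{0,\pm 1\}$'' versus ``all ratios integer'' say about the absolute values of the $\det(D')$. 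I would present the proof as the chain $\eqref{equ}\Rightarrow\eqref{equlattice}\Rightarrow\eqref{equint}\Rightarrow\eqref{equ01}\Rightarrow\eqref{equtu}\Rightarrow\eqref{equtuex}\Rightarrow\eqref{equ}$, inserting the identity-completion lemma where \eqref{equ01} and \eqref{equtu} meet.
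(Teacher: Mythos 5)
The paper does not prove Theorem~\ref{equimod}: it is imported verbatim from Heller~\cite{he} (see also~\cite[Theorem~19.5]{sc}), so there is no in-paper argument to compare yours against. Judged on its own, your proof is correct and is essentially the standard one. The central identity is right: the $r\times r$ submatrix of $D^{-1}A$ on a column set $S$ is $D^{-1}A_S$, so its determinant is $\det(A_S)/\det(D)$ (this is multiplicativity of the determinant rather than multilinearity, a cosmetic point), and all six conditions reduce to statements about these ratios. Your exchange argument for \eqref{equint}$\Rightarrow$\eqref{equ01} (both $\det(D')/\det(D)$ and $\det(D)/\det(D')$ integral forces equal absolute values), the lattice reformulation $\lattice(D)\supseteq\lattice(A)\Leftrightarrow D^{-1}A\in\mathbb{Z}^{r\times n}$, and the observation that \eqref{equtuex} already pins down all $|\det(D')|$ are all sound and close the cycle.

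One caution: the parenthetical claim that \emph{any} full-row-rank matrix is TU iff its maximal minors lie in $\{0,\pm1\}$ is false in general (e.g.\ $\bigl(\begin{smallmatrix}1&2\\0&1\end{smallmatrix}\bigr)$ has its unique maximal minor equal to $1$ but is not TU). What is true, and what you need, is the version for matrices containing an identity block $I_r$: any nonsingular $k\times k$ submatrix on rows $R$ can be padded with the identity columns $e_i$, $i\notin R$, to an $r\times r$ submatrix of the same determinant up to sign (if some such $e_i$ already appears among the chosen columns, the $k\times k$ submatrix is singular and there is nothing to prove). You identify exactly this lemma and its proof in your final paragraph, so the argument is complete once that parenthetical is replaced by the restricted statement; just make sure the general claim does not survive into the writeup.
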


Veinott and Dantzig~\cite{veda} proved that an integer $r\times n$ matrix $A$ of full row rank is unimodular if and only if the polyhedron $\{x:Ax=b, x\geq \mathbf{0}\}$ is integer for all $b\in\mathbb{Z}^r$. 
Observe that point~\ref{corboxintii} of Corollary~\ref{corboxint} allows to reformulate their result as follows, since $\{x:Ax=kb\}\cap\{x\geq \ell\}=\ell+\{x:Ax=b', x\geq\mathbf{0}\}$, where $b'=kb+kA\ell\in\mathbb Z^r$.

\begin{theorem}[Veinott and Dantzig~\cite{veda}]\label{th:unimod}
Let $A$ be a full row rank matrix of $\mathbb{Z}^{r\times n}$.
Then, $A$ is unimodular if and only if $\{x:Ax=b\}$ is fully box-integer for all $b\in\mathbb{Z}^r$.
\end{theorem}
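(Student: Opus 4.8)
The plan is to read this off directly from the classical Veinott--Dantzig theorem (stated just above) together with the box-integer characterization of Corollary~\ref{corboxint}. The first step is a quick reduction of the quantifiers. Since $k\{x:Ax=b\}=\{x:Ax=kb\}$ and $kb\in\mathbb{Z}^r$ whenever $b\in\mathbb{Z}^r$, the set of right-hand sides $\{kb:b\in\mathbb{Z}^r,\ k\in\mathbb{Z}_{>0}\}$ is all of $\mathbb{Z}^r$ (take $k=1$). Hence the assertion ``$\{x:Ax=b\}$ is fully box-integer for all $b\in\mathbb{Z}^r$'' is equivalent to ``$\{x:Ax=b\}$ is box-integer for all $b\in\mathbb{Z}^r$'', and it suffices to characterize unimodularity through the latter, weaker-looking statement.

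Next I would record the affine substitution $x=\ell+x'$, which gives, for every $\ell\in\mathbb{Z}^n$,
\[
\{x:Ax=b\}\cap\{x\ge\ell\}=\ell+\{x':Ax'=b-A\ell,\ x'\ge\mathbf{0}\},
\]
with $b-A\ell\in\mathbb{Z}^r$ (this is the identity underlying the remark preceding the theorem). Because integer translation preserves integrality by Observation~\ref{obs:intrans}, the left-hand polyhedron is integer if and only if the right-hand one is.

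For the forward direction, assume $A$ is unimodular. By the equivalence of points~\ref{corboxinti} and~\ref{corboxintii} of Corollary~\ref{corboxint}, it is enough to check that $\{x:Ax=b\}\cap\{x\ge\ell\}$ is integer for every $\ell\in\mathbb{Z}^n$; by the displayed identity this reduces to the integrality of $\{x':Ax'=b-A\ell,\ x'\ge\mathbf{0}\}$, which holds by Veinott--Dantzig since $b-A\ell\in\mathbb{Z}^r$. For the converse, assume $\{x:Ax=b\}$ is box-integer for all $b\in\mathbb{Z}^r$. Specializing point~\ref{corboxintii} of Corollary~\ref{corboxint} to $\ell=\mathbf{0}$ shows that $\{x:Ax=b,\ x\ge\mathbf{0}\}$ is integer for all $b\in\mathbb{Z}^r$, whence $A$ is unimodular, again by Veinott--Dantzig.

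There is no deep obstacle here: the content of the theorem lies entirely in the classical result, and the present statement is a clean repackaging of it via box-integrality. The only point requiring genuine care — the ``main obstacle'', such as it is — is the bookkeeping of the substitution and the quantifiers: one must verify that $b-A\ell$ is genuinely integer and that, as $\ell$ ranges over $\mathbb{Z}^n$ and $b$ over $\mathbb{Z}^r$, it ranges over all integer right-hand sides, and that the reduction from ``fully box-integer'' to ``box-integer'' loses no instances. Once these are in place, both implications follow immediately.
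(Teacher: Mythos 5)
Your proof is correct and follows exactly the route the paper intends: the paper "proves" this theorem in the single sentence preceding it, by combining the classical Veinott--Dantzig statement with point~\ref{corboxintii} of Corollary~\ref{corboxint} and the translation identity $\{x:Ax=kb\}\cap\{x\ge\ell\}=\ell+\{x':Ax'=kb-A\ell,\ x'\ge\mathbf{0}\}$, which is precisely your argument spelled out (your right-hand side $b-A\ell$ is the correct one; the paper's ``$b'=kb+kA\ell$'' appears to be a typo). No gaps.
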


%\begin{proof}
%If $A$ is unimodular, then $\{x:Ax=kb\}\cap\{x\geq \ell\}=\ell+\{x:Ax=kb+kA\ell, x\geq\mathbf{0}\}$ is integer for all $\ell\in\mathbb{Z}^n$ by Veinott and Dantzig's result~\cite{veda}, which is as desired by point~\ref{corboxintii} of Corollary~\ref{corboxint}. Conversely, the box-integrality of $\{x:Ax=b\}$ implies the integrality of $\{x:Ax=b, x\geq \mathbf{0}\}$ for all $b\in\mathbb{Z}^r$, hence the unimodularity of $A$.
%\end{proof}

It turns out that this result can be extended to characterize equimodular matrices.

\begin{theorem}\label{thm:EMPBI}
Let $A$ be a full row rank matrix of $\mathbb{Q}^{r\times n}$.
Then, $A$ is equimodular if and only if $\{x:Ax=b\}$ is principally box-integer for all $b\in\mathbb{Q}^r$.
\end{theorem}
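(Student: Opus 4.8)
The plan is to prove the two implications separately, using Heller's characterization (Theorem~\ref{equimod}) to convert equimodularity into statements about the lattices $\lattice(D)$ and about totally unimodular matrices, and using the Veinott--Dantzig theorem (Theorem~\ref{th:unimod}) to dispatch the unimodular case. Throughout, for $b\in\mathbb{Q}^r$ I write $P_b=\{x:Ax=b\}$. Since $A$ has full row rank, $P_b$ is a nonempty affine space, so it is its own unique minimal face with $\aff(P_b)=P_b$, and $kP_b=\{x:Ax=kb\}$ for every $k\in\mathbb{Z}_{\scriptscriptstyle>0}$. Being a single affine space, $kP_b$ is integer exactly when $kb\in\lattice(A)$.

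For the forward direction, suppose $A$ is equimodular. By item~\eqref{equtuex} of Theorem~\ref{equimod} there is a nonsingular $r\times r$ submatrix $D$ of $A$ with $A'=D^{-1}A$ totally unimodular; as $A'$ also has full row rank, it is unimodular, and hence $\lattice(A')=\mathbb{Z}^r$. Since $Ax=b$ is equivalent to $A'x=D^{-1}b$, I would set $c=D^{-1}b$ and show that $P_b=\{x:A'x=c\}$ is principally box-integer. The key point is that $kP_b=\{x:A'x=kc\}$ is integer precisely when $kc\in\lattice(A')=\mathbb{Z}^r$, and for every such $k$ the Veinott--Dantzig theorem (Theorem~\ref{th:unimod}), applied to the integer vector $kc$, gives that $\{x:A'x=kc\}$ is fully box-integer, hence box-integer. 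Thus every integer dilation of $P_b$ is box-integer, which is exactly principal box-integrality. As $b$ ranges over $\mathbb{Q}^r$ so does $c=D^{-1}b$, so this settles all right-hand sides.

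For the converse I would argue by contraposition. Suppose $A$ is not equimodular. By item~\eqref{equlattice} of Theorem~\ref{equimod} there is a nonsingular $r\times r$ submatrix $D$ with $\lattice(D)\neq\lattice(A)$; since the columns of $D$ form a subset of those of $A$ we have $\lattice(D)\subseteq\lattice(A)$, so in fact $\lattice(D)\subsetneq\lattice(A)$. I would then pick any $b\in\lattice(A)\setminus\lattice(D)$. From $b\in\lattice(A)$, the affine space $P_b=\{x:Ax=b\}$ contains an integer point, so $P_b$ is integer. To violate box-integrality I would apply Lemma~\ref{boxint} with $F=P_b$: assuming after a column permutation that $D$ consists of the first $r$ columns, the set $\aff(P_b)\cap\{x_i=0,\ i>r\}$ is the singleton $(D^{-1}b,\mathbf{0})$, and $b\notin\lattice(D)=D\mathbb{Z}^r$ forces $D^{-1}b\notin\mathbb{Z}^r$, so this singleton is not integer. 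Hence $P_b=1\cdot P_b$ is integer but not box-integer, so $P_b$ is not principally box-integer, which is the desired contrapositive.

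I expect the only delicate point to be the bookkeeping in the forward direction, namely recognizing that the rational right-hand side is exactly what makes \emph{principal} (rather than full) box-integrality the right notion: one must observe that the integral dilations of $P_b$ are precisely those with $kc\in\mathbb{Z}^r$, which is the hypothesis under which Veinott--Dantzig applies. The converse is essentially a single-point counterexample, so the care there is only to confirm that $\lattice(D)\subsetneq\lattice(A)$ provides a usable $b$ and to pin down the non-integer singleton through Lemma~\ref{boxint}.
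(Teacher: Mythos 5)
Your proof is correct and follows essentially the same route as the paper's: the forward direction combines Heller's characterization (to get a totally unimodular, hence unimodular, $D^{-1}A$ and the integrality of $D^{-1}(kb)$ for the relevant dilations) with the Veinott--Dantzig theorem, and the converse exhibits a non-integer singleton by zeroing out the non-basic coordinates. The only cosmetic differences are that the paper picks the witness $b$ to be a column of $A$ with $D^{-1}A^{r+1}\notin\mathbb{Z}^r$ rather than an arbitrary element of $\lattice(A)\setminus\lattice(D)$, and routes the forward direction through point~\eqref{equlattice} of Theorem~\ref{equimod} instead of $\lattice(D^{-1}A)=\mathbb{Z}^r$.
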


\begin{proof}
Suppose that $A$ is equimodular and let $b\in\mathbb{Q}^r$, $k\in\mathbb Z_{\scriptscriptstyle >0}$ be such that $H=\{x: Ax=kb\}$ is integer. Then $b'=kb$ belongs to $\lattice(A)$. 
Let $D$ be a nonsingular $r\times r$ submatrix~$D$ of~$A$. 
By point~\ref{equlattice} of Theorem~\ref{equimod}, we have $\lattice(D)=\lattice(A)$, hence $D^{-1}b'$ is in $\mathbb{Z}^r$.
Since $A$ has full row rank, by point~\ref{equtu} of Theorem~\ref{equimod}, $D^{-1}A$ is unimodular. 
By Theorem~\ref{th:unimod}, we get that $\{x: D^{-1}Ax=D^{-1}b'\}$ is fully box-integer. In particular, $H$ is box-integer.

Conversely, suppose that $A$ is not equimodular. Then, possibly reordering the columns, we may assume that the first $r$ columns of $A$ are linearly independent, and, by point~\ref{equint} of Theorem~\ref{equimod}, that the $r+1$th column $A^{r+1}$ of $A$ is a noninteger combination of those. Let $H=\{x:Ax=A^{r+1}\}$. Then, $\{x:Ax=A^{r+1}\}\cap\{x_j=0,j\geq r+1\}$ has no integer solution, hence $H$ is not box-integer. However, $H$ is integer as it contains $\chi^{r+1}$ as an integer point. Therefore, $H$ is not principally box-integer.
\end{proof}

Veinott and Dantzig~\cite{veda} devised Theorem~\ref{th:unimod} in order to get a simpler proof of a characterization of totally unimodular matrices due to Hoffman and Kruskal~\cite{hokr}. 
This characterization states that an integer matrix $A$ is totally unimodular if and only if $\{x:Ax\leq b\}$ is box-integer for all $b\in\mathbb{Z}^m$. In our context, this can be reformulated as follows.

\begin{theorem}[Hoffman and Kruskal~\cite{hokr}]\label{thm:hokr}
A matrix $A$ of $\mathbb{Z}^{m\times n}$ is totally unimodular if and only if $\{x:Ax\leq b\}$ is fully box-integer for all $b\in\mathbb{Z}^m$.
\end{theorem}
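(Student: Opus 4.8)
The plan is to reduce the total unimodularity of $A$ to the unimodularity of the augmented matrix $(A\ I)$ and then to feed this into the Veinott--Dantzig characterization (Theorem~\ref{th:unimod}). First I would record the elementary fact that $A\in\mathbb{Z}^{m\times n}$ is totally unimodular if and only if the full row rank integer matrix $(A\ I)\in\mathbb{Z}^{m\times(n+m)}$ is unimodular: expanding an $m\times m$ minor of $(A\ I)$ along its identity columns shows that the maximal minors of $(A\ I)$ are, up to sign, exactly the minors of $A$ of all orders, so they all lie in $\zpmset$ iff $A$ is TU, and the nonzero ones then equal $\pm1$. Applying Theorem~\ref{th:unimod} to $(A\ I)$, this says $A$ is TU iff the equality system $Q_b:=\{(x,s):Ax+s=b\}$ is fully box-integer for every $b\in\mathbb{Z}^m$. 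It remains to transfer box-integrality between the family $\{Q_b\}$ and the family $\{P_b\}$ with $P_b:=\{x:Ax\le b\}$; since $kP_b=P_{kb}$ with $kb\in\mathbb{Z}^m$, the phrase ``$P_b$ box-integer for all $b\in\mathbb{Z}^m$'' already means ``$P_b$ fully box-integer for all $b\in\mathbb{Z}^m$'', so the whole theorem follows once I prove that $P_b$ is box-integer for all $b\in\mathbb{Z}^m$ if and only if $Q_b$ is box-integer for all $b\in\mathbb{Z}^m$.

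For the direction from $Q$ to $P$ I would fix $b$ and use the affine bijection $\phi:x\mapsto(x,b-Ax)$ from $\mathbb{R}^n$ onto $Q_b$. Because $A$ and $b$ are integer, both $\phi$ and $\phi^{-1}$ preserve integrality, and $\phi$ maps $P_b\cap\{\ell\le x\le u\}$ onto $Q_b\cap\{\ell\le x\le u,\ s\ge\mathbf{0}\}$. By Corollary~\ref{corboxint} the latter is integer whenever $Q_b$ is box-integer (here I allow $+\infty$ upper bounds on $s$), so $P_b\cap\{\ell\le x\le u\}$ is integer and $P_b$ is box-integer. This reproves the ``easy'' implication of Hoffman--Kruskal; alternatively it can be obtained directly from Lemma~\ref{boxint}, since every face $F$ of $P_b$ has an affine hull $\{x:A_Fx=b_F\}$ defined, after extracting a maximal independent subsystem, by a full row rank submatrix $\bar A$ of $A$, which is unimodular as a full row rank submatrix of a TU matrix, so Theorem~\ref{th:unimod} makes $\aff(F)$ box-integer and forces the witnessing singleton to be integer.

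The hard part is the converse transfer, from $P$ to $Q$, and I expect this to be the main obstacle. Assume $P_{b'}$ is box-integer for every $b'\in\mathbb{Z}^m$ and fix $b\in\mathbb{Z}^m$; since $Q_b$ is an affine space, it is its own unique face, so by Lemma~\ref{boxint} I only need to show that every singleton $\{(v,w)\}=Q_b\cap\{(x,s)_i=p_i,\,i\in I\}$ with integer $p$ is integer. Eliminating $s$ via $s_i=b_i-a_i^\top x$, such a constraint set becomes $n$ equations on $x$ of the two forms $x_j=p_j$ $(j\in J)$ and $a_i^\top x=q_i$ $(i\in K)$, all with integer right-hand sides, whose unique solution is $v$, and then $w=b-Av$. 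The idea is to realize $v$ as a box-constrained vertex of a suitable $P_{b'}$: choose $b'\in\mathbb{Z}^m$ with $b'_i=q_i$ for $i\in K$ and $b'_i$ any integer at least $a_i^\top v$ for $i\notin K$, so that $v\in P_{b'}$ with the rows indexed by $K$ tight at $v$. Then $F':=\{x:a_i^\top x=b'_i,\,i\in K\}\cap P_{b'}$ is a face of $P_{b'}$ containing $v$, and $\aff(F')\cap\{x_j=p_j,\,j\in J\}=\{v\}$ is a singleton lying in $F'$. Box-integrality of $P_{b'}$ and Lemma~\ref{boxint} then give that $v$ is integer, whence $w=b-Av$ is integer as well, so $Q_b$ is box-integer. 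Feeding this back through Theorem~\ref{th:unimod} and the minor identity yields that $A$ is TU, completing the equivalence. The delicate points to get right are the bookkeeping of the elimination of $s$ (so that all resulting right-hand sides stay integer) and the choice of $b'$ that makes exactly the rows of $K$ tight while keeping $v$ feasible and the intersection a genuine singleton witnessing a box vertex.
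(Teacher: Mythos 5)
Your proof is correct. Note first that the paper does not actually prove Theorem~\ref{thm:hokr}: it cites Hoffman and Kruskal and only justifies the cosmetic reformulation that, since $k\{x:Ax\leq b\}=\{x:Ax\leq kb\}$ with $kb\in\mathbb{Z}^m$, quantifying ``box-integer'' over all integer $b$ already yields ``fully box-integer'' over all integer $b$ --- a point you also make. What you supply is the classical Veinott--Dantzig derivation that the paper alludes to (``Veinott and Dantzig devised Theorem~\ref{th:unimod} in order to get a simpler proof of a characterization \dots due to Hoffman and Kruskal''), recast in the paper's vocabulary: the minor identity for $(A\ I)$, Theorem~\ref{th:unimod} applied to the slack system $Q_b=\{(x,s):Ax+s=b\}$, and a two-way transfer of box-integrality between $\{Q_b\}$ and $\{P_b\}$ via Lemma~\ref{boxint} and Corollary~\ref{corboxint}. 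All steps check out; in particular the hard transfer $P\Rightarrow Q$ is sound: after eliminating $s$, the pinned coordinates become $x_j=p_j$ and $a_i^\top x=q_i$ with integer right-hand sides, and your choice of $b'$ exhibits $v$ as the witnessing singleton of Lemma~\ref{boxint} inside the face $F'$ of $P_{b'}$. One remark: the ``delicate point'' you flag at the end --- arranging that \emph{exactly} the rows of $K$ are tight --- is not actually needed. Extra tight rows only shrink $\aff(F')$, and since $v\in F'\subseteq\aff(F')$ and $v$ satisfies $x_j=p_j$ for $j\in J$, the intersection $\aff(F')\cap\{x_j=p_j,\,j\in J\}$ is still exactly $\{v\}$; Lemma~\ref{boxint} applies regardless of which additional inequalities happen to be tight.
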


An equivalent definition of total unimodularity is to ask for every set of linearly independent rows to be unimodular. In this light, it is natural to define {\em totally equimodular matrices} as those for which all sets of linearly independent rows form an equimodular matrix. Theorem~\ref{thm:hokr} then extends to totally equimodular matrices as follows.

\begin{theorem}\label{TEMPBI}
A matrix $A$ of $\mathbb{Q}^{m\times n}$ is totally equimodular if and only if $\{x:Ax\leq b\}$ is principally box-integer for all $b\in\mathbb{Q}^m$.
\end{theorem}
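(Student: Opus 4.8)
The plan is to reduce both implications to Theorem~\ref{thm:EMPBI} applied to the full row rank submatrices of $A$ coming from linearly independent rows, with Lemma~\ref{boxint} as the bridge between box-integrality and the singletons cut out on affine hulls of faces. Recall that total equimodularity of $A$ means exactly that every full row rank submatrix $A'$ formed by a set of linearly independent rows of $A$ is equimodular.

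For the forward implication, suppose $A$ is totally equimodular, fix $b\in\mathbb{Q}^m$ and $k\in\mathbb{Z}_{\scriptscriptstyle>0}$ with $kP$ integer, where $P=\{x:Ax\le b\}$, and aim to show $kP$ box-integer via Lemma~\ref{boxint}. So I take a face $F$ of $kP$ together with $I$ and integer $p$ such that $\aff(F)\cap\{x_i=p_i,\,i\in I\}$ is a singleton $v\in F$, and must prove $v$ integer. A maximal set of linearly independent rows of $A_F$ gives a full row rank submatrix $A'$ of $A$ with $\aff(F)=\{x:A'x=c'\}$ for the corresponding right-hand side $c'$; since $A'$ is equimodular, Theorem~\ref{thm:EMPBI} makes $\{x:A'x=c'\}$ principally box-integer. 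As $kP$ is integer, its face $F$ contains an integer point, so $\aff(F)$ is an integer affine subspace; being principally box-integer and integer it is fully box-integer, hence box-integer. Applying Lemma~\ref{boxint} to the affine subspace $\aff(F)$ itself (it is its own unique face and it contains $v$) then forces $v$ to be integer.

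For the converse I argue by contraposition: assuming $A$ is not totally equimodular, I exhibit one $b$ for which $\{x:Ax\le b\}$ is not principally box-integer. Choose linearly independent rows forming a non-equimodular $A'$ (rows indexed by $J$, rank $r$); by the equivalence \ref{equ}$\Leftrightarrow$\ref{equint} of Theorem~\ref{equimod} there is a nonsingular $r\times r$ submatrix $D=A'_K$ and a column index $i_0\notin K$ with $D^{-1}(A')^{i_0}\notin\mathbb{Z}^r$. Set $w_{i_0}=1$, $w_i=0$ for the remaining indices outside $K$, and $w_K=-D^{-1}(A')^{i_0}$; then $A'w=\mathbf{0}$, the vector $w$ is non-integer, and the integer box $\{x_i=\delta_{i,i_0}\}_{i\notin K}$ meets $\{x:A'x=\mathbf{0}\}$ in exactly $w$. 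Now let $P=\{x:Ax\le b\}$ with $b_j=0$ for $j\in J$ and $b_i>\max(0,a_i^\top w)$ for $i\notin J$. Then $F:=\{x:a_j^\top x=0,\,j\in J\}\cap P$ is a face, $w$ lies in its relative interior, and $\aff(F)=\{x:A'x=\mathbf{0}\}$.

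The step that would otherwise be the main obstacle is reconciling this face with the dilation factor: principal box-integrality only controls $dP$ at those $d$ making $dP$ integer, and such $d$ need not clear the denominators of $w$, while in general $\aff(F)$ extends beyond $F$. The resolution is to have pinned the tight rows at $\mathbf{0}$, which turns $\aff(F)$ into a \emph{linear} subspace, hence invariant under dilation. Concretely, Proposition~\ref{kbi} gives $d\in\mathbb{Z}_{\scriptscriptstyle>0}$ with $dP$ integer, and $\aff(dF)=\{x:A'x=\mathbf{0}\}=\aff(F)$ independently of $d$, while $w$ still lies in $dF$ because the loose constraints only relax ($a_i^\top w<b_i\le d\,b_i$). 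If $P$ were principally box-integer, $dP$ would be box-integer, and Lemma~\ref{boxint} applied to the face $dF$ with the integer box above would force the singleton $w$ to be integer, a contradiction. Thus $P$ is not principally box-integer, completing the contrapositive and the proof.
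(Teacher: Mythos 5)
Your proposal is correct and follows essentially the same route as the paper: the forward direction reduces to Theorem~\ref{thm:EMPBI} applied to a full row rank set of tight rows for each face of $kP$ and concludes via Lemma~\ref{boxint}, and the converse builds the same kind of counterexample (zero right-hand side on the non-equimodular rows, strictly slack elsewhere) so that a noninteger point of the linear subspace $\{x:A'x=\mathbf{0}\}$ is pinned by an integer box inside an integer dilation. The only cosmetic difference is your normalization of the witness $w$ and of the slack right-hand sides, versus the paper's choice of $b_j=1$ off the tight rows with $k$ taken large; both resolve the dilation issue by exploiting that the tight constraints are homogeneous.
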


\begin{proof}
Suppose $A$ totally equimodular and $b\in\mathbb{Q}^m$, and let us prove that $P=\{x:Ax\leq b\}$ is principally box-integer. Let ${k}\in\mathbb{Z}_{\scriptscriptstyle >0}$ be such that ${k} P$ is an integer polyhedron, and let us prove that ${k} P$ is box-integer. 
Let $F$ be a face of $kP$  and $p$ be an integer vector such that $\aff(F)\cap \{x_i=p_i, i\in I\}$ is a singleton $\bar x$ in $F$. 
By Lemma~\ref{boxint}, it remains to show that $\bar x$ is integer.
There exists a full row rank subset $L$ of rows of $A$ such that $\aff(F)=\{x: A_Lx=kb_L\}$. 
By $A$ being totally equimodular, $A_L$ is equimodular. By Theorem~\ref{thm:EMPBI}, $\aff(F)$ is principally box-integer. 
Now, $kP$ being integer, so is $\aff(F)$. 
Hence, $\aff(F)$ is box-integer and $\bar x$ is integer.

Suppose now that $A$ is not equimodular, that is, there exists a full row rank submatrix $A_L$ of size $r\times n$ of $A$ which is not equimodular. 
Then, we may assume that the first $r$ columns of $A_L$ are linearly independent, and that the $r+1$th column  of $A_L$ is a noninteger combination of those.
Let $\bar x$ be the unique solution of $A_L x =0, x_{r+1}=-1, x_j= 0,j>r+1$.
Then, $\bar x\notin \mathbb{Z}^n$. 
Define $b_L=\mathbf{0}$ and $b_j=1$ if $j\notin L$, and let us show that $P=\{x:Ax\leq b\}$ is not principally box-integer. 
There exists ${k}\in \mathbb{Z}_{\scriptscriptstyle >0}$ large enough such that $\bar x\in{k} P$, and such that ${k} P$ is integer. Then, ${k} P\cap\{x_{r+1}=-1, x_j= 0,j>r+1\}$ contains $\bar x$ as a vertex because $\bar x$ satisfies to equality $n$ linearly independent inequalities. Therefore, ${k} P$ is not box-integer.
\end{proof}

Since deciding whether a given matrix is totally unimodular can be done in polynomial time, see {\em e.g.} \cite[Chapter~20]{sc}, point~\ref{equtu} of Theorem~\ref{equimod} implies that deciding whether a given matrix is equimodular can be done in polynomial time.
For totally equimodular matrices, the situation is less clear. 
Nevertheless, each row of such a matrix is equimodular, and thus nonzero coefficients of a given row all have the same absolute value. 
As scaling each row preserves equimodularity, each equimodular matrix has a $0$, $\pm1$ representant, hence might have some combinatorial interpretation.
Since totally equimodular matrices generalize totally unimodular matrices and since the associated polyhedra have nice properties (see also Corolary~\ref{TEBTDI}), they are interesting by themselves and their structural properties deserve to be studied. We put forward the following problem as a first step in this direction.

\begin{open}
Can totally equimodular matrices be recognized in polynomial time?
\end{open}

\begin{remark}
The full row rank hypothesis made throughout this section is convenient, but not really necessary, provided the notions of unimodularity and equimodularity are correctly extended.
Hoffman and Kruskal~\cite{hokr} extend the notion of unimodularity to not necessarily full row rank matrices, and Theorem~\ref{th:unimod} still holds for those matrices~\cite[Page~301]{sc}.
The correct extension of equimodularity to general matrices is to require, for a matrix $A$ of rank $r$, that each set of $r$ linearly independent rows of $A$ forms an equimodular matrix. Properties of such matrices are studied in~\cite{he}.
We mention that none of the definitions and results of this paper are affected if these extended definitions are adopted and the full row rank hypothesis removed.
\end{remark}

\subsection{Affine Spaces and Face-Defining Matrices}\label{sss:facedef}

Affine spaces being special cases of cones, by point~\ref{conesiv} Observation~\ref{biconesprop}, $\{x:Ax=b\}$ is principally box-integer for all $b$ if and only if $\{x: Ax=\mathbf{0}\}$ is fully box-integer. In particular, one can drop the quantification over all $b\in\mathbb{Q}^n$ from Theorem~\ref{thm:EMPBI} as follows.

\begin{corollary}\label{thm:affbi}
Let $A$ be a full row rank matrix of $\mathbb{Q}^{r\times n}$ and $b\in\mathbb{Q}^n$.
Then, $A$ is equimodular if and only if the affine space $\{x: Ax=b\}$ is principally box-integer.
\end{corollary}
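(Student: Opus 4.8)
The corollary is essentially a streamlined restatement of Theorem~\ref{thm:EMPBI}: the latter says $A$ is equimodular iff $\{x:Ax=b'\}$ is principally box-integer for \emph{all} $b'\in\mathbb{Q}^r$, and the corollary claims that testing a single (arbitrary) $b\in\mathbb{Q}^n$ suffices. The plan is therefore to reduce the single-$b$ statement to the all-$b$ statement using the machinery about cones established in Observation~\ref{biconesprop}. The key structural fact, already pointed out in the paragraph preceding the corollary, is that an affine space $\{x:Ax=b\}$ is a translate of the linear space $C=\{x:Ax=\mathbf{0}\}$, and the latter is a cone (indeed a linear subspace, which is a cone with $t=\mathbf{0}$).

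First I would observe that $C=\{x:Ax=\mathbf 0\}$, being a linear space, is itself an affine space $\{x:Ax=b_0\}$ with $b_0=\mathbf 0\in\mathbb{Z}^r$, and in particular $\mathbf 0\in C$ is an integer point, so $C$ is integer. By point~\ref{conesiv} of Observation~\ref{biconesprop}, the translate $\{x:Ax=b\}=b^{*}+C$ (for any fixed $b^{*}$ with $Ab^{*}=b$) is principally box-integer if and only if $C$ is box-integer. By point~\ref{conesi} of the same observation, for the cone $C$ the three notions box-integer, fully box-integer, and principally box-integer all coincide. Hence, for \emph{every} choice of $b\in\mathbb{Q}^n$ (equivalently, every $b$ in the image of $A$, the only ones for which the affine space is nonempty), the property ``$\{x:Ax=b\}$ is principally box-integer'' is equivalent to the single property ``$C$ is box-integer,'' which does not depend on $b$ at all.

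This gives the whole argument: the right-hand condition of the corollary is independent of $b$, and it is equivalent to the right-hand condition of Theorem~\ref{thm:EMPBI} (box-integrality of $C$ being equivalent to principal box-integrality of $\{x:Ax=b'\}$ for each individual $b'$, hence for all $b'$ simultaneously). Therefore ``$\{x:Ax=b\}$ is principally box-integer for one $b$'' $\iff$ ``$\{x:Ax=b'\}$ is principally box-integer for all $b'$'' $\iff$ ``$A$ is equimodular'' by Theorem~\ref{thm:EMPBI}. I would write the proof as a single short chain of equivalences invoking these two points of Observation~\ref{biconesprop} and then Theorem~\ref{thm:EMPBI}.

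There is no real obstacle here; the content is bookkeeping. The one point deserving care is the degenerate case where $b$ lies outside the column span of $A$ (as a map $x\mapsto Ax$), so that $\{x:Ax=b\}$ is empty. Since the hypothesis implicitly takes $b\in\mathbb{Q}^n$ but the equation $Ax=b$ with $A\in\mathbb{Q}^{r\times n}$ forces $b\in\mathbb{Q}^r$ for nonemptiness, I would note that $A$ having full row rank means every $b\in\mathbb{Q}^r$ is attained, so the affine space is always nonempty and the translate argument applies verbatim; the apparent ``$b\in\mathbb{Q}^n$'' should be read as $b\in\mathbb{Q}^r$. With that caveat dispatched, the equivalence is immediate from Observation~\ref{biconesprop} and Theorem~\ref{thm:EMPBI}.
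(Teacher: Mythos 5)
Your argument is correct and is essentially the paper's own proof: the paper likewise observes (in the paragraph preceding the corollary) that $\{x:Ax=b\}$ is a rational translate of the cone $\{x:Ax=\mathbf 0\}$, invokes point~\ref{conesiv} of Observation~\ref{biconesprop} to see that principal box-integrality of the affine space is independent of $b$, and then drops the quantifier from Theorem~\ref{thm:EMPBI}. Your remark that the ``$b\in\mathbb{Q}^n$'' in the statement should be read as $b\in\mathbb{Q}^r$ is a fair catch of a typo, but the substance of the proof is the same.
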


An affine space $\{x: Ax=b\}$ being integer if and only if $b$ belongs to $\lattice(A)$, the previous result has the following immediate consequence.

\begin{corollary}\label{thm:affbibis}
Let $A$ be a full row rank matrix of $\mathbb{Q}^{r\times n}$ and $b\in\mathbb{Q}^n$.
The affine space $\{x: Ax=b\}$ is fully box-integer if and only if $A$ is equimodular and $b\in\lattice(A)$.
\end{corollary}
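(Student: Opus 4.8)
The plan is to derive Corollary~\ref{thm:affbibis} directly from Corollary~\ref{thm:affbi} together with the stated characterization of when an affine space is an integer polyhedron. The key observation, already noted in the paragraph preceding the statement, is that full box-integrality is simply the conjunction of principal box-integrality and integrality: indeed an affine space $H=\{x:Ax=b\}$ is its own unique minimal face, so $H$ is integer precisely when it contains an integer point, and by Observation~\ref{biconesprop}(\ref{conesiii}) (applied to the translate of the cone $\{x:Ax=\mathbf{0}\}$) the dilations of $H$ are either all box-integer or behave in lockstep. So I would split the proof into the ``integrality'' half and the ``principal box-integrality'' half and match each against the two conjuncts on the right-hand side.

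First I would establish the integrality criterion: $H=\{x:Ax=b\}$ contains an integer point if and only if $b\in\lattice(A)$. This is the standard fact that the system $Ax=b$ has an integer solution exactly when $b$ lies in the lattice generated by the columns of $A$ (equivalently, $b$ is an integer combination of the columns of $A^\top$ in the row-space sense). Since $H$ has a single minimal face equal to $H$ itself, ``$H$ is integer'' is equivalent to ``$H$ contains an integer point'', which is equivalent to $b\in\lattice(A)$. I would state this as the elementary linear-algebra fact it is, citing it as the remark made just before the statement (``An affine space $\{x:Ax=b\}$ being integer if and only if $b$ belongs to $\lattice(A)$'').

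Next I would combine the two ingredients. By definition, $H$ is fully box-integer if and only if $kH$ is box-integer for every $k\in\mathbb{Z}_{\scriptscriptstyle>0}$; equivalently (as recorded in the sentence ``$P$ is fully box-integer if and only if $P$ is principally box-integer and integer''), $H$ is fully box-integer iff $H$ is both principally box-integer and integer. By Corollary~\ref{thm:affbi}, $H$ is principally box-integer if and only if $A$ is equimodular. By the integrality criterion of the previous step, $H$ is integer if and only if $b\in\lattice(A)$. Conjoining these two equivalences yields exactly the claim: $H$ is fully box-integer iff $A$ is equimodular and $b\in\lattice(A)$.

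I do not expect a genuine obstacle here, since every nontrivial ingredient has already been proved in the excerpt and the statement is labeled an ``immediate consequence''. The only point requiring mild care is making the reduction ``fully box-integer $=$ principally box-integer $+$ integer'' airtight for affine spaces: one must note that for an affine space the notion of being an integer polyhedron coincides with containing a single integer point, so that integrality does not depend on $k$ beyond the lattice membership of $b$, and that principal box-integrality is exactly the dilation-invariant part captured by Corollary~\ref{thm:affbi}. Once that bookkeeping is spelled out, the proof is a one-line intersection of two biconditionals.
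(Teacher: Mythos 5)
Your proof is correct and follows exactly the route the paper intends: the statement is derived as an immediate consequence of Corollary~\ref{thm:affbi} by splitting ``fully box-integer'' into ``principally box-integer'' (equivalent to equimodularity of $A$) plus ``integer'' (equivalent to $b\in\lattice(A)$). The detour through Observation~\ref{biconesprop} is unnecessary but harmless.
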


Corollary~\ref{thm:affbi} yields a correspondence between equimodular matrices and principally box-integer affine spaces. We shall see in the next section that this correspondence, when applied to the faces of a polyhedron, provides a characterization of principally box-integer polyhedra. This motivates the following definition.

\paragraph{Face-defining matrices.}
Let $P=\{x: Ax\leq b\}$ be a polyhedron of $\mathbb{R}^n$ and $F$ be a face of $P$.
A full row rank matrix $M$ such that $\aff(F)$ can be written $\{x:Mx=d\}$ for some $d$ is {\em face-defining} for~$F$. 
Such matrices are called {\em face-defining matrices of $P$}.
Note that face-defining matrices need not correspond to valid inequalities for the polyhedron. 
%The face-defining matrices of a face $F$ have full row rank, and in particular they all have the same number of rows. 
%The only face defining matrix for the empty face is a single row only composed of zeros.
A face-defining matrix for a facet of $P$ is called {\em facet-defining}. 

\medskip

%The face-defining matrices of an affine space being obtained one from another by linear combinations of their rows, the multilinearity of the determinant and Theorem~\ref{equimod} imply the following. 
%This observation is probably well known, yet we found no reference for it hence we provide a proof.
Affine spaces are polyhedra whose only face is themselves. The following observation characterizes their principal box-integrality in terms of face-defining matrices.

\begin{obs}\label{Fequimod} For an affine space $H$, the following statements are equivalent.
\begin{enumerate}
\item\label{Fpbi} $H$ is principally box-integer.
\item\label{Fequex} $H$ has an equimodular face-defining matrix.
\item\label{Fequall} Every face-defining matrix of $H$ is equimodular.
%Avant il y avait ça: Every matrix which is face-defining for $H$ is equimodular.
\item\label{Fequtuex} $H$ has a totally unimodular face-defining matrix.
\end{enumerate}
\end{obs}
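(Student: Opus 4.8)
The plan is to prove Observation~\ref{Fequimod} by establishing a cycle of implications, leveraging the fact that an affine space $H=\{x:Mx=d\}$ has $\aff(H)=H$ and that all face-defining matrices of $H$ describe the very same affine space. The key structural point I would use repeatedly is that any two face-defining matrices $M$ and $M'$ of $H$ have the same row space (the orthogonal complement of the direction space of $H$), hence are related by $M' = TM$ for some nonsingular rational matrix $T$. This is what will let me transfer equimodularity from one face-defining matrix to another.

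I would organize the argument as \ref{Fequall}$\Rightarrow$\ref{Fequex}$\Rightarrow$\ref{Fpbi}$\Rightarrow$\ref{Fequall}, and separately tie in \ref{Fequtuex}. The implication \ref{Fequall}$\Rightarrow$\ref{Fequex} is immediate provided $H$ has at least one face-defining matrix, which it does since $H$ is a genuine affine space of some dimension and we can read off a full row rank defining matrix. For \ref{Fequex}$\Rightarrow$\ref{Fpbi}: if $M$ is an equimodular face-defining matrix, then $H=\{x:Mx=d\}$ for the appropriate $d$, and Corollary~\ref{thm:affbi} directly gives that $H$ is principally box-integer. For \ref{Fpbi}$\Rightarrow$\ref{Fequall}: let $M$ be \emph{any} face-defining matrix, so $H=\{x:Mx=d'\}$ for some $d'$; since $H$ is principally box-integer, the converse direction of Corollary~\ref{thm:affbi} forces $M$ to be equimodular. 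This closes the cycle and shows \ref{Fpbi}, \ref{Fequex}, \ref{Fequall} are all equivalent, with Corollary~\ref{thm:affbi} doing essentially all the work once we observe that every face-defining matrix of $H$ defines $H$ itself.

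To fold in \ref{Fequtuex}, I would show \ref{Fequtuex}$\Rightarrow$\ref{Fequex} and \ref{Fequex}$\Rightarrow$\ref{Fequtuex}. The first is trivial because a totally unimodular matrix is in particular equimodular (all nonzero maximal minors lie in $\{\pm1\}$, hence share absolute value $1$), so a TU face-defining matrix is an equimodular one. For the converse, I would invoke Theorem~\ref{equimod}: if $M$ is equimodular with a nonsingular $r\times r$ submatrix $D$, then by point~\ref{equtu} the matrix $D^{-1}M$ is totally unimodular. Since $D^{-1}$ is nonsingular, $D^{-1}M$ has the same row space as $M$, hence is also a face-defining matrix for $H$, and it is TU—giving \ref{Fequtuex}.

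The main obstacle, and the only point requiring genuine care, is the claim that every face-defining matrix of $H$ defines $H$ itself (so that Corollary~\ref{thm:affbi} applies verbatim in both directions). This is where I must use that $H$ is an affine space whose unique face is $H$: by definition a face-defining matrix $M$ satisfies $\aff(H)=\{x:Mx=d\}$, and since $H$ is affine we have $\aff(H)=H$, so indeed $H=\{x:Mx=d\}$. Everything else is a direct application of Corollary~\ref{thm:affbi} and Theorem~\ref{equimod}, so the proof should be short; I would simply make sure to state the existence of a face-defining matrix explicitly so that \ref{Fequall}$\Rightarrow$\ref{Fequex} is not vacuous.
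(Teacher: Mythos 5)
Your proposal is correct and follows essentially the same route as the paper: the equivalence of \ref{Fpbi}, \ref{Fequex}, and \ref{Fequall} via Corollary~\ref{thm:affbi} (using that every face-defining matrix of an affine space $H$ describes $H$ itself), and the equivalence with \ref{Fequtuex} via point~\ref{equtu} of Theorem~\ref{equimod} together with the observation that $D^{-1}A$ is again face-defining. You merely spell out details the paper leaves implicit.
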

\begin{proof}
The equivalence between points~\ref{Fpbi}, \ref{Fequex}, and \ref{Fequall} follows from Corollary~\ref{thm:affbi}.
The equivalence between points~\ref{Fequex} and \ref{Fequtuex} follows from point~\ref{equtu} of Theorem~\ref{equimod}, because if $A\in\mathbb{Q}^{r\times n}$ is face-defining for $H$, then so is $D^{-1}A$ for each nonsingular $r\times r$ submatrix $D$ of $A$.
 %Let $M\in\mathbb{Q}^{r\times n}$ be face-defining for $H$. Then, the face-defining matrices for $H$ are exactly the matrices $DM$, for all nonsingular $r\times r$ matrices $D$.
%By property of the determinant, $M$ is equimodular if and only if $DM$ is equimodular.
%This gives the equivalence between point~\ref{Fequex} and point~\ref{Fequall}.
%The equivalence between point~\ref{Fequex} and point~\ref{Fequtuex} then follows from point~\ref{equtu} of Theorem~\ref{equimod} because, since $M$ has full row rank, one can chose for $D$ to be the inverse of some $r\times r$ nonsingular submatrix of $M$.
%PLus clair avec: Multiplying Mx=b = D^-1Mx= D^-1b and D^-1M is TU ?
\end{proof}

Note that, when $P$ is full-dimensional, facet-defining matrices are composed of a single row and are uniquely determined, up to multiplying by a scalar. 
In general, the number of rows of a face-defining matrix for a face $F$ is $n-\dim(F)$. 
More precisely, the following immediate observation characterizes face-defining matrices.

\begin{obs}\label{face-def_C}
A full row rank matrix $M\in\mathbb{Q}^{k\times n}$ is face-defining for a face~$F$ of a polyhedron $P\subseteq \mathbb{R}^n$ if and only if there exist a vector $d\in\mathbb{Q}^k$ and a family $\mathcal{H}\subseteq F\cap \{x: Mx=d\}$ of $\dim(F)+1$ affinely independent points such that $|\mathcal{H}|+k = n+1$. 
\end{obs}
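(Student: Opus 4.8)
The plan is to prove both implications by a dimension count, relying on two facts: a full row rank $k\times n$ matrix cuts out an affine space of dimension $n-k$, and a face of dimension $\dim(F)$ contains $\dim(F)+1$ affinely independent points but no more. I would write $\delta=\dim(F)$ and note first that, since the prescribed family $\mathcal H$ has $\delta+1$ points, the numerical condition $|\mathcal H|+k=n+1$ is equivalent to $k=n-\delta$; this bookkeeping links the row count of $M$ to the dimension of $F$ and is used in both directions.

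For the forward implication, I would start from the assumption that $M$ is face-defining for $F$, so $\aff(F)=\{x:Mx=d\}$ for some $d\in\mathbb Q^k$ with $M$ of full row rank $k$. Since this affine space is nonempty and $M$ has rank $k$, its dimension is $n-k$; equating with $\dim(\aff(F))=\delta$ gives $k=n-\delta$. As $F\subseteq\aff(F)=\{x:Mx=d\}$, we have $F\cap\{x:Mx=d\}=F$, and because $F$ has dimension $\delta$ it contains a family $\mathcal H$ of $\delta+1$ affinely independent points. This $\mathcal H$ lies in $F\cap\{x:Mx=d\}$ and satisfies $|\mathcal H|+k=(\delta+1)+(n-\delta)=n+1$, which is exactly what is required.

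For the converse, I would assume such $d$ and $\mathcal H$ exist; the relation $|\mathcal H|+k=n+1$ again forces $k=n-\delta$. The set $\{x:Mx=d\}$ is nonempty since it contains $\mathcal H\neq\emptyset$, so being cut out by a full row rank matrix with $n-\delta$ rows it is an affine space of dimension $\delta$. The $\delta+1$ affinely independent points of $\mathcal H$ give $\aff(\mathcal H)$ dimension $\delta$; as $\aff(\mathcal H)\subseteq\{x:Mx=d\}$ and both have dimension $\delta$, they coincide. Likewise $\mathcal H\subseteq F\subseteq\aff(F)$ forces $\aff(\mathcal H)\subseteq\aff(F)$, and since $\aff(F)$ also has dimension $\delta$ these coincide too. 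Chaining the equalities yields $\aff(F)=\aff(\mathcal H)=\{x:Mx=d\}$, so $M$ is face-defining for $F$.

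The argument is essentially routine, and I do not expect a genuine obstacle; the only points needing a little care are the repeated appeal to the principle that an affine subspace contained in an affine space of equal dimension must coincide with it, and the verification that $\{x:Mx=d\}$ is nonempty (guaranteed by $\delta+1\geq1$) so that its dimension is indeed $n-k$. No heavier machinery is required.
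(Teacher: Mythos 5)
Your proof is correct, and since the paper presents this as an ``immediate observation'' with no written proof, your dimension-counting argument is exactly the routine verification the authors intend the reader to supply. The two points you flag as needing care (nonemptiness of $\{x:Mx=d\}$ and the coincidence of nested affine subspaces of equal dimension) are indeed the only places where anything needs to be said, and you handle both.
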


\subsection{Characterizations of Principally Box-Integer Polyhedra}

In this section, we provide several characterizations of principally box-integer polyhedra, the starting point being the following lemma.

\begin{lemma}\label{lm:pbiaff}
A polyhedron $P$ is principally box-integer if and only if $\aff(F)$ is principally box-integer for each face $F$ of $P$.
\end{lemma}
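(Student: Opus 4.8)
The plan is to prove each implication by matching faces of $P$ with faces of its dilations, using that a positive dilation $kP$ has faces exactly the sets $kF$ for $F$ a face of $P$, with $\aff(kF)=k\,\aff(F)$, $\lin(kF)=\lin(F)$, and $\operatorname{relint}(kF)=k\,\operatorname{relint}(F)$, and then invoking Lemma~\ref{boxint}.

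For the direction assuming every $\aff(F)$ is principally box-integer, let $k\in\mathbb Z_{\scriptscriptstyle>0}$ be such that $kP$ is integer; I want to deduce that $kP$ is box-integer through Lemma~\ref{boxint}. So I take a face $F'=kF$ of $kP$, a set $I$, and $p\in\mathbb Z^I$ such that $\aff(F')\cap\{x_i=p_i,\,i\in I\}$ is a singleton $v\in F'$, and I must show $v$ is integer. Since $kP$ is integer, the face $F'$ contains an integer point, so the affine space $\aff(F')$ contains an integer point and is thus integer. As $\aff(F')=k\,\aff(F)$ and $\aff(F)$ is principally box-integer, $\aff(F')$ is box-integer. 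Applying Lemma~\ref{boxint} to the affine space $\aff(F')$ (whose only face is itself) at the singleton $v$ shows $v$ is integer, so $kP$ is box-integer.

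For the converse I argue by contraposition: suppose some face $F$ has $\aff(F)$ not principally box-integer, and I will produce a dilation of $P$ that is integer but not box-integer. Let $d_F$ be the integrality index of $\aff(F)$ from Proposition~\ref{kbi}. By hypothesis there is a multiple $k_1$ of $d_F$ with $k_1\aff(F)$ integer but not box-integer, and then by Lemma~\ref{dil-notbi} every multiple $k'\ge k_1$ of $d_F$ gives a $k'\aff(F)$ that is integer but not box-integer. I fix once and for all a point $\bar w\in\operatorname{relint}(F)$ with a relative ball $B(\bar w,\rho)\subseteq\operatorname{relint}(F)$ inside $\aff(F)$, and the covering radius $R$ of the lattice $\Lambda=\lin(F)\cap\mathbb Z^n$, which has full rank in $\lin(F)$ because $\lin(F)$ is a rational subspace.

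Now I choose $k$ to be a large common multiple of $d_P$ (the integrality index of $P$) and of $k_1$, so that $kP$ is integer while $\aff(kF)=k\,\aff(F)$ is not box-integer; Lemma~\ref{boxint} then yields a noninteger singleton $v\in\aff(kF)$ cut out by integer box constraints $\{x_i=p_i,\,i\in I\}$. The \emph{hard part} is that this $v$ need only lie in $\aff(kF)$ and not in the face $kF$, so it does not yet certify non-box-integrality of $kP$; I fix this by translating along $\Lambda$. For any $\ell\in\Lambda$, the point $v+\ell$ is again a noninteger singleton (now for the integer box $\{x_i=p_i+\ell_i\}$), since $\ell$ is integer and $\ell\in\lin(kF)=\lin(F)$. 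Because $\operatorname{relint}(kF)=k\,\operatorname{relint}(F)$ contains the relative ball $B(k\bar w,k\rho)$ and $k\rho>R$ for $k$ large, while $k\bar w-v\in\lin(F)$, the covering property of $\Lambda$ furnishes $\ell\in\Lambda$ with $v+\ell\in B(k\bar w,k\rho)\subseteq kF$. Thus $v+\ell$ is a noninteger singleton lying in the face $kF$, so by Lemma~\ref{boxint} the integer polyhedron $kP$ is not box-integer, contradicting that $P$ is principally box-integer. I expect this placement of the fractional point into the relative interior of the face (the lattice-translation and covering-radius step) to be the only genuine obstacle; the remainder is bookkeeping with dilations and Lemma~\ref{boxint}.
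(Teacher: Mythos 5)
Your proof is correct. The forward direction coincides with the paper's: both take a singleton of $\aff(kF)\cap\{x_i=p_i,\,i\in I\}$ lying in the face $kF$ of an integer dilation $kP$, observe that $\aff(kF)=k\,\aff(F)$ is an integer dilation of the principally box-integer affine space $\aff(F)$ and hence box-integer, and conclude via Lemma~\ref{boxint}. The converse is where you genuinely diverge. The paper argues directly: it translates $P$ by a rational point $t$ in the relative interior of $F$, so that $\aff(G)$ for $G=F-t$ is a linear subspace invariant under dilation; then any fixed candidate singleton $\bar x\in\aff(G)$ lies in $kG$ for $k$ large (with $k$ chosen so that $kt\in\mathbb{Z}^n$ and $kP$ is integer), and box-integrality of $kQ=kP-kt$ (Observation~\ref{obs:intrans}) forces $\bar x$ to be integer; Observation~\ref{biconesprop} then converts box-integrality of $\aff(G)$ into principal box-integrality of $\aff(F)$. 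You instead contrapose: your fractional singleton $v$ lives in $\aff(kF)$ and therefore moves with $k$, and you push it into the face $kF$ by adding an integer vector of the lattice $\lin(F)\cap\mathbb{Z}^n$, using that the covering radius of this full-rank lattice of the rational subspace $\lin(F)$ is a fixed constant while $kF$ contains a relative ball whose radius grows linearly in $k$; the translated point is still a noninteger singleton for an integer box, so Lemma~\ref{boxint} certifies that $kP$ is integer but not box-integer. Both mechanisms resolve the same obstacle---the bad point sits in the affine hull but not a priori in the face---and both are sound. The paper's translation trick is lighter, needing no lattice geometry, while your route has the mild virtue of never moving the polyhedron and of explicitly exhibiting the offending dilation $kP$; your appeal to Lemma~\ref{dil-notbi} and Proposition~\ref{kbi} to arrange ``integer but not box-integer'' for a common multiple is also a correct, if slightly heavier, piece of bookkeeping than the paper needs.
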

\begin{proof} 
Let $P$ be a polyhedron such that the affine spaces generated by its faces are all principally box-integer. Then, when $k\in\mathbb Z_{\scriptscriptstyle >0}$ is such that $kP$ is integer, all the affine spaces generated by the faces of $kP$ are box-integer.
Therefore, by Lemma~\ref{boxint}, if $F$ is a face of such a $kP$ and $p$ is an integer vector such that $\aff(F)\cap \{x_i=p_i, i\in I\}$ is a singleton in $F$, then this singleton is integer. 
Then, by the other direction of Lemma~\ref{boxint}, $kP$ is box-integer, thus $P$ is principally box-integer.

Conversely, let $P$ be a principally box-integer polyhedron and $F$ be a face of $P$. If $F$ is a singleton, then $\aff(F)=F$ is a singleton, thus obviously principally box-integer.
Otherwise, let $t$ be a rational point in the relative interior of $F$, let $G=F-t$ and $Q=P-t$.
By point 3 of Observation~\ref{biconesprop}, it suffices to show that $\aff(G)$ is box-integer.
Let $p$ be an integer vector such that $\aff(G)\cap \{x_i=p_i, i\in I\}$ is a singleton $\bar x$ in $\aff(G)$.
By the hypothesis made on $t$, there exists $k\in\mathbb Z_{\scriptscriptstyle >0}$ such that $\bar x\in kQ$.
Moreover, such a $k$ can be chosen so that $kt$ is integer and $kP$ is an integer  polyhedron.
Since $P$ is principally box-integer, $kP$ is box-integer and so is $kQ=kP-kt$ by Observation~\ref{obs:intrans}. 
Applying Lemma~\ref{boxint} to the face $kG$ of $kQ$ yields $\bar x$ integer.
By applying the other direction of Lemma~\ref{boxint} to the unique face $\aff(G)$ of $\aff(G)$, we obtain that $\aff(G)$ is box-integer.
\end{proof}

\begin{theorem}\label{pbiEM} For a polyhedron $P$, the following statements are equivalent.
\begin{enumerate}
\item\label{pbiEM1} The polyhedron $P$ is principally box-integer.
\item\label{pbiEM3} Every minimal tangent cone of $P$ is principally box-integer.
\item\label{pbiEM2} Every face of $P$ has an equimodular face-defining matrix.
\end{enumerate}
\end{theorem}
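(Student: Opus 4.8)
The plan is to establish the two equivalences (\ref{pbiEM1})$\Leftrightarrow$(\ref{pbiEM2}) and (\ref{pbiEM1})$\Leftrightarrow$(\ref{pbiEM3}) separately, in each case reducing the question to the affine hulls of faces by means of Lemma~\ref{lm:pbiaff}. The whole argument is formal once the right dictionary between faces of $P$ and faces of its minimal tangent cones is in place.

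For (\ref{pbiEM1})$\Leftrightarrow$(\ref{pbiEM2}), I would first apply Lemma~\ref{lm:pbiaff} to restate the principal box-integrality of $P$ as the principal box-integrality of $\aff(F)$ for every face $F$ of $P$. By definition a face-defining matrix of $F$ is a full row rank matrix describing $\aff(F)$, so the affine space $\aff(F)$ and the face $F$ have exactly the same face-defining matrices. Observation~\ref{Fequimod}, applied to the affine space $H=\aff(F)$, then says that $\aff(F)$ is principally box-integer if and only if $F$ admits an equimodular face-defining matrix. Chaining these two facts gives the equivalence at once.

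For (\ref{pbiEM1})$\Leftrightarrow$(\ref{pbiEM3}), the core is a geometric correspondence: for each minimal face $F$ of $P$, the affine hulls of the faces of the minimal tangent cone $C_F=\{x:A_Fx\le b_F\}$ are exactly the affine hulls $\aff(F')$ as $F'$ ranges over the faces of $P$ containing $F$. Granting this, Lemma~\ref{lm:pbiaff} applied to the polyhedron $C_F$ shows that $C_F$ is principally box-integer if and only if $\aff(F')$ is principally box-integer for every face $F'\supseteq F$ of $P$. Since every face of $P$ contains a minimal face, ranging over all minimal faces $F$ shows that every minimal tangent cone is principally box-integer if and only if $\aff(F')$ is principally box-integer for every face $F'$ of $P$, which by Lemma~\ref{lm:pbiaff} is precisely the principal box-integrality of $P$.

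The hard part is the geometric correspondence. To prove it, I would fix a point $\bar x$ in the relative interior of the minimal face $F$ and exploit that, in a small enough neighborhood $N$ of $\bar x$, the rows of $Ax\le b$ that are not tight at $F$ remain strictly satisfied, so that $P\cap N=C_F\cap N$. If $F'$ is a face of $P$ with $F\subseteq F'$, then every row tight for $F'$ is tight for the smaller face $F$ as well, so $A_{F'}$ is a subset of the rows of $A_F$; imposing exactly these equalities in $C_F$ produces a face whose affine hull is $\{x:A_{F'}x=b_{F'}\}=\aff(F')$. Conversely, any face $G$ of $C_F$ is cut out by setting to equality some subset $J$ of the rows of $A_F$; pushing $\bar x$ slightly into the relative interior of $G$ stays inside $N$, hence inside $P$, and lands in the relative interior of a face $F'$ of $P$ whose tight rows are exactly $J$, so that $\aff(F')=\aff(G)$. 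The delicate point is precisely this local-coincidence step, which must be invoked to guarantee that the tight constraints are correctly identified on both sides; once it is verified that the two assignments respect affine hulls, the equivalence follows from the machinery already established.
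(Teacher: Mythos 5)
Your proposal is correct and follows essentially the same route as the paper: both equivalences are reduced via Lemma~\ref{lm:pbiaff} to the principal box-integrality of the affine hulls of faces, with (\ref{pbiEM1})$\Leftrightarrow$(\ref{pbiEM2}) handled by Corollary~\ref{thm:affbi} (equivalently, Observation~\ref{Fequimod}) and (\ref{pbiEM1})$\Leftrightarrow$(\ref{pbiEM3}) by the correspondence matching faces of $P$ with faces of its minimal tangent cones having the same affine hull. The only difference is that you spell out the proof of that correspondence (via the local coincidence $P\cap N=C_F\cap N$ near a relative interior point of a minimal face), which the paper asserts without proof; your verification of it is sound.
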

\begin{proof}
Each face of $P$ is contained in a face of some minimal tangent cone of $P$ having the same affine hull. 
Conversely, each face of a minimal tangent cone of $P$ contains some face of $P$ having the same affine hull.
Therefore, Lemma~\ref{lm:pbiaff} gives the equivalence between point~\ref{pbiEM1} and point~\ref{pbiEM3}.
The equivalence between point~\ref{pbiEM1} and point~\ref{pbiEM2} is immediate by Corollary~\ref{thm:affbi} and Lemma~\ref{lm:pbiaff}. 
\end{proof}

The minimal faces of a polyhedron being affine spaces, Lemma~\ref{lm:pbiaff} has a fully box-integer counterpart. Moreover, by point~\ref{conesiii} of Observation~\ref{biconesprop}, so does the equivalence between point~\ref{pbiEM1} and point~\ref{pbiEM2} of Theorem~\ref{pbiEM}. This gives the following corollary.

\begin{corollary}
For a polyhedron $P$, the following statements are equivalent.
\begin{enumerate}
\item\label{cpbiEM1} The polyhedron $P$ is fully box-integer.
\item\label{cpbiEM3} Every minimal tangent cone of $P$ is box-integer.
\item\label{cpbiEM2} For each face $F$ of $P$, $\aff(F)$ is fully box-integer.
\end{enumerate}
\end{corollary}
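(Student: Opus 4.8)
The plan is to read the corollary as the ``fully box-integer'' shadow of Theorem~\ref{pbiEM}, using the decomposition recorded just before Proposition~\ref{pbi-defs}: a polyhedron---and hence also an affine space or a translated cone---is fully box-integer if and only if it is both principally box-integer and integer. So the strategy is to take each equivalence of Theorem~\ref{pbiEM} and add a matching integrality layer on both sides. The bookkeeping for this layer is that a polyhedron is integer exactly when each of its minimal faces contains an integer point; since every face and every minimal tangent cone contains a minimal face, and since an affine space is integer precisely when it contains a lattice point, this yields that $P$ is integer $\iff$ $\aff(F)$ is integer for every face $F$ $\iff$ every minimal tangent cone of $P$ is integer, the last point using that the minimal face of the tangent cone $C_F$ attached to a minimal face $F$ is $\aff(F)=F$ itself.

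For \eqref{cpbiEM1}$\iff$\eqref{cpbiEM2} I would establish the fully box-integer counterpart of Lemma~\ref{lm:pbiaff}. Expanding ``fully box-integer $=$ principally box-integer $+$ integer'' on the left, $P$ is fully box-integer iff $P$ is principally box-integer and integer; by Lemma~\ref{lm:pbiaff} the first condition says each $\aff(F)$ is principally box-integer, and by the bookkeeping above the second says each $\aff(F)$ is integer. Combining the two conditions face by face, each $\aff(F)$ is principally box-integer and integer, that is, fully box-integer, which is \eqref{cpbiEM2}.

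For \eqref{cpbiEM1}$\iff$\eqref{cpbiEM3} I would argue identically through the minimal tangent cones. By the equivalence of \eqref{pbiEM1} and \eqref{pbiEM3} in Theorem~\ref{pbiEM}, $P$ is principally box-integer iff every minimal tangent cone is principally box-integer, so adding integrality gives that $P$ is fully box-integer iff every minimal tangent cone is principally box-integer and integer. The upgrade to \eqref{cpbiEM3} is point~\ref{conesiii} of Observation~\ref{biconesprop}: a minimal tangent cone is a translate of a cone, hence fully box-integer iff box-integer, and ``fully box-integer'' is exactly ``principally box-integer and integer''; so this condition, imposed cone by cone, coincides with every minimal tangent cone being box-integer.

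The only delicate point, and the main thing to get right, is the integrality bookkeeping: one must be sure that imposing integrality face by face (resp.\ cone by cone) is genuinely the same as $P$ being integer, so that the principal and the integral halves of ``fully box-integer'' decompose along the \emph{same} faces. This rests on integrality being tested only on minimal faces, on every face and every minimal tangent cone meeting a minimal face, and on an affine space being integer iff it contains a lattice point---each routine, but all needed to line the two halves up.
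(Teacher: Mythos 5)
Your proposal is correct and follows essentially the same route as the paper: the paper's (very terse) proof likewise lifts Lemma~\ref{lm:pbiaff} and the equivalence of points~\ref{pbiEM1} and~\ref{pbiEM3} of Theorem~\ref{pbiEM} to the fully box-integer setting, using that minimal faces are affine spaces for the integrality bookkeeping and point~\ref{conesiii} of Observation~\ref{biconesprop} to replace ``fully box-integer'' by ``box-integer'' for the (translated) minimal tangent cones. Your explicit decomposition ``fully box-integer $=$ principally box-integer $+$ integer'' is exactly the remark the paper records when defining full box-integrality, so you have simply filled in the details the authors left implicit.
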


%\begin{corollary}\label{loulou}
%A cone $C=\{x\in\mathbb{R}^n: Ax\geq \mathbf{0}\}$ is box-integer if and only if all its faces are equimodular.
%\end{corollary}

%%%%%%%%%%%%%%%%%%%%%%%%%%%%%%%%%%%%%%%%%%%%%%%%%%%%%%%%%%%%%%%%%%%%%%%%%%%%%%%%%%%%%%%%%%%%%%%%%%%%%%%%%%%%%%%%%%%%%%%%%%%%%%%%%%%%%%%%%%%%%%%%%%%%
%%%%%%%%%%%%%%%%%%%%%%%%%%%%%%%%%%%%%%%%%%%%%%%%%%%%%%%%%%%%%%%%%%%%%%%%%%%%%%%%%%%%%%%%%%%%%%%%%%%%%%%%%%%%%%%%%%%%%%%%%%%%%%%%%%%%%%%%%%%%%%%%%%%%
%%%%%%%%%%%%%%%%%%%%%%%%%%%%%%%%%%%%%%%%%%%%%%%%%%%%%%%%%%%%%%%%%%%%%%%%%%%%%%%%%%%%%%%%%%%%%%%%%%%%%%%%%%%%%%%%%%%%%%%%%%%%%%%%%%%%%%%%%%%%%%%%%%%%

\section{Box-Totally Dual Integral Polyhedra}\label{s:boxtdi}

%%%%%%%%%%%%%%%%%%%%%%%%%%%%%%%%%%%%%%%%%%%%%%%%%%%%%%%%%%%%%%%%%%%%%%%%%%%%%%%%%%%%%%%%%%%%%%%%%%%%%%%%%%%%%%%%%%%%%%%%%%%%%%%%%%%%%%%%

\subsection{New Characterizations of Box-TDI Polyhedra}\label{ss:boxtdi}

The main result of this section
is that the notions of principal box-integrality and box-TDIness coincide---see Theorem~\ref{main} below. Combined with Theorem~\ref{pbiEM}, this provides several new characterizations of box-TDI polyhedra.

\begin{theorem}\label{main}
A polyhedron is box-TDI if and only if it is principally box-integer.
\end{theorem}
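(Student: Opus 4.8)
The plan is to prove the two implications separately, relying on the machinery already developed.

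For the direction that box-TDI implies principally box-integer, I would first observe that box-TDIness is preserved under dilation: writing the dual program \eqref{eq:boxTDI} for $\{x:Ax\le kb\}$ and substituting $x=kx'$ turns the primal into $k$ times the box-constrained program for $\{x:Ax\le b\}$ over the rational box $[\ell/k,u/k]$, whose dual feasible region is identical and whose integer optimum (granted by box-TDIness of $Ax\le b$ for \emph{rational} boxes) is therefore also optimal and integer for the dilated program. Now suppose $P$ is box-TDI and $kP$ is integer. Then $kP$ is box-TDI and integer, so by~\cite{edgi} it is described by a TDI system with integer right-hand side, which by Theorem~\ref{tdiboxtdisystem} is again box-TDI. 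Intersecting with any integer box $\{\ell\le x\le u\}$ keeps the system TDI with integer right-hand side, hence defines an integer polyhedron by the Edmonds--Giles theorem. Thus $kP$ is box-integer, and $P$ is principally box-integer.

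For the converse, principally box-integer implies box-TDI, I would reduce to cones. By Theorem~\ref{pbiEM} together with point~\ref{conesiv} of Observation~\ref{biconesprop}, $P$ is principally box-integer if and only if each homogeneous minimal tangent cone $\{x:A_Fx\le\mathbf 0\}$ is box-integer. The crux is then a \emph{cone lemma}: a cone $C=\{x:Ax\le\mathbf 0\}$ is box-TDI if and only if it is box-integer. Granting this, each minimal tangent cone is box-TDI, and it remains to assemble this into box-TDIness of $P$: using the known local characterization of box-TDIness (Cook~\cite{co},~\cite[Theorem~22.9]{sc}), the dual optimum of any box-constrained program over $P$ is supported on constraints active at a minimal face, so integrality of the dual programs attached to each minimal tangent cone yields the integrality required for $P$.

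It remains to prove the cone lemma. The implication box-TDI $\Rightarrow$ box-integer is immediate: since the right-hand side $\mathbf 0$ is integer, intersecting the box-TDI system $Ax\le\mathbf 0$ with any integer box keeps it TDI with integer right-hand side, hence integer by Edmonds--Giles. For the reverse implication, box-integer $\Rightarrow$ box-TDI, I would pass to the polar cone $C^*=\cone(\text{rows of }A)$. Box-integrality is inherited by the polar: via the equimodularity correspondence of Corollary~\ref{thm:affbi} and Theorem~\ref{equimod}, box-integrality of $C$ amounts to equimodularity of its face-defining matrices, a condition that is symmetric under the inclusion-reversing duality between the faces of $C$ and those of $C^*$. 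The integrality required by the box-TDI dual program~\eqref{eq:boxTDI}, whose feasible set $\{A^\top y + r - s = w,\ y,r,s\ge\mathbf 0\}$ combines the generators of $C^*$ with signed unit vectors, should then follow from the box-integrality of $C^*$ after using the reformulations $\widetilde P$ and $P_\pm$ of Lemma~\ref{pboxint} to absorb the splitting $r-s$ and the nonnegativity constraints into a single box-integrality statement.

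The main obstacle is precisely this reverse direction of the cone lemma, box-integer $\Rightarrow$ box-TDI. Both establishing that box-integrality passes to the polar cone and converting box-integrality of the polar into the dual integrality of~\eqref{eq:boxTDI} are delicate, and the $\widetilde P$ and $P_\pm$ constructions of Lemma~\ref{pboxint} are exactly the tools I expect to be needed to reconcile the free and sign-constrained dual variables with a box-integrality property of a single cone.
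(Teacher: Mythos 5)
Your overall architecture matches the paper's: the forward direction via dilation-invariance of box-TDIness and Edmonds--Giles is correct and complete (the paper instead routes this direction through the cone reduction too, but your argument is fine and arguably more direct), and for the converse you correctly reduce to homogeneous minimal tangent cones and identify the cone lemma ``box-integer $\Leftrightarrow$ box-TDI'' as the crux, together with the right tools ($(C^*-w)_\pm$ and Lemma~\ref{pboxint}). However, the hard implication of the cone lemma is left with two genuine gaps. First, you take as a \emph{prerequisite} that box-integrality passes from $C$ to $C^*$, justified only by an appeal to a ``symmetric'' duality of equimodular face-defining matrices; that symmetry (equivalently, that the orthogonal complement of a subspace with a TU basis again has a TU basis, matched through the inclusion-reversing face correspondence) is true but is itself a nontrivial claim you never prove. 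The paper's key structural trick is that you do not need it as input: it proves only the single implication ``$C^*$ box-integer $\Rightarrow$ $C$ box-TDI'' (plus the easy ``box-TDI $\Rightarrow$ box-integer''), and then applies that implication to $C^*$ itself, using $C^{**}=C$, to close a cycle of four equivalences --- the polarity preservation of box-integrality then falls out as a corollary (and is what later yields Corollary~\ref{cor:pol}) rather than being an ingredient.

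Second, even granting box-integrality of $C^*$, your conversion into integrality of the dual \eqref{eq:boxTDI} is only gestured at. The actual argument is: project the dual feasible region onto $(r,s)$ to get exactly $Q=(C^*-w)_\pm$, which is box-integer by Lemma~\ref{pboxint}, hence integer, giving an integer optimal $(\bar r,\bar s)$; but this says nothing yet about the $z$-coordinates, and the definition of box-TDIness requires the \emph{entire} dual solution to be integer. To lift $(\bar r,\bar s)$ to an integer $(\bar z,\bar r,\bar s)$ one must first assume, without loss of generality by \cite[Theorem~22.6(i)]{sc} and Theorem~\ref{tdiboxtdisystem}, that $Ax\leq\mathbf 0$ is TDI, and then apply TDIness to the integer objective $\bar w=w-\bar r+\bar s$ to obtain an integer $\bar z$ with $A^\top\bar z=\bar w$. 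Your proposal omits this normalization and the lifting step entirely, and your closing paragraph concedes both difficulties; as written, the converse direction is a plan rather than a proof.
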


\begin{proof} The proof relies on Lemmas~\ref{lem:polyBTDIminconesBTDI} and~\ref{boxtdicone}, which are proven below.

Lemma~\ref{lem:polyBTDIminconesBTDI} states that a polyhedron is box-TDI if and only if all its minimal tangent cones ares box-TDI. By Theorem~\ref{pbiEM}, a polyhedron is principally box-integer if and only if all its minimal tangent cones are principally box-integer. Hence it is enough to prove Theorem~\ref{main} for cones. 

Lemma~\ref{boxtdicone} states that a cone of the form $\{x:Ax\leq\mathbf 0\}$ is box-TDI if and only if it is box-integer. 
Then, by point~\ref{conesiv} of Observation~\ref{biconesprop}, and since box-TDIness is preserved under rational translation, a cone is box-TDI if and only if it is principally box-integer. 
\end{proof}
%%%%%%%%%%%%%%%%%%%%%%%%%%%%%%%%%%%%%%%%%%%%%%%%%%%%%%%%%%%%%%%%%%%%%%%%%%%%%%%%%%%%%%%%%%%%%%%%%%%%%%%%%%%%%%%%%%%%%%%%%%%%%%%%%%%%%%%%
%%%%%%%%%%%%%%%%%%%%%%%%%%%%%%%%%%%%%%%%%%%%%%%%%%%%%%%%%%%%%%%%%%%%%%%%%%%%%%%%%%%%%%%%%%%%%%%%%%%%%%%%%%%%%%%%%%%%%%%%%%%%%%%%%%%%%%%%

The following lemma seems somewhat implicitely known in the literature, but is not stated explicitely to the best of our knowledge. For the sake of completeness, we provide a proof which relies only on the definitions. It can also be shown using known characterizations of box-TDI polyhedra, such as the one by Cook~{\cite[Theorem~22.9]{sc}}.

\begin{lemma}\label{lem:polyBTDIminconesBTDI}
A polyhedron is box-TDI if and only if all its minimal tangent cones are.
\end{lemma}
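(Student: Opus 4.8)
The plan is to prove both directions of the equivalence by relating the box-TDIness of $P=\{x:Ax\leq b\}$ directly to that of its minimal tangent cones, using only the definition of box-TDIness via equation~\eqref{eq:boxTDI}. The key structural fact I would rely on is that TDIness, and hence box-TDIness, is a \emph{local} property: whether the dual program~\eqref{eq:boxTDI} has an integer optimum for a given objective $w$ depends only on the face of $P$ that the primal optimum attains, which in turn is governed by the tangent cone at that face. Concretely, for a minimal face $F$ with minimal tangent cone $C_F=\{x:A_Fx\leq b_F\}$, the behavior of the system near $F$ is the same as that of the system $A_Fx\leq b_F$.

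For the forward direction, suppose $P$ is box-TDI. Let $C_F=\{x:A_Fx\leq b_F\}$ be a minimal tangent cone. I would show $A_Fx\leq b_F$ is box-TDI by taking any integer $w$ and rational $\ell,u$ for which the dual~\eqref{eq:boxTDI} (written for $A_F,b_F$) is finite, and pushing the situation back to $P$. The point is that the rows of $A$ not tight for $F$ are, near the relative interior of $F$, strictly satisfied; by restricting the box $\ell\leq x\leq u$ to a small enough region around a point of $\aff(F)$ (and using that box-TDIness is preserved under rational translation and that shifting $b$ does not affect the integrality of the optimal dual solution's support on the tight rows), the optimal primal face for $w$ over $P$ lies inside $C_F$, so an integer dual optimum for $P$ restricts to one for $C_F$ — the multipliers on the non-tight rows being zero by complementary slackness.

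For the converse, suppose every minimal tangent cone is box-TDI, and fix integer $w$ and rational $\ell,u$ making~\eqref{eq:boxTDI} finite for $P$. The primal optimum $\max\{w^\top x:Ax\leq b,\ \ell\leq x\leq u\}$ is attained on some face; I would locate a minimal face $F^\ast$ of the optimal face of the bounded system, which sits inside a minimal tangent cone $C_{F}$ of $P$ whose defining rows $A_{F}$ are exactly the constraints of $P$ active at $F^\ast$. Because $C_{F}=\{x:A_{F}x\leq b_{F}\}$ is box-TDI, the dual for $C_{F}$ with the same $w,\ell,u$ admits an integer optimum $(y_F,r,s)$; I then extend $y_F$ to a dual-feasible vector for $P$ by assigning $0$ to all rows of $A$ outside $A_F$. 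Complementary slackness guarantees this extension is optimal for~\eqref{eq:boxTDI} over $P$, and it is integer, establishing box-TDIness of $P$.

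The main obstacle I anticipate is making the localization rigorous: precisely matching the optimal dual solutions of~\eqref{eq:boxTDI} for $P$ with those for $C_F$, and ensuring the correspondence respects integrality in both directions. One must verify that an optimal dual solution of $P$ indeed has zero multipliers on non-tight rows (clean from complementary slackness), and conversely that every relevant $(w,\ell,u)$ configuration for the cone genuinely arises from a configuration for $P$ — this requires choosing the box and an integer translation carefully so that the optimal primal face for $P$ coincides with the one selected inside the cone, while controlling that translating $b_F$ by integers keeps the dual integrality intact. Handling the box variables $r,s$ uniformly in this transfer, rather than just the $A$-multipliers $y$, is the delicate bookkeeping step.
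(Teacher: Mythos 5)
Your proposal follows essentially the same route as the paper's proof: for the forward direction, localize the boxed problem near a point $v$ of the minimal face so that the non-tight rows are strictly slack at the optimum and their dual multipliers vanish by complementary slackness, and for the converse, take an integer dual optimum for the minimal tangent cone containing the optimal face and extend it by zeros to a dual optimum for $P$. The ``delicate bookkeeping'' you defer in the forward direction is exactly what the paper's explicit homothety $\ell'=v+\lambda(\ell-v)$, $u'=v+\lambda(u-v)$, $y^\star=v+\lambda(x^\star-v)$ accomplishes: it transports the cone's boxed problem to a boxed problem over $P$ with the same dual constraint $A^\top y+r-s=w$, so the integer dual optimum transfers directly.
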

\begin{proof}
Let $P=\{x:Ax\leq b\}$ be a polyhedron of $\mathbb{R}^n$ and $w\in\mathbb{Z}^n$. We will denote $(P_{\ell,u})=\max \{wx: Ax\leq b, \ell\leq x \leq u\}$ and $(P^F_{\ell,u})=\max \{wx: A_{I}x\leq b_I, \ell\leq x \leq u\}$ for a minimal face $F$ of $P$ where $I$ is the index set of the tight rows for $F$.

To prove the first direction, suppose that the system $Ax\leq b$ is box-TDI. Let $F$ be a minimal face of $P$, $v\in F$ and let $x^\star$ be an optimal solution of $(P^F_{\ell,u})$. Since $a_i v< b_i$ for all $i\notin I$, there exists $\lambda > 0$ such that $y^\star=v+\lambda(x^\star-v)$ belongs to $P$ and $a_i y^\star < b_i$ for all $i\notin I$. Let $\ell'=v+\lambda(\ell-v)$ and $u'=v+\lambda(u-v)$. Then, $y^\star$ is an optimal solution of $(P_{\ell',u'})$, as otherwise $x^\star$ would not be an optimal solution of $(P^F_{\ell,u})$. Let $(z^\star,r^\star,s^\star)$ be an integer optimal solution of the dual of $(P_{\ell',u'})$. By complementary slackness, denoting by $z_I^\star$ the vector obtained from $z^\star$ by deleting the coordinates not in $I$, without loss of generality we have $z^\star=(z_I^\star,\mathbf{0})$. 
Now, since $w^\top y^\star = b^\top z^\star +u'^\top r^\star -\ell'^\top s^\star$, one can check that $w^\top x^\star = b_I^\top z_I^\star +u^\top r^\star -\ell^\top s^\star$, by applying the definition of $y^\star$, $u'$ and $\ell'$, $b^\top z^\star=b_I^\top z_I^\star$, $w=A^\top z^\star+r^\star-s^\star$,
$A^\top z^\star=A_I^\top z_I^\star$, and $A_Iv=b_I$. Therefore, $(z_I^\star,r^\star,s^\star)$ is an integer optimal solution of the dual $\min \{b_I^\top z + u^\top r - \ell^\top s :A_I^\top z_I + r - s = w, \, z_I, r,s \geq \mathbf 0\}$ of $(P^F_{\ell,u})$.

For the other direction, let $H$ be the face of $P$ composed of all the optimal solutions of $(P_{\ell,u})=\max \{wx: Ax\leq b, \ell\leq x \leq u\}$ and let $F$ be a minimal face of $P$ contained in $H$ whose tight rows are indexed by $I$. Let $(z_I^\star,r^\star,s^\star)$ be an integer optimal solution of the dual of~$(P^F_{\ell,u})$. Then, one can check that extending $z_I^\star$ to a vector $z^\star=(z_I^\star,\mathbf{0})$ of $\mathbb{R}^m$ yields an integer optimal solution $(z^\star,r^\star,s^\star)$ of the dual of $(P_{\ell,u})$. 
\end{proof}

The following result reveals that cones behave nicely with respect to box-TDIness. 
It is already known that a box-TDI cone of the form $\{x: Ax\leq \mathbf{0}\}$ is box-integer~\cite[Equation~(5.82)]{scbig}.
Suprisingly, the converse holds and these properties are passed on to the polar.

\begin{lemma}\label{boxtdicone}
For a cone $C=\{x: Ax\leq \mathbf{0}\}$ of $\mathbb{R}^n$, the following statements are equivalent.
\begin{enumerate}%[(i)]
%\hspace{2.5cm}
\begin{minipage}{.22\textwidth}
	\item\label{cboxtdi} $C$ is box-TDI,
\end{minipage}
\begin{minipage}{.22\textwidth}
	\item\label{cboxint} $C$ is box-integer,
\end{minipage}
\begin{minipage}{.22\textwidth}
%\begin{enumerate}%[(i)]
	\item\label{c*boxtdi} $C^*$ is box-TDI,
\end{minipage}
\begin{minipage}{.22\textwidth}
	\item\label{c*boxint} $C^*$ is box-integer.
%\end{enumerate}
\end{minipage}
\end{enumerate}
\end{lemma}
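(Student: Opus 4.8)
My plan is to deduce the whole equivalence from two facts about an \emph{arbitrary} cone $D=\{x:Mx\le\mathbf 0\}$: (i) $D$ is box-integer if and only if it is box-TDI, and (ii) box-integrality is invariant under polarity, that is, $D$ is box-integer if and only if $D^*$ is. Indeed, (i) applied to $C$ gives \ref{cboxtdi}$\Leftrightarrow$\ref{cboxint}, (i) applied to the cone $C^*$ gives \ref{c*boxtdi}$\Leftrightarrow$\ref{c*boxint}, and (ii) gives \ref{cboxint}$\Leftrightarrow$\ref{c*boxint}; together these close the diagram. The direction ``box-TDI $\Rightarrow$ box-integer'' of (i) is already known~\cite[(5.82)]{scbig}, so the two new ingredients are its converse and the polar-invariance (ii).

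For (ii) I would reason through equimodularity. By point~\ref{conesi} of Observation~\ref{biconesprop} and Theorem~\ref{pbiEM}, a cone is box-integer exactly when each of its faces has an equimodular face-defining matrix, a property which by Observation~\ref{Fequimod} depends only on the linear hull of the face. For $D$ itself one may take the tight rows $M_F$ as a face-defining matrix of a face $F$, so that the relevant subspace is $\lin(F)^{\perp}$. The faces of $D^*$ are exactly the normal cones of the faces of $D$, and the normal cone of $F$ has linear hull $\lin(F)^{\perp}$; hence its face-defining matrices have row space $\lin(F)$. Thus $D$ is box-integer iff every $\lin(F)^{\perp}$ carries an equimodular basis, while $D^*$ is box-integer iff every $\lin(F)$ does. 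The two conditions coincide by the algebraic heart of the argument: a rational subspace admits an equimodular---equivalently, by Theorem~\ref{equimod}, a totally unimodular---basis if and only if its orthogonal complement does (if $[\,I\mid B\,]$ is a totally unimodular basis of $V$, then $[\,-B^{\top}\mid I\,]$ is one of $V^{\perp}$).

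The main obstacle is the converse of (i), ``box-integer $\Rightarrow$ box-TDI'', where integral dual solutions must be manufactured. Since being box-TDI is a property of the polyhedron, it suffices to exhibit one box-TDI system describing $D$; I would take $Hx\le\mathbf 0$, where the rows of $H$ form a finite set of integer vectors generating the monoid $D^{*}\cap\mathbb Z^{n}$ (and hence the cone $D^{*}$). As these rows generate $D^{*}$, this system describes $(D^{*})^{*}=D$. Fix $w\in\mathbb Z^{n}$ and rational $\ell\le u$ with finite optimum; the relevant dual is $\min\{u^{\top}r-\ell^{\top}s:H^{\top}y+r-s=w,\ y,r,s\ge\mathbf 0\}$, in which the cost of $y$ vanishes. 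For a fixed $v=H^{\top}y\in D^{*}$, optimizing $r,s\ge\mathbf 0$ under $r-s=w-v$ yields coordinatewise the value $f(v)=\sum_i\max\{u_i(w_i-v_i),\,\ell_i(w_i-v_i)\}$, so the dual optimum equals $\min_{v\in D^{*}}f(v)$. It then suffices to find an \emph{integral} minimizer $v^{\star}\in D^{*}$: as an integer point of $D^{*}$ it is a nonnegative integer combination of the generators, so $v^{\star}=H^{\top}y^{\star}$ for some integral $y^{\star}\ge\mathbf 0$, and $r^{\star}=(w-v^{\star})^{+}$, $s^{\star}=(w-v^{\star})^{-}$ complete an integral optimal dual solution.

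To obtain such a minimizer I would slice $D^{*}$ by the hyperplanes $v_i=w_i$. On each cell $\Gamma_S=D^{*}\cap\{v_i\le w_i:i\in S\}\cap\{v_i\ge w_i:i\notin S\}$ the function $f$ agrees with a single linear function, and $\Gamma_S$ is the intersection of $D^{*}$ with a box whose finite bounds are the integers $w_i$; hence $\Gamma_S$ is an integer polyhedron by point~\ref{corboxintiii} of Corollary~\ref{corboxint}. A minimizer of $f$ over $D^{*}$ lies in some $\Gamma_S$, where minimizing $f$ amounts to minimizing a linear function over $\Gamma_S$; that minimum is attained on a minimal face of $\Gamma_S$, which contains an integer point, and this point is the sought $v^{\star}$. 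The delicate steps I would write out in full are this integral decomposition of the piecewise-linear objective---exactly where box-integrality with infinite box bounds is used---and the verification that the monoid generators turn integer points of $D^{*}$ into integral choices of $y$.
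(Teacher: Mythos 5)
Your argument is correct, and while the engine of the hard implication coincides with the paper's, you assemble the equivalence differently and one ingredient is genuinely new relative to the paper's proof. The paper closes the cycle box-TDI $\Rightarrow$ box-integer (via Theorem~\ref{tdiboxtdisystem} and integrality of TDI systems with integer data), then proves ``$C^*$ box-integer $\Rightarrow$ $C$ box-TDI'' and gets the remaining arrows by applying that implication to $C^*$; in particular it never proves the polarity invariance of box-integrality directly, but obtains it as a byproduct of the cycle, and only afterwards extracts the subspace statement as Corollary~\ref{cor:pol} \emph{from} the lemma. You instead establish that invariance up front by purely linear-algebraic means---the antitone correspondence sending a face $F$ to its normal cone, plus the fact that a rational subspace has a totally unimodular (equivalently, equimodular) basis if and only if its orthogonal complement does---which is essentially the mechanism of Corollary~\ref{cor:pol} proved independently of the lemma. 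For the hard direction both proofs are morally identical: describe $D$ by a system whose rows form a Hilbert basis of $D^*$ (the paper's ``WLOG $Ax\leq\mathbf 0$ is TDI''), reduce the dual to producing an integral point of $D^*$ minimizing a piecewise-linear functional, and lift it to an integral $y$ by the Hilbert-basis property. The technical execution differs: the paper projects the dual feasible region onto $(r,s)$, recognizes it as $(C^*-w)_\pm$, and invokes point~\ref{pmint} of Lemma~\ref{pboxint} to transfer box-integrality, whereas you decompose $D^*$ into the orthants around $w$ and use integrality of $D^*$ intersected with half-infinite integer boxes via Corollary~\ref{corboxint}. Your route is more explicit and dispenses with Lemma~\ref{pboxint}, at the cost of the orthant case analysis and of needing the polarity invariance before the TDI argument (no circularity, since you prove it independently); the paper's is shorter because its projection lemma was prepared in advance. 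In a full write-up you should take a maximal linearly independent subset of the tight rows where you call $M_F$ face-defining, and justify (or cite) that every face of $D^*$ arises as the normal cone of a face of $D$.
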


\begin{proof} By \cite[Theorem 22.6(i)]{sc}, we can assume that $Ax\leq \mathbf{0}$ is a TDI system. 

Suppose that $C$ is box-TDI. By Theorem~\ref{tdiboxtdisystem}, the system $Ax\leq \mathbf{0}$ is box-TDI.
Hence, for all $\ell,u\in \mathbb{Z}^n$, the system $Ax\leq \mathbf{0}, \ell\leq x \leq u$ is TDI. As $\ell$ and $u$ are integer, this system defines an integer polyhedron by~\cite[Corollary 22.1c]{sc}.
Therefore, $C$ is box-integer, and we get \eqref{cboxtdi}$\Rightarrow$\eqref{cboxint}. This also gives \eqref{c*boxtdi}$\Rightarrow$\eqref{c*boxint}.

All that remains to prove is \eqref{c*boxint}$\Rightarrow$\eqref{cboxtdi}. Indeed, applying this implication to the cone $C^*$ and using that $C^{**} = C$ yields \eqref{cboxint}$\Rightarrow$\eqref{c*boxtdi}. 

Suppose that $C^*$ is box-integer and let us prove that the dual $(D)$ of the linear program $(P)$ below has an integer solution for all $w\in\mathbb{Z}^n$ and $\ell,u\in\mathbb{Q}^n$ such that the optimum is finite.
$$
(P)~~~
\begin{array}{ccccc}
\max & w^\top x & & & \\
& Ax & \leq & \mathbf{0} & \\
& x & \leq & u & \\
& - x & \leq & -\ell & \\
\end{array}
\hspace{2cm}
(D)~~~
\begin{array}{cccccccc}
\min & & & u^\top  r & - & \ell^\top  s & & \\
& A^\top z& +& r & - & s & = & w\\
& z & , & r & , & s & \geq & \mathbf{0}\\
&&&&&&&\\
\end{array}
$$

The projection of the set of $(z,r,s)$ satisfying the constraints of $(D)$ onto the variables $r$ and $s$ is the polyhedron $Q=\{r,s\geq \mathbf{0} : v^\top  (s-r+w)\leq 0, \text{ for all $v\in K$}\}$, where $K$ is the projection cone $K=\{v\in\mathbb{R}^n: v^\top  A^\top \leq \mathbf{0}\}$. That is $K=C$ and therefore $Q= (C^*-w)_\pm$.
Since integer translations of box-integer polyhedra are box-integer, $C^*-w$ is box-integer. Thus, by point~\ref{pmint} of Lemma~\ref{pboxint}, $Q$ is box-integer. Hence $Q$ is integer and there exists an integer solution $(\bar r,\bar s)$ maximizing $u^\top r-\ell^\top s$ over $Q$. Let $\bar w = w-\bar r+\bar s$. Since $(\bar r, \bar s)$ belongs to $Q$, there exists a feasible solution $\bar z$ of the dual of $\max\{ \bar w^\top x: Ax\leq \mathbf{0}\}$. Now, by $Ax\leq \mathbf{0}$ being TDI and by $\bar w$ being integer, such a $\bar z$ can be chosen integer. Then, $(\bar z, \bar r, \bar s)$ is an integer optimal solution of $(D)$.
 \end{proof}

%%%%%%%%%%%%%%%%%%%%%%%%%%%%%%%%%%%%%%%%%%%%%%%%%%%%%%%%%%%%%%%%%%%%%%%%%%%%%%%%%%%%%%%%%%%%%%%%%%%%%%%%%%%%%%%%%%%%%%%%%%%%%%%%%%%%%%%%

%%%%%%%%%%%%%%%%%%%%%%%%%%%%%%%%%%%%%%%%%%%%%%%%%%%%%%%%%%%%%%%%%%%%%%%%%%%%%%%%%%%%%%%%%%%%%%%%%%%%%%%%%%%%%%%%%%%%%%%%%%%%%%%%%%%%%%%%

We are now ready to prove our main result, Theorem~\ref{cor:main}.

\begin{proof}[Proof of Theorem~\ref{cor:main}]
Points~\ref{polyiv} and~\ref{polyi} are equivalent by Theorem~\ref{main}. Points~\ref{polyiv} and~\ref{polyii} are equivalent by the equivalence between points~\ref{pbiEM1} and~\ref{pbiEM2} of Theorem~\ref{pbiEM}. Finally, the equivalence between points~\ref{polyii-},~\ref{polyii}, and~\ref{polyiii} comes from Observation~\ref{Fequimod}. 
\end{proof}

We now apply polarity to derive additional characterizations of box-TDI polyhedra.

\begin{corollary}\label{cor:pol}
For a polyhedron $P$, the following statements are equivalent.
\begin{enumerate}
	\item\label{pol1} The polyhedron $P$ is box-TDI.
	\item\label{pol2} For every face $F$ of $P$, every basis of $\lin(F)$ is the transpose of an equimodular matrix.
	\item\label{pol3} For every face $F$ of $P$, some basis of $\lin(F)$ is the transpose of an equimodular matrix.
	\item\label{polylinF} For every face $F$ of $P$, some basis of $\lin(F)$ is a totally unimodular matrix.
\end{enumerate}
\end{corollary}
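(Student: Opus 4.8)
The plan is to relate Corollary~\ref{cor:pol} to Theorem~\ref{cor:main} via the polarity behaviour of cones established in Lemma~\ref{boxtdicone}. The key observation is that for a face $F$ of $P$, the linear space $\lin(F)=\{x:A_Fx=\mathbf 0\}$ is precisely the lineality space of the minimal tangent cone directions, and its relation to the normal cone is controlled by polarity. Specifically, if $M$ is a face-defining matrix for $F$, then $\lin(F)=\{x:Mx=\mathbf 0\}$, and a basis of $\lin(F)$ is a matrix $B$ whose columns span this kernel. I would first make explicit the linear-algebra dictionary: the rows of a face-defining matrix $M$ of $F$ span the orthogonal complement of $\lin(F)$, so $B^\top$ (for $B$ a basis of $\lin(F)$) and $M$ are ``complementary'' full-row-rank descriptions of the same pair of orthogonal subspaces.

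The heart of the matter is the equivalence, for a full row rank rational matrix $M$ with kernel $L=\{x:Mx=\mathbf 0\}$, between \emph{$M$ is equimodular} and \emph{every (equivalently, some) basis of $L$ is the transpose of an equimodular matrix}. I expect this to be the main obstacle, and the natural route is through Theorem~\ref{equimod}: equimodularity of $M$ is equivalent to $M$ having a totally unimodular representative $D^{-1}M$, and total unimodularity is preserved under the operation of passing from a TU matrix describing a subspace to a TU matrix describing its orthogonal complement. This is a classical fact (the orthogonal complement of the row space of a TU matrix admits a TU basis — it follows from the fact that the kernel of a TU matrix has a TU basis, via pivoting/Gaussian elimination keeping entries in $\{0,\pm 1\}$, cf.\ the standard network-matrix/support arguments). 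Thus $M$ equimodular $\Longleftrightarrow$ $\lin(F)$ has a TU basis $\Longleftrightarrow$ $\lin(F)$ has a basis that is the transpose of an equimodular matrix, where the last step again uses point~\ref{equtu} of Theorem~\ref{equimod} applied to $B^\top$.

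With this dictionary in hand, the proof assembles quickly. First I would invoke Theorem~\ref{cor:main}: $P$ is box-TDI if and only if every face of $P$ has an equimodular (equivalently, totally unimodular) face-defining matrix, and in fact every face-defining matrix is equimodular. Then I translate each of the remaining statements. Point~\ref{pol2} corresponds to ``every face-defining matrix is equimodular'' (point~\ref{polyii-} of Theorem~\ref{cor:main}) by applying the subspace/orthogonal-complement equivalence to $B=$ an arbitrary basis of $\lin(F)$ and $M=$ an arbitrary face-defining matrix; point~\ref{pol3} corresponds to ``some face has an equimodular face-defining matrix'' (point~\ref{polyii}); and point~\ref{polylinF} corresponds to ``some face has a TU face-defining matrix'' (point~\ref{polyiii}). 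The chain $\ref{pol2}\Rightarrow\ref{pol3}\Rightarrow\ref{polylinF}\Rightarrow\ref{pol1}\Rightarrow\ref{pol2}$ then follows, where the implication $\ref{polylinF}\Rightarrow\ref{pol1}$ uses that a TU basis of $\lin(F)$ forces $\lin(F)$ to be a fully box-integer subspace (its defining matrix is equimodular by the dictionary), and hence $\aff(F)$ is principally box-integer by Observation~\ref{Fequimod}, so $P$ is box-TDI by Lemma~\ref{lm:pbiaff} and Theorem~\ref{main}.

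I should note one subtlety to address carefully: I must verify that the equivalence between ``every basis'' and ``some basis'' (points~\ref{pol2} and~\ref{pol3}) is genuinely symmetric, which follows because the property of $L$ having \emph{a} TU basis already forces, via Theorem~\ref{equimod}, that \emph{every} face-defining matrix of the complementary subspace is equimodular, and then running the dictionary backwards shows every basis of $L$ is the transpose of an equimodular matrix. The cleanest presentation is therefore to prove the single self-contained lemma ``$M$ is equimodular iff $\ker M$ has a TU basis'' once, and then read off all four equivalences of Corollary~\ref{cor:pol} from Theorem~\ref{cor:main} by a uniform application of this lemma to the pair $\bigl(\lin(F),\lin(F)^\perp\bigr)$.
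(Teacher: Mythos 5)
Your proposal is correct, but it reaches the crucial duality step by a genuinely different route than the paper. Both arguments reduce Corollary~\ref{cor:pol} to Theorem~\ref{cor:main} via the equivalence, for a face $F$, between ``$F$ has an equimodular face-defining matrix'' and ``some basis of $\lin(F)$ is the transpose of an equimodular matrix.'' You prove this equivalence by pure matrix manipulation: Heller's theorem (Theorem~\ref{equimod}) reduces equimodularity of $M$ to total unimodularity of some $D^{-1}M$, and then the classical complementation fact that $[I\;B]$ is TU if and only if the kernel basis $\bigl[\begin{smallmatrix}-B\\ I\end{smallmatrix}\bigr]$ is TU transfers the property across the orthogonal-complement pair $\bigl(\lin(F),\lin(F)^\perp\bigr)$. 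The paper instead obtains the same equivalence geometrically, with no new linear algebra: it applies Corollary~\ref{thm:affbi} to translate ``equimodular face-defining matrix'' into ``$\lin(F)$ is box-integer,'' invokes the polarity Lemma~\ref{boxtdicone} to pass to $\lin(F)^*=\lin(F)^\perp$, and applies Corollary~\ref{thm:affbi} again, noting that a face-defining matrix of $\lin(F)^*$ is precisely the transpose of a basis of $\lin(F)$; the ``every/some/TU'' variants then all follow from Observation~\ref{Fequimod}. Your route is more elementary and self-contained at the matrix level, but it imports a TU-complementation fact that the paper never states and that you would need to write out carefully (it does follow from Theorem~\ref{equimod} plus the standard closure of TU matrices under transposition and appending unit columns, together with the observation that all bases of $\lin(F)$, respectively all face-defining matrices of $F$, differ by a nonsingular factor, which preserves equimodularity). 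The paper's route buys exactly what this corollary is meant to illustrate --- that box-integrality of cones is preserved under polarity --- so the corollary reads as an application of Lemma~\ref{boxtdicone} rather than of a separate linear-algebraic lemma; your version would work equally well as a standalone proof.
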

\begin{proof}
Let $F$ be a face of $P$. 
By Corollary~\ref{thm:affbi}, $F$ has an equimodular face-defining matrix if and only if $\aff(F)$ is principally box-integer. 
Equivalently, by Observation~\ref{biconesprop}, $\lin(F)$ is box-integer.
By Lemma~\ref{boxtdicone}, $\lin(F)$ is box-integer if and only $\lin(F)^*$ is.
By Corollary~\ref{thm:affbi}, $\lin(F)^*$ is box-integer if and only if $\lin(F)^*$ has an equimodular face-defining matrix $M$. 
Note that the columns of $M^\top$ form a basis of $\lin(F)$, therefore $F$ has an equimodular face-defining matrix if and only if some basis of $\lin(F)$ is the transpose of an equimodular matrix.

Since, by Theorem~\ref{cor:main}, the polyhedron $P$ is box-TDI if and only if each of its faces $F$ has an equimodular face-defining matrix, this proves the equivalence between points~\ref{pol1} and~\ref{pol3}. The equivalence with the two others points follows from Observation~\ref{Fequimod}.
\end{proof}

%By the properties of equimodular matrices and linear spaces, note that point~\ref{polylinF} of Corollary~\ref{cor:pol} can be expressed as the following two equivalent statements.
%A polyhedron $P$ is principally box-integer if and only if for each face $F$ of $P$, every basis of $\lin(F)$ forms a matrix whose transpose is equimodular. Equivalently, for each set $S=\{s_0,\dots,s_{\dim(F)}\}$ of affinely independent points of $F$, the matrix $[s_1-s_0,\dots,s_{\dim(F)}-s_0]^\top$ is equimodular.

Recall that a cone $C=\{x:Ax\leq \mathbf{0}\}$ can also be defined as $C=\cone(R)$ for some set $R$ of generators. Moreover, by Lemma~\ref{boxtdicone}, such a cone is box-TDI if and only if it is box-integer. Corollary~\ref{cor:pol} then allows to check whether such cones are box-integer by looking at their generators.

\begin{corollary}\label{conesrays}
A cone $C=\cone(R)$ is box-integer if and only if $S^\top$ is equimodular for each linearly independent subset $S$ of $R$ generating a face of $C$.
\end{corollary}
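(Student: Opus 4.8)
The plan is to reduce Corollary~\ref{conesrays} to Corollary~\ref{cor:pol} by identifying the faces of $C=\cone(R)$ with their linear spans and then applying the basis characterization of box-integrality given there. Since Lemma~\ref{boxtdicone} tells us that $C$ is box-integer if and only if $C$ is box-TDI, it suffices to characterize the box-TDIness of $C$ via Corollary~\ref{cor:pol}. That corollary says $C$ is box-TDI if and only if for every face $F$ of $C$, some basis of $\lin(F)$ is the transpose of an equimodular matrix. So the whole task is to translate the condition ``for every face $F$, some basis of $\lin(F)$ is the transpose of an equimodular matrix'' into the generator language ``$S^\top$ is equimodular for each linearly independent subset $S$ of $R$ generating a face of $C$.''

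The key geometric fact to establish is that the faces $F$ of $C=\cone(R)$ are exactly the cones $\cone(S)$ generated by subsets $S\subseteq R$, and that for a face $F$, $\lin(F)$ is spanned by the generators in $R$ lying in $F$ (equivalently, lying in $\aff(F)$). First I would recall that every face of a pointed-or-not cone $\cone(R)$ is generated by the generators it contains, so $\lin(F)=\lin(\cone(S_F))$ where $S_F=R\cap F$. A linearly independent subset $S\subseteq R$ that generates a face $F$ of $C$ is precisely a basis of $\lin(F)$ consisting of elements of $R$; conversely, from any face $F$ one can extract from $S_F$ a linearly independent subset $S$ that is a basis of $\lin(F)$ and still generates the same face. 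Thus the linearly independent subsets $S$ of $R$ generating a face of $C$ are exactly the bases of $\lin(F)$ drawn from $R$, ranging over all faces $F$.

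With this correspondence in hand, the equivalence falls out. Writing the elements of $S$ as columns, $S$ (viewed as a matrix whose columns are the vectors of $S$) is exactly the transpose of the matrix $S^\top$, so ``$S^\top$ is equimodular'' is the statement ``the basis $S$ of $\lin(F)$ is the transpose of an equimodular matrix'' appearing in point~\ref{pol3} of Corollary~\ref{cor:pol}. The only subtlety is that Corollary~\ref{cor:pol} asks for \emph{some} basis of $\lin(F)$ to be the transpose of an equimodular matrix, whereas Corollary~\ref{conesrays} ranges over \emph{all} linearly independent subsets $S$ of $R$ generating the face. To bridge this I would invoke Theorem~\ref{equimod}: equimodularity of a basis of a linear space is a property of the space, not the chosen basis (this is the content of the equivalences in that theorem, since any two bases are related by a unimodular change of basis and $\lattice(S)=\lin(S)\cap$ the ambient lattice generated is invariant). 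Hence ``some basis from $R$ is equimodular'' is equivalent to ``every basis from $R$ is equimodular,'' which matches the universal quantifier over all such $S$ in the statement.

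The main obstacle I anticipate is verifying cleanly that $\lin(F)$ admits a basis consisting of generators from $R$ — that is, that $R\cap F$ spans $\lin(F)$ — and handling the edge case where a face $F$ is not generated by a \emph{linearly independent} subset of $R$ (e.g.\ when $R$ contains redundant generators along a face). This is a routine but necessary polyhedral bookkeeping step: one must argue that restricting attention to linearly independent generating subsets loses nothing, because equimodularity is basis-independent by Theorem~\ref{equimod}, so checking one extracted basis per face controls the box-integrality of the whole cone. Once that correspondence between faces and generating independent subsets is pinned down, the rest is a direct transcription of Corollary~\ref{cor:pol} combined with Lemma~\ref{boxtdicone}.
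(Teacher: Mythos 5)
Your proof is correct and follows essentially the same route the paper intends: reduce box-integrality of the cone to box-TDIness via Lemma~\ref{boxtdicone}, apply point~\ref{polylinF}/\ref{pol3} of Corollary~\ref{cor:pol}, and observe that the linearly independent subsets of $R$ spanning a face $F$ are exactly bases of $\lin(F)$, with the ``some basis vs.\ every basis'' issue already settled by the equivalence of points~\ref{pol2} and~\ref{pol3} there (equimodularity of $S^\top$ being invariant under left multiplication by a nonsingular matrix, hence a property of $\lin(F)$). The only wording to tighten is the claim that an extracted linearly independent subset ``still generates the same face'': conically it may generate a proper subcone of $F$, but it still spans $\lin(F)$, which is all the argument uses.
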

%\begin{proof}
%By Lemma~\ref{boxtdicone}, $C$ is box-integer if and only if $C^*$ is. By the equivalence between points~\ref{pbiEM1} and~\ref{pbiEM2} of Theorem~\ref{pbiEM}, this is equivalent to each face-defining matrix of $C^*$ being equimodular. Recall that $C^*=\{x\in\mathbb{R}^n: R^\top x\leq \mathbf{0}\}$. Moreover, a submatrix of $R^\top$ is face-defining for $C^*$ if and only if the corresponding vectors are linearly independent and generate a face of $C$. This gives the desired equivalence.
%\end{proof}

Consequently, the recognition of box-integer cones might have a different complexity status than the following related problems, which are all co-NP-complete: deciding whether a given polytope is integer~\cite{paya}, deciding whether a given system is TDI or box-TDI~\cite{difeza}, deciding whether a given conic system is TDI~\cite{pa}.

\begin{open}
What is the complexity of deciding whether a given cone is box-integer?
\end{open}

We mention that polarity preserves box-integrality only for cones of the form $\{x: Ax\leq \mathbf{0}\}$, and does not extend to polyhedra. For instance, the polyhedron $\convexhull\left((2,-1),(-2,-1),(0,1)\right)$ is fully box-integer, and its polar $\convexhull\left((1,1),(-1,1),(0,-1)\right)$ is integer but not box-integer. 

%%%%%%%%%%%%%%%%%%%%%%%%%%%%%%%%%%%%%%%%%%%%%%%%%%%%%%%%%%%%%%%%%%%%%%%%%%%%%%%%%%%%%%%%%%%%%%%%%%%%%%%%%%%%%%%%%%%%%%%%%%%%%%%%%%%%%%%%

\subsection{Connections with Existing Results}\label{s:connections}

In this section, we investigate the connections of our results with those from the literature about box-TDI polyhedra.
We first derive known results about box-TDI polyhedra from our characterizations. 
Then, we show how Cook's characterization~\cite[Theorem~22.9]{sc} is connected to ours. 
Finally, we discuss Schrijver's sufficient condition~\cite[Theorem~5.35]{scbig}.

\subsubsection{Consequences}\label{ss:csq}

Here, we review several known results about box-TDI polyhedra which can be derived from our results. The {\em dominant} of a polyhedron $P$ of $\mathbb{R}^n$ is $\dom(P)=P+\mathbb{R}^n_+$. 

\begin{csq}[{\cite[{Theorem~3.6}]{co} or~\cite[{Theorem 22.11}]{sc}}]
The dominant of a box-TDI polyhedron is box-TDI.
\end{csq}
\begin{proof}
The minimal tangent cones of the dominant of a polyhedron $P$ being the dominants of the minimal tangent cones of $P$, by Lemma~\ref{lem:polyBTDIminconesBTDI} it is enough to prove the result for a cone $C$. Moreover, 
as box-TDIness is preserved by rational translation, it suffices to prove it when $C=\cone(r_1,\dots,r_k)$.
Then, $\dom(C)=\cone(r_1,\dots,r_k,\chi^1,\dots,\chi^n)$, hence its polar cone $\dom(C)^*$ is $C^*\cap\{x\leq \mathbf{0}\}$. 
If $C$ is box-TDI, then so is $C^*$ by Lemma~\ref{lem:polyBTDIminconesBTDI}, hence so is $C^*\cap\{x\leq \mathbf{0}\}=\dom(C)^*$, and, by Lemma~\ref{lem:polyBTDIminconesBTDI} again, so is $\dom(C)$.
\end{proof}

\begin{csq}[{\cite[{Remark 2.21}]{sc}}]\label{affPTU}
If $P$ is a box-TDI polyhedron, then $\aff(P)=\{x:Cx=d\}$ for some totally unimodular matrix~$C$.
\end{csq}
\begin{proof}
If $P$ is a box-TDI polyhedron, then by point~\ref{polyiii} of Theorem~\ref{cor:main}, since $P$ is a face of $P$, its affine hull can be described using a totally unimodular matrix.
\end{proof}

%\begin{csq}[{\cite[{Theorem 2.16}]{edgi}}]\label{pmdescrfulldim}
%If $P$ is a full-dimensional box-TDI polyhedron, then $P =\{x\in\mathbb{R}^n:Ax \geq b\}$ for some $\zpmset$-matrix $A$ and some vector $b$.
%\end{csq}
%\begin{proof}
%If $ax\leq b$ is facet-defining for $P$, then, by Theorem~\ref{main}, all the nonzero coefficients of $a$ have the same absolute value $d$, hence $\frac{1}{d} a x\leq \frac{b}{d}$ has $\zpmset$ coefficients and define the same facet as $ax\leq b$.
%\end{proof}

\begin{csq}[{\cite[{Remark 2.22}]{sc}}]
Each edge and each extremal ray of a pointed box-TDI polyhedron is in the direction of a $\zpmset$-vector.
\end{csq}
\begin{proof}
This is point~\ref{polylinF} of Corollary~\ref{cor:pol} applied to the faces of dimension one of the polyhedron.
\end{proof}

By polarity, the above proof shows that every full-dimensional box-TDI polyhedron can be described using a $\zpmset$-matrix. Edmonds and Giles prove in~\cite{edgi} that it is still true without the full-dimensional hypothesis.

\begin{csq}[{\cite[{Theorem 2.16}]{edgi}}]\label{pmdescr}
If $P$ is a box-TDI polyhedron, then $P =\{x:Ax \leq b\}$ for some $\zpmset$-matrix $A$ and some vector $b$.
\end{csq}
\begin{proof}
Let $P$ be a box-TDI polyhedron. By Consequence~\ref{affPTU}, we have $\aff(P)=\{x:Cx=d\}$ for some full row rank totally unimodular matrix $C$. By point~\ref{polyiii} of Theorem~\ref{cor:main}, for each facet $F$ of~$P$, there exists a totally unimodular matrix $D_F$ such that $\aff(F)=\{x:D_Fx=d_F\}$. Then, one of the rows $a_F x= b_F$ of $D_Fx=d_F$ does not contain $\aff(P)$. Possibly multiplying by $-1$, we may assume that $a_F x\leq b_F$ is valid for $P$ because $F$ is a facet of $P$. Then, the matrix $A$ whose rows are those of $C$ and every $a_F$ yields a description of $P$ as desired.
\end{proof}

\subsubsection{Cook's Characterization~\cite{co}, {\cite[Theorem~22.9]{sc}}}\label{ss:boxhilbert}

In order to get a geometric characterization of box-TDI polyhedra, Cook~\cite{co} introduced the so-called box property. 
Schrijver~\cite[Theorem~22.9]{sc} states Cook's characterization with the following equivalent form of the box property: a cone $C$ of $\mathbb{R}^n$ has the {\em box property} if for all $c\in C$ there exists $\widetilde c\in C\cap \mathbb{Z}^n$ such that $\lfloor c \rfloor \leq \widetilde c \leq \lceil c \rceil$.
To hightlight the connections with our results, we reformulate Schrijver's version as follows.

\begin{itemize}
	\item A polyhedron is box-TDI if and only if the normal cones of its faces all have the box property (Cook~\cite[Theorem~22.9]{sc}).
\end{itemize}
The parallel with our work is clear with the following reformulation of one of our characterizations.

\begin{itemize}
	\item A polyhedron $P$ is box-TDI if and only if every minimal tangent cone of $P$ is box-integer, up to translation (Observation~\ref{biconesprop} and Theorems~\ref{pbiEM} and~\ref{main}).
\end{itemize}
The first difference between these two results is that the first one involves the normal cones, whereas the second one involves the tangent cones. 
Recall that the tangent cones are the polars of the normal cones, up to translation. 
This polarity connection between the two statements is not surprising in light of the polarity result of Lemma~\ref{boxtdicone}.
The second difference is that the first result involves the box property, whereas the second involves the notion of box-integrality.
It is easy to see that box-integer cones have the box property. 
The converse does not hold. 
In fact, the lemma below shows that the box property is a local property when the box-integrality is a global one.
The third difference is a consequence of this local/global aspect: the first result involves {\em all} the normal cones, whereas the second involves only the {\em minimal} tangent cones.

To sum up, the first result is a polar local characterization of box-TDI polyhedra, and the second is a primal global characterization.

\begin{proposition}
A cone $C=\{x:Ax\leq \mathbf 0\}$ is box-integer if and only if all its faces have the box property.
\end{proposition}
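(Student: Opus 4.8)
The plan is to prove both implications through the characterization of box-integrality in Lemma~\ref{boxint}, together with the elementary fact that a nonempty face of a bounded integer polyhedron contains an integer point.

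For the forward direction, suppose $C$ is box-integer and let $F$ be a face of $C$, say $F = C\cap\{a_i^\top x = 0 : i\in I\}$. Given $c\in F$, I would set $\ell = \lfloor c\rfloor$ and $u = \lceil c\rceil$. Then $C\cap\{\ell\le x\le u\}$ is a bounded integer polyhedron, and $F\cap\{\ell\le x\le u\}$ is one of its faces, obtained by turning the inequalities $a_i^\top x\le 0$, $i\in I$, into equalities. This face is nonempty since it contains $c$, so it contains a vertex of the integer polytope $C\cap\{\ell\le x\le u\}$, which is an integer point $\widetilde c$. By construction $\widetilde c\in F\cap\mathbb{Z}^n$ and $\lfloor c\rfloor\le\widetilde c\le\lceil c\rceil$, which is exactly the box property for $F$.

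For the converse, suppose every face of $C$ has the box property, and apply Lemma~\ref{boxint}. Let $F$ be a face of $C$, let $I\subseteq\{1,\dots,n\}$, and let $p\in\mathbb{Z}^I$ be such that $\aff(F)\cap\{x_i = p_i : i\in I\}$ is a singleton $v$ lying in $F$; it suffices to show $v$ is integer. Since $F$ has the box property, there is $\widetilde v\in F\cap\mathbb{Z}^n$ with $\lfloor v\rfloor\le\widetilde v\le\lceil v\rceil$. The key point is that $v_i = p_i\in\mathbb{Z}$ for each $i\in I$, so $\lfloor v_i\rfloor = \lceil v_i\rceil = p_i$ and hence $\widetilde v_i = p_i = v_i$ for every $i\in I$. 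Therefore $\widetilde v\in\aff(F)\cap\{x_i = p_i : i\in I\} = \{v\}$, which forces $\widetilde v = v$; as $\widetilde v$ is integer, so is $v$.

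I expect the only genuinely delicate points to be bookkeeping: in the forward direction, checking that $F\cap\{\ell\le x\le u\}$ really is a face of $C\cap\{\ell\le x\le u\}$ (so that integrality is inherited) and that a nonempty face of a bounded integer polyhedron contains an integer point; in the converse, the observation that the integer equalities $x_i = p_i$ pin down $\widetilde v$ on $I$ and thereby collapse the box-property witness onto the singleton $v$. Neither step should present a real obstacle once Lemma~\ref{boxint} is available.
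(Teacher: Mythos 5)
Your proof is correct. The forward direction is essentially the paper's: it intersects with the integer box $\{\lfloor c\rfloor\le x\le\lceil c\rceil\}$ and extracts an integer point from the resulting nonempty face, exactly as in the first half of Lemma~\ref{obs:boxintboxprop}. The converse, however, is organized differently and more economically. The paper proves the stronger statement that the box property of a face $F$ forces $\aff(F)$ to be box-integer; since the relevant singleton then lives only in $\aff(F)$ and not necessarily in $F$, the paper must first translate it by an integer vector $t$ into $F$ before the box property can be invoked, and then translate back. You instead apply Lemma~\ref{boxint} directly to $C$, where the singleton $v$ is already guaranteed to lie in $F$, so the box property applies immediately and the integrality of the coordinates $v_i=p_i$ pins the witness $\widetilde v$ down to $v$ itself --- no translation needed. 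What your shortcut gives up is the finer ``sandwich'' information the paper wants for its subsequent discussion (that the box property of $F$ sits strictly between box-integrality of $C$ and box-integrality of $\aff(F)$); what it buys is a cleaner, self-contained proof of the proposition as stated. All the bookkeeping points you flag (that $F\cap\{\ell\le x\le u\}$ is a face of $C\cap\{\ell\le x\le u\}$, and that a nonempty face of an integer polytope contains an integer point) are immediate from the paper's definitions, so there is no gap.
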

The following lemma proves the proposition, since a cone $C$ is box-integer if and only if $\aff(F)$ is box-integer for all faces $F$ of $C$.

\begin{lemma}\label{obs:boxintboxprop} 
Let $C=\{x:Ax\leq \mathbf 0\}$ and let $F=\{x:A_F x\leq \mathbf 0\}$ be a face of~$C$.
\begin{itemize}
\item If $C$ is box-integer, then $F$ has the box property.
\item If $F$ has the box property, then $\aff(F)$ is box-integer.
\end{itemize}
\end{lemma}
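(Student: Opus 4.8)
The plan is to treat the two bullets separately, using Lemma~\ref{boxint} as the main engine in both directions. Throughout I regard $F$ as a face of the cone $C$, so that $\mathbf 0 \in F$, $\aff(F) = \{x:A_Fx=\mathbf 0\}$ is a linear subspace, and the relative interior of $F$ consists of the points of $\aff(F)$ satisfying $a_j^\top x < 0$ for every row $a_j$ not tight on $F$.

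For the first bullet, I would first record that a face of a box-integer polyhedron is again box-integer: this is immediate from Lemma~\ref{boxint}, since every face of $F$ is a face of $C$ with the same affine hull, so the singleton condition for $F$ is a special case of the one for $C$. Granting this, fix $c \in F$ and consider the integer box $B = \{\lfloor c\rfloor \le x \le \lceil c\rceil\}$. The polytope $F \cap B$ is nonempty (it contains $c$) and bounded, and it is integer because $F$ is box-integer. Any vertex $\widetilde c$ of $F \cap B$ is then an integer point of $F$ with $\lfloor c\rfloor \le \widetilde c \le \lceil c\rceil$, which is exactly the box property for $F$.

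For the second bullet, I would apply Lemma~\ref{boxint} to the affine space $\aff(F)$, whose only face is itself. So let $I\subseteq\{1,\dots,n\}$ and $p\in\mathbb Z^I$ be such that $\aff(F)\cap\{x_i=p_i,i\in I\}$ is a singleton $v$; the goal is to show $v$ is integer. The idea is to push $v$ into the relative interior of $F$ by an integer translation, invoke the box property there, and translate back. Pick a rational point $c^*$ in the relative interior of $F$ and an integer $N$ with $Nc^*$ integer and $N$ large enough that $c := v + Nc^*$ lies in $F$: the tight constraints hold with equality throughout $\aff(F)$, while for each non-tight row $a_j^\top c^* < 0$ forces $a_j^\top(v+Nc^*) < 0$ once $N$ is large. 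Since $Nc^*$ is integer, $\lfloor c\rfloor = \lfloor v\rfloor + Nc^*$ and $\lceil c\rceil = \lceil v\rceil + Nc^*$. Applying the box property of $F$ to $c$ yields $\widetilde c \in F\cap\mathbb Z^n$ with $\lfloor c\rfloor \le \widetilde c \le \lceil c\rceil$, so $w := \widetilde c - Nc^*$ is an integer point of $\aff(F)$ with $\lfloor v\rfloor \le w \le \lceil v\rceil$. For $i\in I$ the coordinate $v_i = p_i$ is integer, hence $\lfloor v_i\rfloor = \lceil v_i\rceil = p_i$ forces $w_i = p_i$; thus $w \in \aff(F)\cap\{x_i=p_i,i\in I\} = \{v\}$, giving $w = v$ and so $v$ is integer. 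By Lemma~\ref{boxint}, $\aff(F)$ is box-integer.

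The main obstacle is the translation argument in the second bullet: the box property only supplies integer points for arguments lying in $F$, whereas the point $v$ produced by Lemma~\ref{boxint} lies in $\aff(F)$ and need not belong to $F$. The crux is to move $v$ into $F$ along a rational relative-interior direction scaled to be integral, so that fractional parts—and hence the surrounding integer box—are preserved by the translation; the coordinates pinned by $I$ then collapse the retrieved integer point back onto $v$.
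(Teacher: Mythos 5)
Your proof is correct and follows essentially the same route as the paper's: the first bullet is identical (pass box-integrality to the face $F$ and take a vertex of $F\cap\{\lfloor c\rfloor\le x\le\lceil c\rceil\}$), and the second bullet is the paper's integer-translation argument, where the paper simply asserts the existence of $t\in\mathbb Z^n$ with $c+t\in F$ while you explicitly construct it as $t=Nc^*$ for a scaled relative-interior point. The only difference is that extra level of detail, which is a welcome justification of a step the paper leaves implicit.
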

\begin{proof}
Suppose that  $C$ is box-integer and  let $c\in F$. Since $c$ belongs to $P = F \cap \{\lfloor c \rfloor \leq x \leq \lceil c \rceil\}$, the latter is nonempty. Since $C$ is box-integer, so is $F$, hence $P$ has only integer vertices, and any of them forms a suitable $\widetilde c$ which shows that $F$ has the box property.

Suppose now that $F$ has the box property. Let $p\in\mathbb Z^I$ be such that $\aff(F)\cap \{x_i=p_i,i\in I\}$ is a singleton $c$ in $\aff(F)$. There exists $t\in\mathbb Z^n$ be such that $c'=c+t\in F$. By the box property of $F$, there exists $\widetilde c\in F\cap \mathbb{Z}^n$ such that $t+\lfloor c \rfloor=\lfloor c' \rfloor \leq \widetilde c \leq \lceil c' \rceil=\lceil c\rceil+t$. Now, $\widetilde c-t$ belongs to $\aff(F)\cap \{x_i=p_i,i\in I\}$, hence $c=\widetilde c-t$ is integer. By Lemma~\ref{boxint}, $\aff(F)$ is box-integer.
\end{proof}

In a way, the above lemma shows that the box property of a cone is sandwiched between the box-integrality of the cone and that of its underlying affine space---an even more local property. This, up to polarity again, further compares Cook's characterization and ours, as the latter property appears in Lemma~\ref{lm:pbiaff}.

The following picture illustrates some differences between the three properties.
\begin{figure}[h!]
\centering
\begin{tikzpicture}
\coordinate (Origin) at (0,0);
\coordinate (XAxisMin) at (-.5,0);
\coordinate (XAxisMax) at (4.5,0);
\coordinate (YAxisMin) at (0,-.5);
\coordinate (YAxisMax) at (0,2.5);
\draw [thin, gray,-latex] (XAxisMin) -- (XAxisMax);% Draw x axis
\draw [thin, gray,-latex] (YAxisMin) -- (YAxisMax);% Draw y axis
\clip (-.5,-.5) rectangle (4.5cm,2.5cm); % Clips the picture...
\draw[style=help lines,dashed] (-1,-1) grid[step=1cm] (5,5);
\draw[fill=gray, fill opacity=0.3, draw=black] (Origin) -- (8,4) -- (8,0) -- cycle;
\draw [pattern=north west lines, fill opacity=0.3, draw=black] (Origin) -- (6,2) -- (8,4) -- cycle;
\draw [ultra thick,-latex] (Origin) -- (2,1) node [above left] {};
\draw [ultra thick,-latex] (Origin) -- (3,1) node [above left] {};
\draw [ultra thick,-latex] (Origin) -- (1,0) node [below right] {};
% Draws a grid in the new coordinates.
%\filldraw[fill=gray, fill opacity=0.3, draw=black] (0,0) rectangle (2,2);
% Puts the shaded rectangle
\foreach \x in {-1,0,...,4}{% Two indices running over each
\foreach \y in {-1,0,...,2}{% node on the grid we have drawn
\node[draw,circle,inner sep=2pt,fill] at (\x,\y) {};
% Places a dot at those points
}
}
\end{tikzpicture}
\caption{
The cone $C=\cone\{(2,1),(1,0)\}$ has the box property but is not box-integer.
The cone $C'=\cone\{(2,1),(3,1)\}$ does not have the box property, yet $\aff(C')=\mathbb R^2$ is box-integer.
The cone $C''=\cone\{(2,1)\}$ does not have the box property, yet its polar does.}
\label{figboxH}
\end{figure}
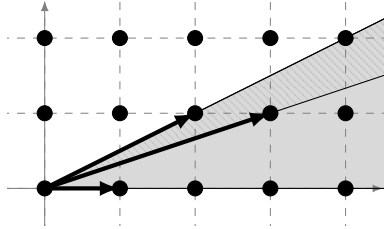

The notion of box-integrality of cones and affine spaces sheds a better light on box-TDI polyhedra by providing insights of how their local, global, and polar properties are connected.
Both are preserved by polarity, the global notion yields a global geometric characterization of box-TDI polyhedra, and the most local one allows to derive matricial counterparts.

\subsubsection{Schrijver's Sufficient Condition~\cite[Theorem~5.35]{scbig}}\label{ss:schri}
In this section, we compare our results on box-TDI polyhedra with known results on box-TDI systems.
It appears that our results in some sense allow to split the ``box-'' from the ``-TDI'': to prove that a given system is box-TDI, prove that it is TDI on the one hand, and prove that the polyhedron is principally box-integer on the other hand.

\medskip
 
As noticed by Schrijver~\cite[Page~318]{sc}, Hoffman and Kruskal's result~\cite{hokr} implies that a matrix $A$ is totally unimodular if and only if the system $Ax\leq b$ is box-TDI for each vector $b$. Then, by Theorem~\ref{TEMPBI} and Theorem~\ref{main}, the parallel with totally equimodular matrices can be thought as relaxing the box-TDIness of those systems to that of the associated polyhedra.

%Before discussing Schrijver's sufficient condition, we make a digression about totally equimodular matrices. As noticed by Schrijver~\cite[Page~318]{sc}, Hoffman and Kruskal's result~\cite{hokr} implies the following characterization: a matrix $A$ is totally unimodular if and only if the system $Ax\leq b$ is box-TDI for each vector $b$. Then, by Theorem~\ref{TEMPBI} and Theorem~\ref{main}, the parallel with totally equimodular matrices can be thought as relaxing the box-TDIness of those systems to that of the associated polyhedra.

\begin{corollary}\label{TEBTDI}
A matrix $A$ of $\mathbb{Q}^{m\times n}$ is totally equimodular if and only if the polyhedron $\{x: Ax\leq b\}$ is box-TDI for all $b\in\mathbb{Z}^m$.
\end{corollary}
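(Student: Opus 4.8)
The plan is to combine the two characterizations already established in the excerpt: Theorem~\ref{TEMPBI}, which ties total equimodularity of $A$ to the principal box-integrality of the polyhedra $\{x:Ax\leq b\}$, and Theorem~\ref{main}, which identifies principal box-integrality with box-TDIness. The only delicate point is a quantifier mismatch: Theorem~\ref{TEMPBI} quantifies over all rational right-hand sides $b\in\mathbb{Q}^m$, whereas the present statement quantifies only over integer right-hand sides $b\in\mathbb{Z}^m$. Everything else is a direct chaining of the two theorems.

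For the forward direction, I would suppose $A$ totally equimodular. By Theorem~\ref{TEMPBI}, $\{x:Ax\leq b\}$ is principally box-integer for every $b\in\mathbb{Q}^m$, hence in particular for every $b\in\mathbb{Z}^m$. Applying Theorem~\ref{main} to each such polyhedron then gives that $\{x:Ax\leq b\}$ is box-TDI for all $b\in\mathbb{Z}^m$.

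For the converse, I would suppose $\{x:Ax\leq b\}$ is box-TDI for all $b\in\mathbb{Z}^m$. By Theorem~\ref{main}, each of these polyhedra is principally box-integer. To deduce that $A$ is totally equimodular, I would argue by contraposition, reusing the construction from the proof of Theorem~\ref{TEMPBI}: if $A$ is not totally equimodular, that proof exhibits a right-hand side—namely $b_L=\mathbf{0}$ on the rows of a non-equimodular full-row-rank submatrix $A_L$ and $b_j=1$ elsewhere—for which $\{x:Ax\leq b\}$ fails to be principally box-integer. The crucial observation is that this $b$ is already \emph{integer}. Thus principal box-integrality of $\{x:Ax\leq b\}$ for all $b\in\mathbb{Z}^m$ already forces $A$ to be totally equimodular, which closes the contrapositive.

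The main (and essentially only) obstacle is this quantifier issue: one must check that a witness to the failure of total equimodularity can always be chosen with an integer right-hand side. Since the construction embedded in the proof of Theorem~\ref{TEMPBI} produces precisely such an integer $b$, no additional argument is required, and the corollary follows by transporting both theorems through Theorem~\ref{main}.
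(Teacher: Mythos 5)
Your proof is correct and follows the same route the paper intends: the corollary is stated as an immediate consequence of Theorem~\ref{TEMPBI} and Theorem~\ref{main}, with no further argument given. Your extra care about the quantifier mismatch ($b\in\mathbb{Z}^m$ versus $b\in\mathbb{Q}^m$) is exactly the detail the paper leaves implicit, and your observation that the witness constructed in the proof of Theorem~\ref{TEMPBI} already has an integer right-hand side correctly closes that gap.
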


Totally unimodular matrices being totally equimodular, the following well-known result is a special case of the above corollary. 

\begin{csq}%[{}]
A polyhedron whose constraint matrix is totally unimodular is box-TDI.
\end{csq}

We mention that there exist box-TDI systems which are not defined by a totally unimodular matrix. 
By Corollary~\ref{TEBTDI} and Theorem~\ref{tdiboxtdisystem}, any TDI system defined with a totally equimodular matrix is box-TDI. 
Therefore, to find a box-TDI system for a polyhedron described by a totally equimodular matrix, there only remains to find a TDI system describing this polyhedron.

\medskip

Another interesting parallel can be observed with Schrijver's Sufficient Condition. Schrijver proves in~\cite[Theorem~5.35]{scbig} that the following weakening of $A$ being totally unimodular already suffices to obtain the box-TDIness of the system $Ax\leq b$. 

\begin{theorem}[{\cite[Theorem~5.35]{scbig}}]\label{scTU}
Let $Ax\leq b$ be a system of linear inequalities, with $A$ an $m\times n$ matrix. Suppose that \textup{($\star$)} for each $c \in\mathbb{R}^n$, $\max\{c^\top x : Ax\leq b\}$ has (if finite) an optimum dual solution $y \in\mathbb{R}^m_+$ such that the rows of $A$ corresponding to positive components of $y$ form a totally unimodular submatrix of $A$. Then $Ax\leq b$ is box-TDI.
%If all the submatrices of $A$ which are face-defining for $P=\{x:Ax\leq b\}$ are TU, then $Ax\leq b$ is a box-TDI system.
\end{theorem}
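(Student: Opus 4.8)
The plan is to separate the ``box'' from the ``-TDI'' exactly as advertised in Section~\ref{ss:schri}, relying on Theorems~\ref{tdiboxtdisystem} and~\ref{main}. It suffices to prove two things: \textbf{(a)} the system $Ax\leq b$ is TDI, and \textbf{(b)} the polyhedron $P=\{x:Ax\leq b\}$ is principally box-integer, hence box-TDI by Theorem~\ref{main}. Granting these, $Ax\leq b$ is a TDI system describing a box-TDI polyhedron, so it is itself box-TDI by Theorem~\ref{tdiboxtdisystem}. Both statements will be extracted from condition~\textup{($\star$)} combined with the ``TU $\Rightarrow$ integrality'' principle packaged in the Hoffman--Kruskal result (Theorem~\ref{thm:hokr}) and in Theorem~\ref{cor:main}.

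For~\textbf{(a)}, fix $w\in\mathbb Z^n$ with finite maximum, attained at some $x^\star$, and let $I(x^\star)$ index the rows tight at $x^\star$. Condition~\textup{($\star$)} yields an optimal dual solution $y^\star\geq\mathbf{0}$ whose positive support $J$ satisfies $J\subseteq I(x^\star)$ by complementary slackness and is such that $A_J$ is totally unimodular. I would then discard the coordinates outside $J$ and consider $Q=\{y_J\geq\mathbf{0}:A_J^\top y_J=w\}$, which is nonempty and, $A_J^\top$ being totally unimodular and $w$ integer, is an integer polyhedron by Theorem~\ref{thm:hokr}. Any integer point of $Q$, extended by zeros, is dual feasible with support inside $I(x^\star)$; a short complementary-slackness computation shows that every such point has objective value $w^\top x^\star$ and is therefore an integer optimal dual solution. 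This gives TDIness.

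For~\textbf{(b)}, I would use the characterization of Theorem~\ref{cor:main}: it is enough to exhibit, for every face $G$ of $P$, a totally unimodular face-defining matrix. Write $N_G=\cone(a_i:i\in I_G)$ for the normal cone of $G$, where $I_G$ indexes the tight rows of $G$, so that $\lin(G)^\perp=\textup{span}(a_i:i\in I_G)$ has dimension $r_G=\rank(A_{I_G})$. I would pick $c$ in the relative interior of $N_G$, so that $G$ is exactly the optimal face of $\max\{c^\top x:Ax\leq b\}$, and apply~\textup{($\star$)} to obtain an optimal dual with totally unimodular positive-support $A_J$ and $c\in\cone(a_j:j\in J)$. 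Extracting a maximal linearly independent subset $J'\subseteq J$ yields a totally unimodular matrix $A_{J'}$, since a row submatrix of a TU matrix is TU, and $A_{J'}$ is face-defining for $G$ \emph{provided} $\textup{span}(a_j:j\in J)=\lin(G)^\perp$.

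The crux, and the place where a naive argument breaks, is precisely this last spanning condition: the dual solution furnished by~\textup{($\star$)} may be supported on rows spanning only a proper subspace of $\lin(G)^\perp$ (for instance a single interior generator of $N_G$), in which case $A_{J'}$ is TU but not face-defining. I would remove this obstacle by a genericity argument. Since $c=A_J^\top y_J$, we have $c\in\textup{span}(a_j:j\in J)$; there are only finitely many subspaces spanned by subsets of $\{a_i:i\in I_G\}$, and each proper one is lower-dimensional inside $\lin(G)^\perp$. Choosing $c$ in $\textup{relint}(N_G)$ but outside the union of these finitely many proper subspaces---possible because $\textup{relint}(N_G)$ is $r_G$-dimensional and \textup{($\star$)} ranges over all real objectives---forces every admissible dual support to span $\lin(G)^\perp$. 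With this choice $A_{J'}$ is a totally unimodular face-defining matrix for $G$, which establishes~\textbf{(b)} and completes the proof. One could alternatively phrase~\textbf{(b)} through Corollary~\ref{conesrays} and the box-integrality of the minimal tangent cones via Theorem~\ref{pbiEM}, but the face-defining formulation seems the most economical.
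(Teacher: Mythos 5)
Your argument is correct, but note that the paper offers no proof of this statement at all: Theorem~\ref{scTU} is quoted verbatim from Schrijver's book and used only for comparison in Section~\ref{ss:schri}, so there is no in-paper proof to match. What you have done is rederive Schrijver's sufficient condition from the paper's own machinery, following exactly the ``split the box from the TDI'' strategy the authors advertise. Part~\textbf{(a)} is the standard argument: the dual polyhedron restricted to the TU support $J\subseteq I(x^\star)$ is integer by Hoffman--Kruskal, and tightness of the rows in $J$ at $x^\star$ forces every such integer point, padded with zeros, to be dual optimal. Part~\textbf{(b)} is where the real content lies, and you correctly isolate the danger: condition~\textup{($\star$)} only guarantees a dual support $J$ with $c\in\textup{span}(a_j:j\in J)\subseteq\lin(G)^\perp$, and nothing a priori forces that span to be all of $\lin(G)^\perp$, so $A_{J'}$ need not be face-defining. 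Your genericity fix is sound: $\textup{relint}(N_G)$ is relatively open of dimension $r_G$ in $\textup{span}(a_i:i\in I_G)=\lin(G)^\perp$, the subsets of $I_G$ span only finitely many subspaces, and the proper ones have measure zero there, so a suitable $c$ exists and pins every admissible support down to a spanning set; then $A_{J'}$ is a TU face-defining matrix for $G$, point~\ref{polyiii} of Theorem~\ref{cor:main} gives box-TDIness of the polyhedron, and Theorem~\ref{tdiboxtdisystem} upgrades the TDI system from~\textbf{(a)} to a box-TDI one. There is no circularity, since the paper's proofs of Theorems~\ref{main} and~\ref{cor:main} do not rely on Theorem~\ref{scTU}. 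Compared with Schrijver's original direct manipulation of the boxed dual, your route buys a cleaner separation of concerns at the cost of invoking the heavier Theorems~\ref{cor:main} and~\ref{tdiboxtdisystem}; it also makes transparent, via the spanning issue, exactly why~\textup{($\star$)} is strictly stronger than the paper's polyhedral condition that every face merely \emph{have} a TU face-defining matrix.
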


Note that the property~\textup{($\star$)} is equivalent to the condition that for every face $F$ of $\{x:Ax\leq b\}$, the system $Ax\leq b$ contains a totally unimodular face-defining matrix for $F$. 
Theorem~\ref{cor:main} contains a polyhedral version: a polyhedron is box-TDI if and only if each of its faces has a totally unimodular face-defining matrix. 
This latter condition is weaker than~\textup{($\star$)}, hence does not ensure the box-TDIness of the system. Nevertheless, when satisfied, all that remains to do is to find a TDI system describing the same polyhedron.

In light of our characterizations, one could wonder whether Theorem~\ref{scTU} can be turned into an equivalence, that is: can every box-TDI polyhedron be described by a box-TDI system satisfying~\textup{($\star$)}? 
Unfortunately, the answer to this question is negative.
Indeed, systems satisfying~\textup{($\star$)} can be assumed $\zpmset$, and there exist box-TDI polyhedra for which no TDI description is $\zpmset$, see~\cite[Page~325]{sc}.

\section{Illustrations}\label{s:illustrations}

In this section, we provide illustrations of our results. The first one is a new perspective on the equivalence between two results about binary clutters. Secondly, we refute a conjecture of Ding, Zang, and Zhao~\cite{dizazh} about box-perfect graphs. Thirdly, we discuss connections with an abstract class of polyhedra introduced in~\cite{gire}. Finally, we characterize the box-TDIness of the cone of conservative functions of a graph.

\subsection{Box-Mengerian Clutters}

We briefly introduce the definitions we need  about clutters. 
A collection $\mathcal{C}$ of subsets of a set $E$ is a {\em clutter} if none of its sets contains another one. 
We denote by $A_\mathcal{C}$ the $\mathcal{C}\times E$ incidence matrix of $\mathcal{C}$ and by $P_\mathcal{C}=\{x\in\mathbb{R}^{E}:A_\mathcal{C} x\geq 1, x\geq\mathbf{0}\}$ the associated covering polyhedron. 
A clutter $\mathcal{C}$ is {\em binary} if the symetric difference of any three elements of $\mathcal{C}$ contains an element of $\mathcal{C}$.
A clutter $\mathcal{C}$ is {\em box-$\frac 1d$-integral} if for all $\ell,u\in\frac1d\mathbb{Z}^{E}$, each vertex of $P_\mathcal{C}\cap\{\ell\leq x\leq u\}$ belongs to $\frac1d \mathbb{Z}^{E}$. 
%Note that a clutter $\mathcal{C}$ is box-$\frac 1d$-integral if and only if $dP_\mathcal{C}$ is box-integer.
% Je suis pas certain que ce soit intéressant de laisser ça là ?
A matrix $A\in\{0,1\}^{m\times n}$ is called {\em (box-)Mengerian} if the system $Ax\geq 1, x\geq\mathbf{0}$ is (box-)TDI. A clutter $\mathcal{C}$ is {\em (box-)Mengerian} if $A_\mathcal{C}$ is (box-)Mengerian.
{\em Deleting} an element $e\in E$ means replacing $\mathcal{C}$ by $\mathcal{C}\setminus e=\{X\in\mathcal{C}:e\notin X\}$ and {\em contracting} an element $e\in E$ means replacing $\mathcal{C}$ by $\mathcal{C}/e$ which is composed of the inclusionwise
 minimal members of $\{X\setminus\{e\}:X\in\mathcal{C}\}$. The {\em minors} of a clutter are the clutters obtained by repeatedly deleting and contracting elements of $E$.
The clutter $Q_6$ is defined on the set $E_4$ of the edges of the complete graph $K_4$, and its elements are the triangles
 of $K_4$---see Figure~\ref{figQ6}.
 The clutter $Q_7$ is defined on $E_4\cup e$ where $e\notin E_4$, and its elements are $X\cup \{e\}$ for each triangle or perfect matching $X$ of $K_4$.

\medskip

In 1995, Gerards and Laurent~\cite{gela} characterized the binary clutters that are box-$\frac 1d$-integral for all $d\in\mathbb{Z}_{\scriptscriptstyle>0}$ by forbidding minors.

\begin{theorem}[{\cite[Theorem~1.2]{gela}}]\label{1dbox}
A binary clutter is box-$\frac 1d$-integral for all $d\in\mathbb{Z}_{\scriptscriptstyle>0}$ if and only if neither $Q_6$ nor $Q_7$ is its minor.
\end{theorem}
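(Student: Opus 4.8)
The plan is to translate the clutter condition into the polyhedral vocabulary of the paper, apply Theorem~\ref{main}, and then interface with the excluded-minor result of Chen, Ding, and Zang~\cite{chdiza} for box-Mengerian clutters.

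First I would record a dilation dictionary between the two integrality scales. For $\ell,u\in\frac1d\mathbb{Z}^E$, the affine bijection $x\mapsto dx$ maps $P_\mathcal{C}\cap\{\ell\le x\le u\}$ onto $(dP_\mathcal{C})\cap\{d\ell\le y\le du\}$, carries vertices to vertices, and maps $\frac1d\mathbb{Z}^E$ onto $\mathbb{Z}^E$. Hence a vertex of the former lies in $\frac1d\mathbb{Z}^E$ if and only if the corresponding vertex of the latter is integer, and since $d\ell,du$ range over all of $\mathbb{Z}^E$ as $\ell,u$ range over $\frac1d\mathbb{Z}^E$, the clutter $\mathcal{C}$ is box-$\frac1d$-integral if and only if $dP_\mathcal{C}$ is box-integer. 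Quantifying over $d\in\mathbb{Z}_{\scriptscriptstyle>0}$, I conclude that $\mathcal{C}$ is box-$\frac1d$-integral for all $d$ if and only if $P_\mathcal{C}$ is fully box-integer; note that the case $d=1$ already forces $P_\mathcal{C}$ to be integer.

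The heart of the argument is then to show that full box-integrality of $P_\mathcal{C}$ coincides with box-Mengerianness of $\mathcal{C}$. By definition $P_\mathcal{C}$ is fully box-integer precisely when it is integer and principally box-integer, which by Theorem~\ref{main} is the same as being integer and box-TDI. One direction is immediate: if $\mathcal{C}$ is box-Mengerian then $A_\mathcal{C}x\ge \mathbf{1},\,x\ge\mathbf{0}$ is a box-TDI system, hence TDI; with its integer right-hand side it describes an integer polyhedron, and being box-TDI it makes $P_\mathcal{C}$ a box-TDI polyhedron, so $P_\mathcal{C}$ is fully box-integer. For the reverse direction one starts from a fully box-integer, hence integer and box-TDI, polyhedron $P_\mathcal{C}$, and by Theorem~\ref{tdiboxtdisystem} any TDI system describing it is automatically box-TDI; thus all that is missing is that the covering system $A_\mathcal{C}x\ge\mathbf{1},\,x\ge\mathbf{0}$ is itself TDI, i.e.\ that $\mathcal{C}$ is Mengerian.

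Granting this coincidence, Theorem~\ref{1dbox} follows at once from Chen, Ding, and Zang~\cite{chdiza}: a binary clutter is box-Mengerian exactly when it has neither $Q_6$ nor $Q_7$ as a minor. The main obstacle is precisely the reverse direction above, the passage from the polyhedral property (integrality and box-TDIness of $P_\mathcal{C}$) to the system property (TDIness of the specific covering system, i.e.\ Mengerianness of $\mathcal{C}$), since full box-integrality constrains only the polyhedron while box-Mengerianness is attached to a prescribed description. I would resolve this using binary-clutter theory, invoking the classical fact that a binary clutter with no $Q_6$ minor is Mengerian in order to supply the missing TDIness once $Q_6$ and $Q_7$ are excluded. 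Alternatively, for the forward implication of Theorem~\ref{1dbox} alone, the delicate passage can be bypassed by showing that box-$\frac1d$-integrality for all $d$ is minor-closed and verifying by a finite computation that neither $Q_6$ nor $Q_7$ is box-$\frac1d$-integral for some $d$, that is, that a fractional vertex appears in a box of a suitable dilation of $P_{Q_6}$ and of $P_{Q_7}$.
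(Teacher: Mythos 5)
The first thing to say is that the paper does not prove this statement at all: Theorem~\ref{1dbox} is imported verbatim from Gerards and Laurent~\cite{gela}, and the text that follows it only establishes the \emph{equivalence} of Theorem~\ref{1dbox} with Theorem~\ref{boxM} as a special case of Theorem~\ref{main}. Your proposal is therefore necessarily a different route from ``the paper's proof'': you are running the paper's Section on box-Mengerian clutters in reverse, deriving Theorem~\ref{1dbox} from the independently proved Theorem~\ref{boxM}. This is legitimate and non-circular, since the paper explicitly notes that no proof of Theorem~\ref{boxM} relies on Theorem~\ref{1dbox}, and your individual steps match the paper's bridge exactly: the dilation dictionary (box-$\frac1d$-integrality for all $d$ $\Leftrightarrow$ $P_\mathcal{C}$ fully box-integer), the translation via Theorem~\ref{main} and Theorem~\ref{tdiboxtdisystem} (box-Mengerian $\Leftrightarrow$ Mengerian and $P_\mathcal{C}$ fully box-integer), and Seymour's theorem that a binary clutter is Mengerian iff it has no $Q_6$ minor. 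What this buys is a proof of a 1995 theorem from a 2008 one plus the present paper's machinery; what the original Gerards--Laurent argument buys is independence from Theorem~\ref{boxM}.

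The one point you leave under-specified is not optional in your main route. To get from ``$P_\mathcal{C}$ fully box-integer'' to ``$\mathcal{C}$ Mengerian'' via Seymour you must first show that full box-integrality excludes $Q_6$, i.e.\ that (a) full box-integrality of $P_\mathcal{C}$ is closed under clutter minors (because $P_{\mathcal{C}/e}$ and $P_{\mathcal{C}\setminus e}$ arise from $P_{\mathcal{C}}\cap\{x_e=0\}$ and $P_{\mathcal{C}}\cap\{x_e=1\}$ by dropping the coordinate $e$), and (b) $P_{Q_6}$ is not fully box-integer. You relegate both to the ``alternative'' route as a finite computation, but your main route needs them too; the phrase ``once $Q_6$ and $Q_7$ are excluded'' gets the logical order backwards, since exclusion of $Q_6$ is the conclusion to be extracted from full box-integrality, not a hypothesis available at that stage. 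The paper supplies (b) concretely: the first three rows of $A_{Q_6}$ form a face-defining matrix for $P_{Q_6}$ (via Observation~\ref{face-def_C}) whose two extreme $3\times 3$ minors are $2$ and $1$, so it is not equimodular and Theorem~\ref{pbiEM} applies. With that ingredient made explicit, your argument closes.
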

%
%Seymour~\cite{} characterizes Mengerian binary clutters.
%
%\begin{theorem}\label{seymourM}
%A binary clutter is Mengerian if and only if does not contain $Q_6$ as a minor.
%\end{theorem}

In 2008, Chen, Ding, and Zang~\cite{chdiza} characterized box-Mengerian binary clutters by forbidding minors. In~\cite{chchza}, Chen, Chen, and Zang provide a simpler proof of this characterization, based on the so called ESP property. We mention that none of the proofs of Theorem~\ref{boxM} rely on Theorem~\ref{1dbox}.

\begin{theorem}[{\cite[Corollary~1.2]{chdiza}}]\label{boxM}
A binary clutter is box-Mengerian if and only if neither $Q_6$ nor $Q_7$ is its minor.
\end{theorem}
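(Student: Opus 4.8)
The plan is to route everything through the covering polyhedron $P_\mathcal{C}$ and to reduce Theorem~\ref{boxM} to Theorem~\ref{1dbox} by means of our main equivalence. The first step is a dictionary. By definition, $\mathcal{C}$ is box-Mengerian exactly when the specific system $A_\mathcal{C}x\geq 1,\, x\geq\mathbf{0}$ is box-TDI. On the primal side, rescaling a box $\{\ell\leq x\leq u\}$ with $\ell,u\in\frac1d\mathbb{Z}^E$ by the factor $d$ turns it into an integer box and turns $P_\mathcal{C}$ into $dP_\mathcal{C}$; hence $\mathcal{C}$ is box-$\frac1d$-integral for all $d\in\mathbb{Z}_{>0}$ if and only if $dP_\mathcal{C}$ is box-integer for every $d$, that is, if and only if $P_\mathcal{C}$ is fully box-integer. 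Thus Theorem~\ref{1dbox} reads: $P_\mathcal{C}$ is fully box-integer if and only if neither $Q_6$ nor $Q_7$ is a minor of $\mathcal{C}$, and the goal reduces to proving that $\mathcal{C}$ is box-Mengerian if and only if $P_\mathcal{C}$ is fully box-integer.

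For the forward implication I would argue as follows. If $\mathcal{C}$ is box-Mengerian, then $A_\mathcal{C}x\geq 1,\, x\geq\mathbf{0}$ is box-TDI, so $P_\mathcal{C}$ is a box-TDI polyhedron and, by Theorem~\ref{main}, principally box-integer. Moreover a box-TDI system is TDI, and the Edmonds--Giles result~\cite{edgi} guarantees that a TDI system with integer right-hand side (here $1$ and $\mathbf{0}$) describes an integer polyhedron, so $P_\mathcal{C}$ is integer. A polyhedron that is simultaneously principally box-integer and integer is fully box-integer, which is exactly what is needed; Theorem~\ref{1dbox} then forbids the minors $Q_6$ and $Q_7$.

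For the converse, assume $P_\mathcal{C}$ is fully box-integer. Then $P_\mathcal{C}$ is principally box-integer, hence box-TDI as a polyhedron by Theorem~\ref{main}; equivalently, \emph{some} box-TDI system describes $P_\mathcal{C}$. To upgrade this to box-Mengerianity I must show that the \emph{particular} system $A_\mathcal{C}x\geq 1,\, x\geq\mathbf{0}$ is box-TDI, and here Cook's Theorem~\ref{tdiboxtdisystem} applies: since $P_\mathcal{C}$ is a box-TDI polyhedron, it suffices to check that this system is TDI, i.e.\ that $\mathcal{C}$ is Mengerian. This is where I would invoke Seymour's theorem that a binary clutter is Mengerian if and only if it has no $Q_6$ minor: full box-integrality already excludes $Q_6$ by Theorem~\ref{1dbox}, so $\mathcal{C}$ is Mengerian, its natural system is TDI, and Cook's theorem then yields that the system is box-TDI, that is, $\mathcal{C}$ is box-Mengerian.

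The main obstacle is precisely the gap between a polyhedron being box-TDI and a \emph{prescribed} system describing it being box-TDI. Our characterizations control the former, whereas box-Mengerianity is a statement about the latter, and bridging the two forces the separate input that the natural covering system is TDI. For binary clutters this is exactly Seymour's Mengerian characterization, which is \emph{not} implied by box-integrality (a purely primal property) and must be supplied externally. This is a concrete instance of the principle emphasized in Section~\ref{ss:schri}: one should split the ``box-'' from the ``-TDI,'' establishing the TDIness of the system and the principal box-integrality of the polyhedron independently, and only the latter is handled by the machinery of this paper.
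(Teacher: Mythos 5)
Your proposal is correct and follows essentially the same route as the paper: the paper does not reprove Theorem~\ref{boxM} from scratch (it is cited from Chen, Ding, and Zang) but, in Section~5.1, establishes exactly your chain of reformulations---box-$\frac1d$-integrality for all $d$ equals full box-integrality of $P_\mathcal{C}$, box-Mengerian equals Mengerian plus fully box-integer via Theorem~\ref{main} and Theorem~\ref{tdiboxtdisystem}, and Seymour's $Q_6$ characterization supplies the Mengerian half---so that Theorem~\ref{boxM} becomes equivalent to Theorem~\ref{1dbox}. The only cosmetic difference is that where you invoke Theorem~\ref{1dbox} to exclude a $Q_6$ minor from a fully box-integer clutter, the paper verifies this directly by exhibiting a nonequimodular face-defining matrix for $P_{Q_6}$ and noting that full box-integrality is minor-closed.
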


The combination of Theorems~\ref{1dbox} and~\ref{boxM} implies that a binary clutter is box-Mengerian if and only if it is box-$\frac 1d$-integral for all $d\in\mathbb{Z}_{\scriptscriptstyle>0}$. We show in the following how this equivalence is actually a special case of Theorem~\ref{main}.

By definition, a clutter $\mathcal{C}$ is box-$\frac 1d$-integral if and only if $dP_\mathcal{C}$ is box-integer, which implies the following reformulation of the class of polyhedra characterized in Theorem~\ref{1dbox}.

\begin{center}
A clutter $\mathcal C$ is box-$\frac1d$-integral for all $d\in\mathbb{Z}_{\scriptscriptstyle>0}$ if and only if $P_{\mathcal C}$ is fully box-integer. 
\end{center}

Recall that a system is box-TDI if and only if it is TDI and defines a box-TDI polyhedron. Then, by Theorem~\ref{main}, a clutter is box-Mengerian if and only if it is Mengerian and $P_\mathcal C$ is principally box-integer. Since $\mathcal{C}$ being Mengerian implies the integrality of $P_\mathcal{C}$,
 we get the following reformulation for the systems involved in Theorem~\ref{boxM}.

\begin{center}
A clutter $\mathcal{C}$ is box-Mengerian if and only if it is Mengerian and $P_\mathcal{C}$ is fully box-integer.
\end{center}

Therefore, to prove the announced equivalence it is enough to show the following statement.

\begin{center}
If $\mathcal C$ is binary and $P_\mathcal C$ is fully box-integer, then $\mathcal C$ is Mengerian.
\end{center}

We apply Seymour's characterization~\cite{se2}: a binary clutter is Mengerian if and only if it has no $Q_6$ minor. 
The property of $P_\mathcal{C}$ being fully box-integer is closed under taking minors since $P_{\mathcal{C}/e}$ and $P_{\mathcal{C}\setminus e}$ are respectively obtained from $P_{\mathcal{C}}\cap\{x_e=0\}$ and $P_{\mathcal{C}}\cap\{x_e=1\}$ by deleting $e$'s coordinate. Furthermore, $P_{Q_6}$ is not fully box-integer by point~\ref{pbiEM2} of Theorem~\ref{pbiEM}. Indeed, the first three rows of the matrix $A_{Q_6}$ of Figure~\ref{figQ6} form a nonequimodular matrix $M$, as the determinant of the three first columns equals $2$ and that of the three last columns equals $1$. 
Moreover, $M$ is face-defining for $P_{Q_6}$, by Observation~\ref{face-def_C} and because $\chi^1+\chi^6$, $\chi^2+\chi^5$, $\chi^3+\chi^4$, and $\chi^4+\chi^5+\chi^6$ are affinely independent, belong to $P_{Q_6}$, and satisfy $Mx=1$. 
Therefore, if $\mathcal C$ is binary and $P_\mathcal{C}$ is fully box-integer, then $\mathcal{C}$ has no $Q_6$ minor.

\begin{figure}[h!]
\begin{center}
$
A_{Q_6}=\left[\begin{array}{cccccc}
	1 & 1 & 0 & 1 & 0 & 0\\
	1 & 0 & 1 & 0 & 1 & 0\\
	0 & 1 & 1 & 0 & 0 & 1\\
	0 & 0 & 0 & 1 & 1 & 1
\end{array}\right]
$
\caption{The matrix representation of the clutter $Q_6$.}\label{figQ6}
\end{center}
\end{figure}

\subsection{On Box-Perfect Graphs}

In this section, we provide a construction which preserves non box-perfection, and use it to refute a conjecture of Ding, Zang, and Zhao~\cite{dizazh}.

In a graph, a {\em clique} is a set of pairwise adjacent vertices, and a {\em stable set} is the complement of a clique. The {\em stable set polytope} of a graph is the convex hull of the incidence vectors of its stable sets. {\em Perfect graphs} are known to be those whose stable set polytope is described by the system composed of the clique inequalities and the nonnegativity constraints:
$$
%\text{(\/Stable)}
%\left\{
\begin{array}{ll}
x(C) \leq 1 & \quad \textrm{for all cliques $C$,}\\
x \geq \mathbf{0}.
\end{array}
%\right.
$$
A {\em box-perfect graph} is a graph for which this system is box-TDI. 
Since this system is known to be TDI if and only if the graph is perfect~\cite{lo}, a graph is box-perfect if and only if it is perfect and its stable set polytope is box-TDI.
The characterization of box-perfect graphs is a long standing open question raised by Cameron and Edmonds in 1982~\cite{caed}. Recent progress has been made on this topic by Ding, Zang, and Zhao~\cite{dizazh}. They exhibit several new subclasses of perfect graphs, and in particular prove the conjecture of Cameron and Edmonds~\cite{caed} that parity graphs are box-perfect. They also propose a conjecture for the characterization of box-perfect graphs. 

\medskip

To state their conjecture, they introduce the class of graphs $\mathcal{R}$, built as follows. 
Let $G=(U,V,E)$ be a bipartite graph whose biadjacency matrix is minimally non-TU. Add a set of edges $F$ between vertices of $V$ such that the neighborhood $N_{G'}(u)$ of $u$ in $G'=(U\cup V,E\cup F)$ is a clique for all $u\in U$. If there exists $u\in U$ such that $N_{G'}(u)=V$, then $G'\setminus\{u\}$ is in $\mathcal{R}$, otherwise $G'$ is in~$\mathcal{R}$.

\begin{conjecture}[Ding, Zang, and Zhao~\cite{dizazh}]\label{conjboxperfect}
A perfect graph is box-perfect if and only if it contains no graph from $\mathcal{R}$ as an induced subgraph.                               
\end{conjecture}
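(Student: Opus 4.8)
The plan is to translate box-perfection into the language of face-defining matrices via Theorem~\ref{cor:main}, and then prove the two implications separately. For a perfect graph $G$, the stable set polytope $\mathrm{STAB}(G)$ equals $\{x\geq\mathbf 0 : x(C)\leq 1\text{ for all cliques }C\}$, so its nontrivial facets are clique inequalities and every face-defining matrix has rows taken, up to sign, among the clique incidence vectors $\chi^C$ and the unit vectors $\chi^v$. Since the clique system is TDI precisely when $G$ is perfect, $G$ is box-perfect if and only if $\mathrm{STAB}(G)$ is box-TDI, which by Theorems~\ref{main} and~\ref{cor:main} holds if and only if every face of $\mathrm{STAB}(G)$ admits a totally unimodular face-defining matrix. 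Because the unit-vector rows merely fix or delete coordinates, this reduces to asking that certain clique--vertex incidence submatrices be totally unimodular (equivalently, equimodular, by Theorem~\ref{equimod}).

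For the necessity direction (``only if''), I would first record that box-perfection is hereditary under induced subgraphs: for $H=G[S]$, the polytope $\mathrm{STAB}(H)$ is obtained from the face $\mathrm{STAB}(G)\cap\{x_v=0:v\notin S\}$ by deleting the corresponding coordinates, and box-integrality is preserved by these operations exactly as in the clutter-minor argument of the previous subsection. It then suffices to show that each member of $\mathcal R$ is itself not box-perfect. Let $H\in\mathcal R$ arise from a minimally non-TU biadjacency matrix $B$; such a matrix is square with $\det B=\pm 2$ while every proper submatrix is totally unimodular. The row-cliques $N_{G'}(u)$ give a face-defining matrix of $\mathrm{STAB}(H)$ whose clique--vertex block is $B$; since $B$ has a maximal minor of absolute value $2$ and one of absolute value $1$, this matrix is not equimodular, so the corresponding face is not principally box-integer. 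By the equivalence of points~\ref{pbiEM1} and~\ref{pbiEM2} of Theorem~\ref{pbiEM}, this means $H$ is not box-perfect, exactly mirroring the $Q_6$ computation. Verifying that this block really is face-defining (via Observation~\ref{face-def_C}) and handling the ``delete the universal vertex'' clause in the definition of $\mathcal R$ are the routine points to check here.

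The sufficiency direction (``if'') is the substantial one, and I would attack its contrapositive. Suppose $G$ is perfect but not box-perfect. By Theorem~\ref{cor:main}, some face of $\mathrm{STAB}(G)$ has a non-equimodular face-defining matrix, that is, a clique--vertex incidence submatrix $M$ that is not totally unimodular. Every non-TU $0/1$ matrix contains a minimally non-TU submatrix $B$, which I would take as the candidate biadjacency matrix: its columns index a vertex set $V\subseteq V(G)$, and each of its rows comes from a clique of $G$. The goal is to upgrade this algebraic witness into an induced member of $\mathcal R$ inside $G$, by producing, for each row, an ``apex'' vertex $u$ of $G$ whose neighborhood meets $V$ exactly in the support of that row and forms a clique, and by adding the edges of $F$ as prescribed in the construction of $\mathcal R$.

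The main obstacle is precisely this last reconstruction step. Non-equimodularity only guarantees a minimally non-TU pattern among the clique incidence vectors; it does not, a priori, guarantee that these cliques are realized as neighborhoods $N_{G'}(u)$ of single vertices with the clique-closure property required by $\mathcal R$, nor that the resulting configuration sits in $G$ as an induced subgraph of the exact shape demanded (including the universal-vertex deletion). Bridging the gap between ``a non-TU clique pattern'' and ``an induced member of $\mathcal R$'' is where all the difficulty lies: one must show that the apex vertices can always be found or constructed, and, crucially, that no obstruction to box-perfection exists outside the family $\mathcal R$. I would therefore concentrate my effort on a structural analysis of minimal non-box-perfect perfect graphs to force this realization, since it is exactly here that the characterization stands or falls.
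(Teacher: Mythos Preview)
The proposal has a fundamental problem: you are attempting to prove a statement that the paper \emph{refutes}. Conjecture~\ref{conjboxperfect} is not a theorem of the paper; on the contrary, the paper constructs an explicit counterexample. The graph $G^v$ of Figure~\ref{fig:nonboxperfect}, obtained from $S_3$ by unfolding a vertex, is shown in Proposition~\ref{boxpfctrex} to be perfect, not box-perfect, and yet to contain no induced subgraph from $\mathcal{R}$. Hence the ``if'' direction of the conjecture is false, and no argument along the lines you sketch can succeed.

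Your instincts were in fact good: you correctly isolated the sufficiency direction as the problematic one, and you correctly identified the obstruction, namely that a non-equimodular clique pattern witnessing failure of box-perfection need not be realized as the neighborhood structure required by the definition of $\mathcal{R}$. The paper's contribution is precisely to show that this obstruction is genuine. Lemma~\ref{replacement1} proves that unfolding a vertex preserves the existence of a non-equimodular face-defining matrix (so non-box-perfection persists), while Proposition~\ref{boxpfctrex} checks that the resulting graph $G^v$ avoids $\mathcal{R}$ because any putative induced $H\in\mathcal{R}$ would have to contain the degree-two vertex $z$ with non-adjacent neighbors, which is impossible in a member of $\mathcal{R}$. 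Thus the ``structural analysis of minimal non-box-perfect perfect graphs'' you propose to carry out would, if done carefully, lead you not to a proof but to a counterexample.
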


We introduce the operation of {\em unfolding} a vertex $v\in V$ in $G=(V,E)$.
Take a vertex $v\in V$ and two sets of vertices $X$ and $Y$ such that $X\cup Y = N_G(v)$ and no edge connects $X\setminus Y$ and $Y\setminus X$.
Delete $v$ and add two new vertices $x$ and $y$ such that the neighborhoods of $x$ and $y$ are respectively $X$ and $Y$.
 Finally, add another vertex $z$ adjacent only to $x$ and $y$. 

We mention that unfolding a vertex might not preserved perfection.
 Nevertheless, if the starting graph is perfect but not box-perfect, then the graph obtained by unfolding is not box-perfect.

\begin{lemma}\label{replacement1}
Unfolding any vertex in a perfect but not box-perfect graph yields a non box-perfect graph.
\end{lemma}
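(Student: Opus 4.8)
The plan is to reformulate box-perfection in terms of the stable set polytope and then transfer a ``bad'' face-defining matrix from $G$ to its unfolding $G'$. Write $\textup{STAB}(H)$ for the stable set polytope of a graph $H$. Since a graph is box-perfect if and only if it is perfect and its stable set polytope is box-TDI, the hypothesis gives that $\textup{STAB}(G)$ is not box-TDI, and it suffices to prove that $\textup{STAB}(G')$ is not box-TDI. By Theorem~\ref{main} and Theorem~\ref{cor:main}, $\textup{STAB}(G)$ failing to be box-TDI means some face $F$ of $\textup{STAB}(G)$ admits a face-defining matrix $M\in\mathbb{Q}^{r\times V}$ that is not equimodular. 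As $M$ has full row rank, it therefore has two column bases $B_1,B_2$ with $0\neq|\det M_{B_1}|\neq |\det M_{B_2}|\neq 0$.

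Next I would set up the geometric link between the two polytopes. Write $V=W\cup\{v\}$ for the vertex set of $G$ and $V'=W\cup\{x,y,z\}$ for that of $G'$. Since $\{x,z\}$ and $\{y,z\}$ are edges of $G'$, the inequalities $x_x+x_z\le 1$ and $x_y+x_z\le 1$ are valid clique inequalities, and I set $F'=\textup{STAB}(G')\cap\{x_x+x_z=1,\ x_y+x_z=1\}$. A direct inspection of the stable sets of $G'$ meeting $F'$ shows that its vertices are exactly the sets $\{z\}\cup T$ with $T$ stable in $G-v$, together with the sets $\{x,y\}\cup T$ with $\{v\}\cup T$ stable in $G$ (this is where the unfolding definition---$x\sim X$, $y\sim Y$, $x\not\sim y$, $X\cup Y=N_G(v)$---is used). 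Consequently the coordinate map $\pi$ that forgets $x_y,x_z$ and renames the $x$-coordinate into a $v$-coordinate restricts to an affine isomorphism $F'\to\textup{STAB}(G)$: on $\aff(F')$ one has $x_y=x_x$ and $x_z=1-x_x$, so $\pi$ is a bijection with integral inverse $x_v\mapsto(x_x,x_y,x_z)=(x_v,x_v,1-x_v)$.

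I would then lift the bad face and its matrix. Let $F''=(\pi|_{F'})^{-1}(F)$, a face of $F'$ and hence of $\textup{STAB}(G')$, and (schematically) let
\[
M''=\begin{pmatrix} M_W & M_v & \mathbf{0} & \mathbf{0}\\ \mathbf{0} & -1 & 1 & 0\\ \mathbf{0} & 1 & 0 & 1\end{pmatrix},
\]
where the columns are indexed by $W,x,y,z$ and $M_W,M_v$ denote the $W$- and $v$-columns of $M$. Using Observation~\ref{face-def_C} and lifting $\dim(F)+1$ affinely independent vertices of $F$ to the corresponding vertices of $F''$, one checks that $M''$ is a face-defining matrix for $F''$: it has full row rank $r+2$ and $\{w:M''w=d''\}=\aff(F'')$ for the appropriate $d''$. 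The key computation is that for $i\in\{1,2\}$ the maximal submatrix of $M''$ on the columns $\widetilde B_i\cup\{y,z\}$ (where $\widetilde B_i$ is $B_i$ with $v$ replaced by $x$) is block-triangular with diagonal blocks $M_{B_i}$ and $I_2$, so its determinant equals $\det M_{B_i}$. Hence $M''$ has two nonzero maximal minors of distinct absolute values and is not equimodular.

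Finally I would conclude: $F''$ is a face of $\textup{STAB}(G')$ with a non-equimodular face-defining matrix, so by Theorem~\ref{cor:main} the polytope $\textup{STAB}(G')$ is not principally box-integer, whence not box-TDI by Theorem~\ref{main}; therefore $G'$ is not box-perfect. The main obstacle is the second step: correctly identifying the face $F'$ and proving the stable-set correspondence $\pi\colon F'\to\textup{STAB}(G)$, together with the verification that $M''$ is genuinely face-defining for a face of $\textup{STAB}(G')$ (rather than for an arbitrary affine subspace). Once this polyhedral bookkeeping is in place, the determinant computation transferring non-equimodularity from $M$ to $M''$ is routine.
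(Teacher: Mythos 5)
Your proof is correct, and while it follows the same overall strategy as the paper's---transfer a non-equimodular face-defining matrix from the stable set polytope of $G$ to that of $G'$ and invoke Theorem~\ref{cor:main}---the implementation of the transfer is genuinely different. The paper first reduces to a face-defining matrix whose rows are incidence vectors of cliques (using perfection of $G$), then builds a new clique system in $G'$: each clique $K$ containing $v$ is replaced by $K\setminus\{v\}\cup\{x\}$ or $K\setminus\{v\}\cup\{y\}$ (the hypothesis that no edge joins $X\setminus Y$ to $Y\setminus X$ is exactly what guarantees that one of the two is a clique), the cliques $\{x,z\}$ and $\{y,z\}$ are appended, and the equality of the relevant minors is checked by column operations. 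You instead observe that the stable set polytope of $G$ is affinely isomorphic to the face $F'$ of that of $G'$ cut out by $x_x+x_z=1$ and $x_y+x_z=1$, which lets you border the original matrix $M$---whatever its rows are---with two rows and read the minors off a block-triangular structure with an $I_2$ block. This buys you three things: no reduction to clique rows, no case analysis on which clique extension survives, and, notably, no use of the condition on $X\setminus Y$ and $Y\setminus X$ at all (your stable-set correspondence only needs $X\cup Y=N_G(v)$ and $x\not\sim y$), so your argument is in fact slightly more general. The paper's construction, in exchange, keeps the lifted matrix inside the clique-constraint system, which is combinatorially more transparent. The polyhedral bookkeeping you flag as the main risk does go through: $F'$ is a nonempty face, $\pi$ restricted to $\aff(F')$ is injective with integral affine inverse, your $M''$ has full row rank $r+2$, every point of $F''$ satisfies $M''w=d''$, and the count $|\mathcal{H}|+(r+2)=(n+2)+1$ required by Observation~\ref{face-def_C} is met since $\dim(F'')=\dim(F)=n-r$.
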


\begin{proof}
We show that if the stable set polytope of a graph has a nonequimodular face-defining matrix, then so does any graph obtained by unfolding. By Theorem~\ref{cor:main}, this proves the Lemma.

Let $G=(V,E)$ be a graph which is perfect but not box-perfect, let $v$ be a vertex of $G$, let $H$ be obtained from $G$ by unfolding $v$, and $x,y,z$ be the new vertices. Let $n=|V|$. Since $G$ is not box-prefect, its stable set polytope has a nonequimodular face-defining matrix $M\in\mathbb{Q}^{k\times n}$ for a face~$F$. Since $G$ is perfect, we may assume that the rows of $M$ are the incidence vectors of a set $\mathcal{K}$ of cliques of $G$. Indeed, it can be checked that removing the rows corresponding to nonnegativity constraints yields a smaller nonequimodular face-defining matrix. By Observation~\ref{face-def_C}, there exists a family $\mathcal{S}$ of affinely independent stable sets of $F$ with $|\mathcal{S}|=n-\dim(F)+1$. Build a family $\mathcal{T}$ of stable sets of $H$ from $\mathcal{S}$ as follows: if $S\in\mathcal{S}$ contains $v$, then $S\setminus\{v\}\cup\{x,y\}\in\mathcal{T}$, otherwise $S\cup\{z\}\in\mathcal{T}$. All these sets are stable sets and are affinely independent. 
Build a family $\mathcal{L}$ of $k+2$ cliques of $H$ as follows. For each $K\in\mathcal{K}$, 
\begin{itemize}
	\item If $v\notin K$, then $K\in \mathcal{L}$.
	\item If $v\in K$, the fact that $X\cup Y=N_G(v)$ and no edge connects $X\setminus Y$ and $Y\setminus X$ ensures that at least one of $K\setminus\{v\}\cup \{x\}$ and $K\setminus\{v\}\cup \{y\}$ is a clique of $H$. If both are cliques, then add one of them to $\mathcal{L}$, otherwise add the clique.
	\item Add $\{x,z\}$ and $\{y,z\}$ to $\mathcal{L}$.
\end{itemize}
Let $N$ denote the $(k+2)\times (n+2)$ matrix whose rows are the incidence vectors of the cliques of~$\mathcal{L}$. 
The matrix $N$ has full row rank and each stable set $T$ of $\mathcal{T}$ satisfies $|T\cap L|=1$ for all $L\in\mathcal{L}$, hence $N$ is face-defining for the stable set polytope of $H$ by Observation~\ref{face-def_C}. 
There only remains to show that $N$ is not equimodular. 
To prove this, we show that each $k\times k$ submatrix of $M$ gives rise to a $(k+2)\times (k+2)$ submatrix of $N$ having the same determinant.
Since $M$ is not equimodular, neither is $N$.

Let $A$ be a $k\times k$ submatrix of $M$. If $A$ does not contains $v$'s column $M^v$, then add two rows of zeros and then the two columns $N^y$ and $N^z$. Note that the determinant has not changed: first develop with respect to $\{x,z\}$'s row, and then with respect to $\{y,z\}$'s row, to obtain the starting matrix. If $A$ contains $v$'s column $M^v$, then delete it, add two rows of zeros and finally add the three columns $N^x$, $N^y$, and $N^z$. Let $A'$ denote this new matrix. We obtain $\det(A')=\det(A)$ as follows: first replace the column $A^x$ by $A^x +A^y-A^z$, then develop with respect to $\{x,z\}$'s row, and finally with respect to $\{y,z\}$'s row. The resulting matrix is precisely~$A$.
\end{proof}

Unfolding a vertex in $S_3$ as shown in Figure~\ref{fig:nonboxperfect} yields a graph which is perfect but not box-perfect, and contains no induced subgraphs from $\mathcal{R}$. This disproves Conjecture~\ref{conjboxperfect}---see Proposition~\ref{boxpfctrex}.

%\begin{lemma}\label{replacement2}
%Replacing any vertex in a graph from $\mathcal{R}$ yields a graph containing no graph from $\mathcal{R}$.
%\end{lemma}
%\begin{proof}
%Let $L=(W,D)$ be a graph from $\mathcal{R}$ and let $H$ be obtained from $L$ by replacing $v$ for some $v\in W$. Let $x,y,z$ be the resulting vertices. Suppose that $H$ belongs to $\mathcal{R}$ and is built from a bipartite graph $G=(U,V,E)$ and a set of edge $F$. Note that, since $B_G$ is minimally non TU, the graph $G$ is connected. Therefore, since $N_H(z)=\{x,y\}$ is not a clique, we have $z\in V$. Therefore $x$ and $y$ belong to $U$. Then, since $G$ is connected and $|V|\geq |U|$, either $x$ or $y$ has another neighbor $w$ in $V$. But then $w$ should be adjacent to $z$, a contradiction.
%
%Prove that $H$ contains no graph from $\mathcal{R}$.
%\end{proof}

\begin{center}
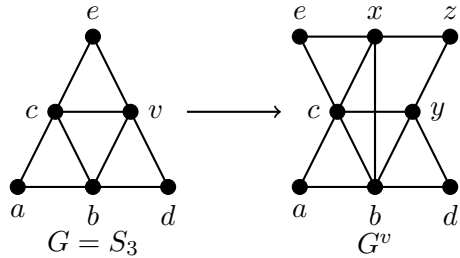
\begin{figure}[h!]
\centering
  \begin{tikzpicture}[scale=.5]
     \coordinate (Origin)   at (0,0);
     \node at (0,-1.5) {$G=S_3$};

     \node[draw,circle,inner sep=2pt,fill] at (-2,0) {};
     \node[below=.1cm] at (-2,0) {$a$};
     \node[draw,circle,inner sep=2pt,fill] at (0,0) {};
     \node[below=.1cm] at (0,0) {$b$};		
     \node[draw,circle,inner sep=2pt,fill] at (2,0) {};
     \node[below=.1cm] at (2,0) {$d$};
		
     \node[draw,circle,inner sep=2pt,fill] at (-1,2) {};
     \node[left=.1cm] at (-1,2) {$c$};
     \node[draw,circle,inner sep=2pt,fill] at (1,2) {};
     \node[right=.1cm] at (1,2) {$v$};
		
     \node[draw,circle,inner sep=2pt,fill] at (0,4) {};
     \node[above=.1cm] at (0,4) {$e$};
		
		 \draw[thick] (-2,0) -- (0,0) -- (2,0) -- (1,2) -- (0,0) -- (-1,2) -- (1,2) -- (0,4) -- (-1,2) {};
		 \draw[thick] (-2,0) -- (-1,2) {};
		
		 \draw[thick,->] (2.5,2) -- (5,2) {};
		
		\begin{scope}[xshift=+7.5cm]
     \node at (0,-1.5) {$G^v$};

     \node[draw,circle,inner sep=2pt,fill] at (-2,0) {};
     \node[below=.1cm] at (-2,0) {$a$};
     \node[draw,circle,inner sep=2pt,fill] at (0,0) {};
     \node[below=.1cm] at (0,0) {$b$};		
     \node[draw,circle,inner sep=2pt,fill] at (2,0) {};
     \node[below=.1cm] at (2,0) {$d$};
		
     \node[draw,circle,inner sep=2pt,fill] at (-1,2) {};
     \node[left=.1cm] at (-1,2) {$c$};
     \node[draw,circle,inner sep=2pt,fill] at (1,2) {};
     \node[right=.1cm] at (1,2) {$y$};
		
     \node[draw,circle,inner sep=2pt,fill] at (2,4) {};
     \node[above=.1cm] at (2,4) {$z$};
		
     \node[draw,circle,inner sep=2pt,fill] at (0,4) {};
     \node[above=.1cm] at (0,4) {$x$};
		
     \node[draw,circle,inner sep=2pt,fill] at (-2,4) {};
     \node[above=.1cm] at (-2,4) {$e$};
		
		 \draw[thick] (-2,0) -- (0,0) -- (2,0) -- (1,2) -- (0,0) -- (-1,2) -- (1,2) {};
		 \draw[thick] (-2,0) -- (-1,2) {};
		 \draw[thick] (0,4) -- (0,0) {};
		 \draw[thick] (1,2) -- (2,4) -- (0,4) -- (-2,4) -- (-1,2) -- (0,4) {};
		\end{scope}
	\end{tikzpicture}
	\caption{A non box-perfect graph obtained by unfolding the vertex $v$ in $S_3$, with $X=\{b,c,e\}$ and $Y=\{b,c,d\}$.}\label{fig:nonboxperfect}
\end{figure}
\end{center}

It is well known that the graph $S_3$ in Figure~\ref{fig:nonboxperfect} is not box-perfect~\cite{caca}. It can also be seen because the nonequimodular matrix $M$ below is face-defining for the stable set polytope of $S_3$. 

$$
M=\left[\begin{array}{cccccc}
	1 & 1 & 0 & 1 & 0 & 0\\
	1 & 0 & 1 & 0 & 1 & 0\\
	0 & 1 & 1 & 0 & 0 & 1
\end{array}\right]
$$
Indeed, up to reordering the vertices, the rows of $M$ correspond to the three external triangles, and the four affinely independent stable sets $\{a,v\},\{b,e\},\{c,d\},\{a,d,e\}$ belong to the corresponding face. By Observation~\ref{face-def_C} and Theorem~\ref{cor:main}, $S_3$ is not box-perfect.

\begin{proposition}\label{boxpfctrex}
The graph $G$ of Figure~\ref{fig:nonboxperfect} is perfect but not box-perfect and none of its induced subgraphs belongs to $\mathcal{R}$.
\end{proposition}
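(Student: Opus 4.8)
The statement has three parts: the unfolded graph $G^{v}$ of Figure~\ref{fig:nonboxperfect}, on vertices $\{a,b,c,d,e,x,y,z\}$, is (i) perfect, (ii) not box-perfect, and (iii) free of induced subgraphs from $\mathcal{R}$. Part (ii) is quickest. The $3$-sun $S_3$ is perfect and, as recorded just above via the nonequimodular face-defining matrix $M$ and Theorem~\ref{cor:main}, not box-perfect. The graph $G^{v}$ is obtained from $S_3$ by unfolding $v$ with $X=\{b,c,e\}$ and $Y=\{b,c,d\}$; this is a legitimate unfolding because $X\cup Y=N_{S_3}(v)$ and the sole vertices of $X\setminus Y$ and $Y\setminus X$, namely $e$ and $d$, are non-adjacent in $S_3$. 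Lemma~\ref{replacement1} then yields at once that $G^{v}$ is not box-perfect.

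For part (i) I would argue directly, since unfolding need not preserve perfection. By the Strong Perfect Graph Theorem it suffices to show that $G^{v}$ contains no odd hole and no odd antihole among its eight vertices. The antihole $\overline{C_7}$ is immediate: it is $4$-regular, so an induced copy would require seven vertices of degree at least $4$, whereas $G^{v}$ has only four such vertices ($b$, $c$, $x$, $y$). Since $\overline{C_5}=C_5$, it then remains to exclude induced $C_5$ and $C_7$. Here the vertex $a$ cannot lie on any induced odd cycle because its two neighbours $b,c$ are adjacent, so any odd hole avoids $a$; exploiting the automorphism $(b\,c)(d\,e)(x\,y)$ that fixes $a$ and $z$, together with the fact that $b,c$ are the only vertices of degree exceeding $4$, a short case analysis on how many of $\{b,c\}$ the cycle meets shows every candidate picks up a chord through $b$ or $c$, leaving no induced $C_5$ or $C_7$.

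The core of the proof is part (iii). Since every member of $\mathcal{R}$ is built from a minimally non-TU biadjacency matrix, and such matrices are square of order at least $3$, a member on at most $8$ vertices---the only ones that could embed in $G^{v}$---arises from a $3\times 3$ or a $4\times 4$ minimally non-TU matrix, giving $6$, $7$, or $8$ vertices. The $3\times 3$ case produces a single graph, namely $S_3$ itself, so the first task is to check that $G^{v}$ has no induced $S_3$: listing the five triangles of $G^{v}$, namely $\{a,b,c\}$, $\{b,c,x\}$, $\{b,c,y\}$, $\{c,e,x\}$, $\{b,d,y\}$, one verifies that none extends to an induced $3$-sun, because for each triangle at least one of its three vertex-pairs has no common neighbour outside the triangle to serve as the corresponding independent ear. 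For the $4\times 4$ case I would enumerate the finitely many minimally non-TU matrices of that order, form the associated $\mathcal{R}$-members (each with an independent part $U$ whose vertices have clique neighbourhoods inside the other part $V$), and rule them out: an $8$-vertex member would have to coincide with $G^{v}$, which is excluded by a direct comparison (for instance via degree sequences, noting that in $G^{v}$ the two degree-$5$ vertices $b,c$ are adjacent with three pairwise non-adjacent common neighbours $a,x,y$), while the $7$-vertex members are excluded by inspecting the eight $7$-vertex induced subgraphs of $G^{v}$.

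The main obstacle is this last step. Although entirely finite, it rests on having the correct inventory of small minimally non-TU matrices and on the structural description of $\mathcal{R}$-members, so the bookkeeping must be organised carefully. One cannot shortcut it by merely forbidding $S_3$: the cyclic $4\times 4$ matrix (bipartite $C_8$) yields the $4$-cycle with four attached ears, an $\mathcal{R}$-member on $8$ vertices that contains no induced $S_3$, so each order-$4$ member genuinely needs its own exclusion. The cleanest route is to isolate a single feature of $G^{v}$ that no $\mathcal{R}$-member on at most $8$ vertices shares---such as the local structure around $z$, the unique degree-$2$ vertex whose two neighbours $x,y$ are non-adjacent and share their full neighbourhood with $b,c$---and use it to dispatch the remaining cases uniformly.
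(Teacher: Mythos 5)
Parts (i) and (ii) of your argument are sound: the unfolding is legitimate since $X\cup Y=N_{S_3}(v)$ and $de\notin E(S_3)$, so Lemma~\ref{replacement1} gives non-box-perfection, and your chord/degree analysis for perfection, though only sketched, does go through (the paper merely asserts perfection). The genuine gap is in part (iii). Your exclusion of $\mathcal{R}$-members arising from $4\times 4$ minimally non-TU matrices is never executed --- you explicitly defer it as ``the main obstacle'' --- and the one concrete instance you offer to motivate it is wrong: the cyclic $4\times 4$ matrix with two consecutive ones per row satisfies $r_1-r_2+r_3-r_4=\mathbf{0}$ and is in fact totally unimodular (it is the vertex--edge incidence matrix of the bipartite graph $C_4$), so bipartite $C_8$ generates no member of $\mathcal{R}$ at all. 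Completing your enumeration would require listing all $4\times 4$ minimally non-TU $0/1$ matrices, all admissible edge sets $F$, and the optional deletion of a vertex $u$ with $N_{G'}(u)=V$; none of that bookkeeping is present, so as written the proof of (iii) is incomplete.

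The paper avoids the enumeration entirely, by essentially the route you gesture at in your last sentence without carrying it out. First, the clique-constraint matrix of the stable set polytope of $G^v\setminus\{z\}$ is totally unimodular, so $G^v\setminus\{z\}$ is box-perfect and contains no induced member of $\mathcal{R}$; hence any induced $H\in\mathcal{R}$ in $G^v$ must contain $z$, which then has degree at most two in $H$. Second, a structural fact is proved for all of $\mathcal{R}$ at once: no member has a vertex of degree one, and any vertex with exactly two neighbours must have those neighbours adjacent --- otherwise that vertex lies in $V$ and its neighbours in $U$, and connectivity of the bipartite graph $K$ gives one of them a second neighbour $w\in V$, whereupon the clique condition on $U$-neighbourhoods forces $wz\in F$, giving $z$ a third neighbour. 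Since $z$ has exactly the two non-adjacent neighbours $x$ and $y$ in $G^v$, no such $H$ exists. This local argument around $z$ is the step your proof still needs.
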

\begin{proof}
Note that the graphs $G$ and $G^v$ are perfect. By Lemma~\ref{replacement1}, $G^v$ is not box-perfect. The graph $G^v\setminus \{z\}$ is box-perfect, as one can check that the constraint matrix of its stable set polytope is totally unimodular. Hence, if $G^v$ contains an induced subgraph $H\in\mathcal{R}$, then $z\in V(H)$. As no graph in $\mathcal R$ has a vertex of degree one, this contradicts the claim below.  
\begin{center}
If $H\in\mathcal{R}$ has a vertex $z$ with only two neighbors $x$ and $y$, then $xy$ is an edge of $H$.
\end{center}
Suppose that $H$ is built from a bipartite graph $K=(U,V,E)$ and a set of edges $F$. Note that, since the biadjacency matrix of $K$ is minimally non-TU, the graph $K$ is connected. Suppose that $xy$ is not an edge of $H$. Then, $N_H(z)=\{x,y\}$ is not a clique. Thus $z\in V$, and $x$ and $y$ belong to $U$. Then, since $K$ is connected and $|V|\geq |U|$, either $x$ or $y$ has another neighbor $w$ in $V$. But then $w$ should be adjacent to $z$, a contradiction.
\end{proof}

Note that chosing $X=\{c,e\}$ and $Y=\{b,c,d\}$ when unfolding $v$ in Figure~\ref{fig:nonboxperfect} yields another perfect but not box-perfect graph with no graph from $\mathcal{R}$ as an induced subgraph.

\subsection{Integer Decomposition Property}

In this section, we discuss possible connections between full box-integrality and the integer decomposition property. 
This property arises in various fields such as integer programming, algebraic geometry, combinatorial commutative algebra. 
 %A recent important result on this topic is that polytopes with long edges have the integer decomposition property. 
% Beside, 
Several classes of polyhedra are known to have the integer decomposition property, as for instance: projections of polyhedra defined by totally unimodular 
 matrices~\cite{se}, polyhedra defined by nearly totally unimodular 
 matrices~\cite{gi}, certain polyhedra defined by $k$-balanced matrices~\cite{za}, the stable set polytope of claw-free $t$-perfect graphs and $h$-perfect line-graphs~\cite{be}.

\medskip

A polyhedron~$P$ has the {\em integer decomposition property}, if for any natural number~$k$ and any integer vector $x\in kP$, there exist $k$ integer vectors $x_1,\dots,x_k\in P$ with $x_1+\dots+x_k=x$. 
A stronger property is when the polyhedron~$P$ has the {\em Integer Carath\'eodory Property}, that is, if for every positive integer~$k$ and every integer vector $x\in kP$, there exist $n_1,\dots,n_t \in \mathbb{Z}{\scriptscriptstyle \geq 0}$ and affinely independent $x_1,\dots,x_t\in P\cap\mathbb{Z}^n$ such that $n_1+\dots+n_t =k$ and $x = \sum_i n_ix_i$.

\medskip 

In~\cite{gire}, Gijswijt and Regts introduce a class~$\mathcal{P}$ of polyhedra and show that they have the Integer Carath\'eodory Property. 
%This class for instance contains (poly)matroid base polytopes.  
They define~$\mathcal{P}$ to be the set of polyhedra~$P$ such that for any $k\in\mathbb{Z}_{\scriptscriptstyle \geq 0}$, $r \in\{0,\dots,k\}$, and $w\in\mathbb{Z}^n$ the intersection $rP \cap \left(w - (k -r)P\right)$ is box-integer.
They also show~\cite[Proposition~4]{gire} that every $P\in\mathcal{P}$ is box-integer. 
Given the definition of~$\mathcal{P}$, note that if a polyhedron is in~$\mathcal{P}$, then so are all its dilations. Therefore, every $P$ in $\mathcal{P}$ is fully box-integer.
By Theorem~\ref{main}, this has the following consequence.

\begin{corollary}\label{pinp}
Every $P\in\mathcal{P}$ is box-TDI.
\end{corollary}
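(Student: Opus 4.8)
The plan is to route everything through Theorem~\ref{main}, which asserts that a polyhedron is box-TDI if and only if it is principally box-integer. Recall that, by the definition given just before Observation~\ref{obs:intrans}, a polyhedron is fully box-integer exactly when it is both principally box-integer and integer; in particular, full box-integrality implies principal box-integrality. Hence it suffices to prove that every $P\in\mathcal{P}$ is fully box-integer, i.e. that $mP$ is box-integer for every $m\in\mathbb{Z}_{\scriptscriptstyle >0}$, and then invoke Theorem~\ref{main}. This is precisely the two-line argument sketched in prose before the statement, and my plan is to make its two ingredients explicit.

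The first ingredient is that $\mathcal{P}$ is closed under dilation. Fix $P\in\mathcal{P}$ and $m\in\mathbb{Z}_{\scriptscriptstyle >0}$; I would check the defining condition for $mP$ directly. Given $k\in\mathbb{Z}_{\scriptscriptstyle \geq 0}$, $r\in\{0,\dots,k\}$, and $w\in\mathbb{Z}^n$, I would use $r(mP)=(rm)P$ and $(k-r)(mP)=((k-r)m)P$ to rewrite the relevant intersection as
$$
r(mP)\cap\bigl(w-(k-r)(mP)\bigr)=(rm)P\cap\bigl(w-((k-r)m)P\bigr).
$$
The right-hand side is exactly the intersection appearing in the definition of $\mathcal{P}$ applied to $P$ itself with parameters $k'=km$ and $r'=rm$, which are admissible since $0\le rm\le km$. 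As $P\in\mathcal{P}$, this set is box-integer, so $mP$ satisfies the defining condition and therefore $mP\in\mathcal{P}$.

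The second ingredient is purely quoted: Gijswijt and Regts show that every member of $\mathcal{P}$ is box-integer~\cite[Proposition~4]{gire}. Applying this to each dilation $mP\in\mathcal{P}$ gives that $mP$ is box-integer for all $m\in\mathbb{Z}_{\scriptscriptstyle >0}$, which is the definition of $P$ being fully box-integer; in particular $P$ is principally box-integer, and Theorem~\ref{main} concludes that $P$ is box-TDI. I do not expect a genuine obstacle here—the entire argument is a reduction. The only point requiring care is the reparametrization in the dilation-closure step, namely recognizing that the $r$-th dilation of $mP$ coincides with the $(rm)$-th dilation of $P$ and that the new parameter pair $(km,rm)$ stays within the admissible range; once this substitution is in place, full box-integrality of every $P\in\mathcal{P}$, and hence box-TDIness through Theorem~\ref{main}, follows with no further computation.
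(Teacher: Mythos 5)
Your proof is correct and follows exactly the paper's own route: closure of $\mathcal{P}$ under dilation, the quoted box-integrality of members of $\mathcal{P}$ from~\cite[Proposition~4]{gire} to conclude full (hence principal) box-integrality, and then Theorem~\ref{main}. The only difference is that you spell out the reparametrization $(k',r')=(km,rm)$ that the paper leaves implicit, which is a welcome clarification.
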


The converse of Corollary~\ref{pinp} does not hold. 
We show below that polyhedra in $\mathcal{P}$ satisfy the stronger property that not only the affine hulls of their faces are principally box-integer, but also the intersection of the affine hulls of any two faces.
In terms of matrices, this is phrased as follows.

\begin{proposition}
If $P\in\mathcal{P}$, then $\aff(F)\cap\aff(G)$ has an equimodular face-defining matrix for all faces $F$ and $G$ of $P$.
\end{proposition}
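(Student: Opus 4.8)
The plan is to produce a \emph{single} fully box-integer polyhedron coming from the definition of $\mathcal{P}$ that has a face whose affine hull is a translate of $\lin(F)\cap\lin(G)$, and then to carry an equimodular face-defining matrix back to $\aff(F)\cap\aff(G)$. I assume $\aff(F)\cap\aff(G)\neq\emptyset$, as otherwise there is nothing to define. Write $P=\{x:Ax\leq b\}$, let $I$ and $J$ index the rows tight for $F$ and $G$, so that $\aff(F)=\{x:A_Ix=b_I\}$ and $\aff(G)=\{x:A_Jx=b_J\}$, and set $\Lambda=\lin(F)\cap\lin(G)=\{x:A_Ix=\mathbf 0,\ A_Jx=\mathbf 0\}$. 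For $k\in\mathbb{Z}_{\geq 0}$, $r\in\{0,\dots,k\}$, $w\in\mathbb{Z}^n$ put $Q_{k,r,w}=rP\cap(w-(k-r)P)=\{z:Aw-(k-r)b\leq Az\leq rb\}$, which is box-integer by $P\in\mathcal{P}$.

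First I would upgrade box-integrality to full box-integrality: since dilation distributes over intersection, $sQ_{k,r,w}=Q_{sk,sr,sw}$ for every $s\in\mathbb{Z}_{>0}$, and $(sk,sr,sw)$ is again of the admissible form, so every positive dilation of $Q_{k,r,w}$ is box-integer. Hence $Q_{k,r,w}$ is fully box-integer, in particular principally box-integer.

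Next I would pin down the right face. Choose rational points $x_F,x_G$ in the relative interiors of $F$ and $G$, let $D\geq 1$ clear their denominators, and take $k=2D$, $r=D$, and $w=D(x_F+x_G)\in\mathbb{Z}^n$. With $z^\star=r\,x_F$ one checks, for each row $i$, that $A_iz^\star\leq rb_i$ with equality exactly for $i\in I$, and $A_iz^\star\geq A_iw-(k-r)b_i$ with equality exactly for $i\in J$; the latter because $A_iz^\star-\bigl(A_iw-(k-r)b_i\bigr)=(k-r)(b_i-A_ix_G)$, which vanishes precisely when $i\in J$ since $x_G$ lies in the relative interior of $G$ and $k-r=D>0$. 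Thus $z^\star\in Q_{k,r,w}$ has tight rows exactly $I$ (above) and $J$ (below), so it lies in the relative interior of the face $\Phi$ they cut out, and $\aff(\Phi)=\{z:A_Iz=rb_I,\ A_Jz=A_Jw-(k-r)b_J\}$ is a translate of $\Lambda$. Since $Q_{k,r,w}$ is principally box-integer, Theorem~\ref{pbiEM} yields an equimodular face-defining matrix $M$ for $\Phi$: it is full row rank, $\aff(\Phi)=\{x:Mx=d\}$, and therefore $\{x:Mx=\mathbf 0\}=\Lambda$. As $\aff(F)\cap\aff(G)$ is a nonempty translate of $\Lambda$, it equals $\{x:Mx=Ms_0\}$ for any $s_0\in\aff(F)\cap\aff(G)$, so the same $M$ is an equimodular face-defining matrix for it.

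The one genuinely delicate point is the choice of $(k,r,w)$ guaranteeing that $\{A_Iz=rb_I,\ A_Jz=A_Jw-(k-r)b_J\}$ is really the affine hull of a \emph{face} of $Q_{k,r,w}$, with linear part exactly $\Lambda$ and no extra rows accidentally becoming tight (which would shrink the linear part below $\Lambda$). Taking $w=r\,x_F+(k-r)\,x_G$ built from relative-interior points is exactly what forces the tightness pattern to be $I$ on the upper inequalities and $J$ on the lower ones; once this is secured, the remaining steps are bookkeeping together with the full-box-integrality upgrade and the appeal to Theorem~\ref{pbiEM}.
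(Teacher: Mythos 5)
Your proof is correct and follows essentially the same route as the paper: both construct $Q=DP\cap\bigl(w-DP\bigr)$ with $w$ built from relative-interior points of $F$ and $G$, observe that $Q$ is fully box-integer, identify the face of $Q$ through $Dx_F$ whose affine hull is a translate of $\lin(F)\cap\lin(G)$, and transfer its equimodular face-defining matrix via Theorem~\ref{pbiEM}. Your treatment of the final transfer (working with the common linear part $\Lambda$ rather than the paper's $\aff(F)\cap-\aff(G)$ and sign-flip) and your explicit verification of the tightness pattern and of $sQ_{k,r,w}=Q_{sk,sr,sw}$ are slightly cleaner but not a different argument.
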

\begin{proof}
Let $F$ and $G$ be faces of $P$, and let $x_F$ and $x_G$ be rational points in their respective relative interior. 
There exists $k\in\mathbb{Z}_{\scriptscriptstyle > 0}$ such that both $kx_F$ and $kx_G$ are integer. 
Let $w=k(x_F+x_G)$, and $Q = kP\cap (w - kP)=k(P\cap (x_F+x_G-P))$. Since $P\in\mathcal{P}$, note that $r Q$ is box-integer for all $r\in\mathbb{Z}_{\scriptscriptstyle > 0}$, that is, $Q$ is fully box-integer.
By the choice of $x_F$ and $x_G$, the minimal face $H$ of $Q$ containing $kx_F$ satisfies $\aff(H)=k\left(\aff(F)\cap -(x_F+x_G+\aff(G))\right)$. 
Thus, the latter is a translation of $\aff(F)\cap -\aff(G)$. 
Since $Q$ is fully box-integer, $\aff(H)$ has an equimodular face-defining matrix by Theorem~\ref{pbiEM}, hence so has $\aff(F)\cap -\aff(G)$ by translation. 
Since $\aff(F)\cap \aff(G)$ can be described using the matrix of constraints of $\aff(F)\cap \aff(G)$ and multiplying by $-1$ the right-hand sides corresponding to $\aff(G)$, we get an equimodular face-defining matrix for  $\aff(F)\cap \aff(G)$.
\end{proof}

Fully box-integer polyhedra do not inherit the Integer Carath\'eodory Property. Actually, they do not even inherit the integer decomposition property, 
as the classical example of polytope without the integer decomposition property $P=\convexhull\left((0,0,0),(1,1,0),(1,0,1),(0,1,1)\right)$ is fully box-integer. To see that $P$ is fully box-integer, note that in the minimal linear description of $P$ given below, the matrix of constraints is totally equimodular.
 Since $P$ is also integer, this implies that $P$ is fully box-integer by Theorem~\ref{TEMPBI}. The point $(1,1,1)$ is in $2P$ and can not be writen as an integer combination of the integer points of $P$, hence $P$ does not have the integer decomposition property.

$$
P=\left\{x\in\mathbb{R}^3:
\begin{bmatrix} 
1 & -1 & -1 \\
-1 & 1 & -1 \\
-1 & -1 & 1 \\
1 & 1 & 1
\end{bmatrix}
x\leq
\begin{bmatrix} 
0 \\
0 \\
0 \\
2
\end{bmatrix}
\right\}
$$
Nevertheless, given the strong integrality properties of fully box-integer polyhedra and as the above large subclass $\mathcal{P}$ has the Integer Carath\'eodory Property, it might be that many of them have the integer decomposition property. 
In this area, a long standing open question is known as Oda's question~\cite{oda}: is it true that every smooth polytope has the integer decomposition property?
%For two neighboring vertices $u,v$ of an integer polytope of $\mathbb{R}^n$, the associated {\em primitive edge vector} is~$\frac{u-v}{\gcd(u-v)}$.
A full-dimensional polytope of $\mathbb{R}^n$ is {\em simple} if every vertex has $n$ neighbors.
A simple integer polytope is {\em smooth} if for every vertex $v$ the generators of the associated minimal tangent cone form a basis of the lattice~$\mathbb{Z}^n$.

The polyhedron of the example above is not smooth, and the following special case of Oda's question is a reasonable first step to determine which fully box-integer polyhedra have the integer decomposition property.

\begin{open}
Do smooth fully box-integer polyhedra have the integer decomposition property?
\end{open}

\subsection{Box-TDIness for Conservative Functions}

In~\cite{cogrla}, the authors prove that the standard system describing the circuit cone is box-TDI if and only if the graph is series-parallel. 
We illustrate that polarity preserves the box-TDIness of cones by providing a box-TDI system for the cone of conservative function---polar of the circuit cone.

\medskip 

Let $G=(V,E)$ be an undirected graph.
The set of edges connecting a given set of vertices and its complement is called a {\em cut}.
A cut containing no other nonempty cut is called a {\em bond}. 
A set of edges is called a {\em circuit} if it induces a connected subgraph where every vertex has degree two.
The {\em minors} of a graph are the graphs obtained by repeatedly contracting edges and deleting edges and isolated vertices.
Given $e\in E$, the graphs obtained from $G$ by respectively deleting and contracting $e$ are denoted by $G\setminus e$ and $G/e$.
A graph is {\em series-parallel} if and only if contains no $K_4$ minor~\cite{du}. 

The {\em circuit cone} $C_{circuit}(G)=\cone\{\chi^C \text{ for all circuits $C$ of $G$}\}$ is the cone generated by the incidence vectors of the circuits of $G$. 
Seymour~\cite{se1} proved that $C_{circuit}(G)=\{x\in\mathbb{R}^E: x\geq \mathbf{0},x(D\setminus e)\geq x_e \text{ for all cuts $D$ of $G$ and $e\in D$}\}$. 
A function $f:E\rightarrow \mathbb R$ is {\em conservative} if $f(C)\geq 0$ for each circuit $C$ of $G$.
These functions form the {\em cone of conservative functions} $C_{cons}(G)=\{x\in\mathbb{R}^E:x(C)\geq 0 \text{ for all circuits $C$ of $G$}\}$. 
By polarity~\cite[Corollary 29.2h]{scbig}, we have $C_{cons}(G)=-C_{circuit}(G)^\star=\cone\{\chi^e \text{ for all $e\in E$}, \chi^{D\setminus e} -\chi^e \text{ for all cuts $D$ of $G$ and $e\in D$}\}$.

\medskip 
We show that box-TDI systems describing $C_{cons}(G)$ only exist when $G$ is series-parallel. In this case, we provide such a system in the following proposition.

\begin{proposition}\label{spsystems}
The system $\frac{1}{2}x(C)\geq 0$ for all circuits $C$ of $G$ is box-TDI if and only if the graph $G$ is series-parallel.
\end{proposition}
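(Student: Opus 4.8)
The plan is to separate the statement into a polyhedral part and a system part, exactly in the spirit of Section~\ref{ss:schri}. Writing the cone $C_{cons}(G)=\{x:Ax\le\mathbf 0\}$ with the rows of $A$ being the vectors $-\tfrac12\chi^C$ (one per circuit $C$), I would first show that the \emph{polyhedron} $C_{cons}(G)$ is box-TDI if and only if $G$ is series-parallel, and then, for the series-parallel case, show that the \emph{particular} system $\tfrac12 x(C)\ge 0$ is TDI. By Theorem~\ref{tdiboxtdisystem}, a TDI system describing a box-TDI polyhedron is automatically box-TDI, so these two facts together give the ``if'' direction. The ``only if'' direction is then immediate: a box-TDI system is in particular TDI and describes a box-TDI polyhedron, so box-TDIness of the system forces $G$ to be series-parallel by the polyhedral equivalence.

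For the polyhedral part I would use polarity. Since $C_{cons}(G)=-C_{circuit}(G)^\star$ and box-integrality is clearly preserved by the map $x\mapsto -x$, applying Lemma~\ref{boxtdicone} to the cone $C_{circuit}(G)=\{x:Ax\le\mathbf 0\}$ yields
$$C_{cons}(G)\text{ box-TDI}\iff C_{circuit}(G)^\star\text{ box-TDI}\iff C_{circuit}(G)\text{ box-TDI}.$$
By the result of~\cite{cogrla} that the standard system for the circuit cone is box-TDI precisely when $G$ is series-parallel, the last condition is equivalent to $G$ being series-parallel. This settles the polyhedral equivalence, and hence the ``only if'' direction.

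It then remains to prove that, when $G$ is series-parallel, the system $\tfrac12 x(C)\ge 0$ is TDI. Being homogeneous, it is TDI if and only if its rows $-\tfrac12\chi^C$ form a Hilbert basis of the cone $-C_{circuit}(G)$ they generate, equivalently (after negating) if and only if $\{\tfrac12\chi^C:C\text{ circuit}\}$ is a Hilbert basis of $C_{circuit}(G)$. Concretely, I must show that for every integer point $v\in C_{circuit}(G)$ the vector $2v$ is a nonnegative integer combination of incidence vectors of circuits. The natural approach is to form the multigraph $H$ having $2v_e$ parallel copies of each edge $e$: every vertex of $H$ then has even degree, so by Veblen's theorem $E(H)$ decomposes into cycles, and each cycle that does not merely traverse two copies of a single edge projects to a circuit of $G$.

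The hard part will be ruling out the degenerate cycles traversing two copies of one edge $e$, which contribute $2\chi^e$ rather than a genuine circuit. This is exactly the content of the ``sums of circuits'' property: such degenerate cycles can be avoided precisely because $v$ satisfies Seymour's cut inequalities $v(D\setminus e)\ge v_e$ defining $C_{circuit}(G)$~\cite{se1}, together with the fact that series-parallel graphs have no $K_4$---hence no Petersen---minor. I would either invoke this property directly, or, to keep the argument self-contained, prove the decomposition by induction on the series-parallel structure of $G$: a single edge carries no circuit, and I would verify that the statement is preserved under parallel composition (splitting off the digon $\{e,f\}$ at a parallel pair) and under series composition (contracting a degree-two vertex), checking that the cut inequalities are maintained throughout. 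This inductive book-keeping, hand in hand with the avoidance of the degenerate cycles, is where the real work lies.
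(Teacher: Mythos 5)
Your overall architecture (split the statement into ``the polyhedron $C_{cons}(G)$ is box-TDI iff $G$ is series-parallel'' plus ``the given system is TDI when $G$ is series-parallel'', then combine via Theorem~\ref{tdiboxtdisystem}) is exactly the paper's, and your ``if'' direction matches it step for step: \cite{cogrla} gives box-TDIness of the circuit cone, Lemma~\ref{boxtdicone} transfers it to the polar $C_{cons}(G)=-C_{circuit}(G)^\star$, and TDIness of $\frac{1}{2}x(C)\geq 0$ reduces to the rows forming a Hilbert basis, which follows from the sums-of-circuits theorem for Petersen-minor-free graphs applied to $2z$ (the paper cites \cite{algozh} for precisely this; your alternative self-contained induction is only sketched and is unnecessary once that theorem is invoked).

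The gap is in the ``only if'' direction. You deduce that $C_{circuit}(G)$ is not a box-TDI \emph{polyhedron} when $G$ is not series-parallel from the statement of \cite{cogrla} that the standard \emph{system} describing the circuit cone is box-TDI precisely for series-parallel graphs. That inference is not valid as stated: a system can fail to be box-TDI either because the polyhedron is not box-TDI or merely because the system is not TDI, and Theorem~\ref{tdiboxtdisystem} only lets you pass from ``polyhedron box-TDI'' to ``system box-TDI'' for \emph{TDI} systems. So from ``the standard system is not box-TDI'' you cannot conclude that no box-TDI system describes $C_{circuit}(G)$ unless you also establish that the standard system is TDI for the graph in question, which you do not. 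The paper avoids this issue entirely: it shows that box-TDIness of $C_{cons}(G)$ is preserved under taking minors (since $C_{cons}(G\setminus e)$ and $C_{cons}(G/e)$ arise from $C_{cons}(G)\cap\{x_e=+\infty\}$ and $C_{cons}(G)\cap\{x_e=0\}$ by deleting $e$'s coordinate), and then exhibits for $C_{cons}(K_4)$ an explicit face-defining matrix---given by the three internal triangles of $K_4$---that is not equimodular, contradicting point~\ref{polyii-} of Theorem~\ref{cor:main}. You need some such direct argument, or a verification that the negative result of \cite{cogrla} is genuinely a statement about the polyhedron rather than the system, to close the ``only if'' direction.
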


\begin{proof}
We first prove that if the graph $G$ is not series-parallel, then its cone of conservative functions is not box-TDI. In this case, no system describing $C_{cons}(G)$ is box-TDI. 
For a graph $G=(V,E)$ and $e\in E$, one can see that $C_{cons}(G\setminus e)$  and $C_{cons}(G/e)$ are respectively obtained by deleting $e$'s coordinate in $C_{cons}(G)\cap \{x_e=+\infty\}$ and $C_{cons}(G)\cap \{x_e=0\}$. Hence, taking minors preserves the box-TDIness of the cone of conservative functions. It remains to prove that $C_{cons}(K_4)$ is not box-TDI. 
Let us apply Theorem~\ref{cor:main}. 

The nonequimodular matrix $M$ of Figure~\ref{figK4} is the constraint matrix obtained by considering the inequalities associated with the three circuits formed by the three internal triangles of $K_4$.
By Observation~\ref{face-def_C}, $M$ is face-defining for $C_{cons}(K_4)$ because $\mathbf{0}$ and the three conservative functions $\chi^4+\chi^5-\chi^1$, $\chi^4+\chi^6-\chi^2$ and $\chi^5+\chi^6-\chi^3$ are affinely independent, belong to $C_{cons}(K_4)$ and satisfy $Mx=\mathbf{0}$.
Therefore, by point~\ref{polyii-} of Theorem~\ref{cor:main}, the cone $C_{cons}(K_4)$ is not box-TDI.

\begin{figure}[h!]
\begin{center}
\hspace{2truecm}\begin{minipage}[c]{.3\textwidth}
\resizebox{.75\textwidth}{!}
{
\begin{tikzpicture}[node distance=1cm,bend angle=45,auto,scale=0.15]
\tikzstyle{vertex}=[draw,thick,circle,scale=0.75,minimum size=1pt]%,fill=black

\node (1) at (0,0) [vertex]{};
\node (2) at (0,10) [vertex]{};
\node (3) at (8.66,-5) [vertex]{};
\node (4) at (-8.66,-5) [vertex]{};

	\path	(1) edge[thick] node[below left]{$1$} (2)
					  edge[thick] node[above]{$3$} (3)
					  edge[thick] node[above]{$2$} (4);
						
	\path	(2) edge[thick] node[right]{$5$} (3)
					  edge[thick] node[left]{$4$} (4);
						
	\path	(3) edge[thick] node[below]{$6$} (4);
\end{tikzpicture}
}
\end{minipage}
\begin{minipage}[c]{.46\textwidth}
$
M=\left[\begin{array}{cccccc}
	1 & 1 & 0 & 1 & 0 & 0\\
	1 & 0 & 1 & 0 & 1 & 0\\
	0 & 1 & 1 & 0 & 0 & 1
\end{array}\right]
$
\end{minipage}
\end{center}
\caption{The graph $K_4$ and a face-defining matrix $M$ of $C_{cons}(K_4)$.}\label{figK4}
\end{figure}
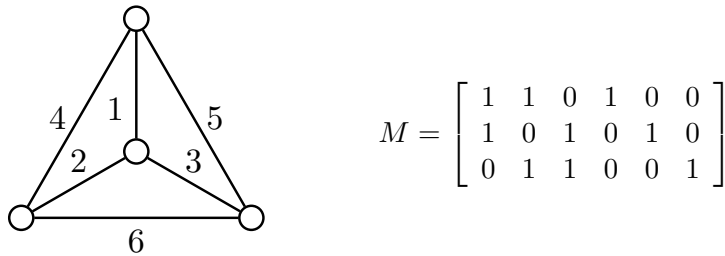

Now, suppose that $G$ is series-parallel. 
Then, \cite[Theorem~1]{cogrla} asserts that the system $x\geq\mathbf{0}$, $x(D\setminus e) \geq x(e)$ for all cuts $D$ of~$G$ and $e\in C$ is box-TDI.
Hence the circuit cone of $G$ is a box-TDI cone. 
By Lemma~\ref{boxtdicone}, $C_{cons}(G)=-C_{circuit}(G)^\star$ is box-TDI. 
By Theorem~\ref{tdiboxtdisystem}, it remains to show that the system $\frac{1}{2}x(C)\geq 0$ for all circuits $C$ of $G$ is TDI. 
\cite[Corollary~22.5a]{sc} states that a system $Ax\leq 0$ is TDI if and only if the rows of $A$ form a Hilbert basis. 
In other words, it remains to show that any integer vector $z$ in the circuit cone of $G$ is a nonnegative integer combination of vectors of~$\mathcal{H}=\{\frac{1}{2}\chi^{C}: \mbox{ $C$ is a circuit of $G$}\}$.
 \cite[Theorem~1]{algozh} asserts that, in graphs with no Petersen minors, if $p$ is an integer vector of the circuit cone such that $p(C)$ is even for all cuts $C$ of $G$, then $p$ is a sum of circuits.
Since the Petersen graph contains a $K_4$ minor, \cite[Theorem~1]{algozh} applies to~$G$. 
Since $2z$ satisfies the conditions, $2z=\sum_{C\in\mathcal{C}}\chi^C$ for some family $\mathcal{C}$ of circuits of~$G$. 
Therefore, $z=\sum_{C\in\mathcal{C}}\frac{1}{2} \chi^C$.
\end{proof}

Note that the coefficients of the system in Proposition~\ref{spsystems} are half-integral. We leave open the question of finding a box-TDI system with integer coefficients, which exists by~\cite[Theorem~22.6(i)]{sc} and Theorem~\ref{tdiboxtdisystem}.

\medskip

By planar duality, there is a correspondance between the circuits of a planar graph and the bonds of its planar dual. 
This is used in~\cite{cogrla} to obtain the box-TDIness of the standard system describing the cut cone of a series-parallel graph. 
Applying planar duality to Proposition~\ref{spsystems} provides the following: if the graph is series-parallel, then $\frac{1}{2}x(B)\geq 0$ for all bonds $B$ is a box-TDI system describing the polar of the cut cone.
This is in fact an equivalence as one can check that the box-TDIness of the corresponding cone is preserved under taking minors and that the matrix of Figure~\ref{figK4} is face-defining when $G=K_4$.

\section*{Acknowledgment}

We are grateful to Andr{\'a}s Seb{\H{o}} for his invaluable comments and suggestions.

\bibliography{../../Bibtex/bibli}
%\bibliography{bibli}
\bibliographystyle{abbrv}
\end{document}